\pgfplotsset{compat=1.13}
\newtheorem{theorem}{Theorem}
\newenvironment{manualtheorem}[1]{%
  \manualtheoreminner
}{\endmanualtheoreminner}
\newtheorem{lemma}{Lemma}
\newenvironment{manuallemma}[1]{%
  \manuallemmainner
}{\endmanuallemmainner}
\newenvironment{manualproposition}[1]{%
  \manualpropositioninner
}{\endmanualpropositioninner}
\newtheorem{proposition}{Proposition}
\theoremstyle{definition}
\newtheorem{example}{Example}
\newtheorem{definition}{Definition}
\newtheorem{corollary}{Corollary}
\theoremstyle{definition}
\newtheorem{remark}{Remark}
\theoremstyle{definition}
\DeclareMathOperator*{\argmin}{arg\,min}
\definecolor{DarkGreen}{rgb}{0.0, 0.4, 0.0}
\definecolor{DarkRed}{rgb}{0.5,0.1,0.1}
\definecolor{DarkBlue}{rgb}{0.1,0.1,0.5}
\definecolor{DarkPurple}{rgb}{0.5,0.2,0.5}
\definecolor{DarkTurquoise}{rgb}{0.1,0.5,0.5}
\newcommand{\rev}[1]{\textcolor{black}{#1}}
\newcommand{\ap}{\alpha_{\operatorname{p}}}
\newcommand{\as}{\alpha_{\operatorname{s}}}
\newcommand{\bp}{\beta_{\operatorname{p}}}
\newcommand{\bs}{\beta_{\operatorname{s}}}
\newcommand{\SET}{\operatorname{Set}}
\newcommand{\N}{\operatorname{N}}
\newcommand{\A}{\mathcal{A}}
\newcommand{\squeeze}{\operatorname{squeeze}}
\newcommand{\normal}{\operatorname{normal}}
\newcommand{\negate}{\operatorname{negate}}
\newcommand{\canonical}{\operatorname{canonical}}
\newcommand{\GASP}[1]{$\mathsf{GASP}_{#1}$}
\begin{document}
%
\title{Degree Tables for Secure \\ Distributed Matrix Multiplication}
%
%
%

\author{Rafael G. L. D'Oliveira, Salim El Rouayheb, Daniel Heinlein, and David Karpuk \\ RLE, Massachusetts Institute of Technology, USA\\ ECE, Rutgers University, USA \\ Department of Communications and Networking, Aalto University, Finland \\ AICE, F-Secure Corporation, Finland \\ Emails: rafaeld@mit.edu, salim.elrouayheb@rutgers.edu, \\ daniel.heinlein@aalto.fi, david.karpuk@f-secure.com
\thanks{
  This paper was presented in part in ``Degree Tables for Secure Distributed Matrix Multiplication,'' IEEE Information Theory Workshop (ITW), Visby, Gotland, 2019.
  
  }}

\maketitle


\begin{abstract}
We consider the problem of secure distributed matrix multiplication (SDMM) in which a user wishes to compute the product of two matrices with the assistance of honest but curious servers. We construct polynomial codes for SDMM  by  studying a recently introduced combinatorial tool called the degree table. For a fixed partitioning, minimizing the total communication cost of a polynomial code for SDMM is equivalent to minimizing $N$, the number of distinct elements in the corresponding degree table.

We propose new constructions of degree tables with a low number of distinct elements. These new constructions lead to a general family of  polynomial codes for SDMM, which we call $\mathsf{GASP}_{r}$ (Gap Additive Secure Polynomial codes) parametrized by an integer $r$. $\mathsf{GASP}_{r}$ outperforms all previously known polynomial codes for SDMM under an outer product partitioning. We also present lower bounds on $N$ and prove the optimality or asymptotic optimality of our constructions for certain regimes. Moreover,  we formulate the construction of optimal degree tables as an integer linear program and use it to prove the optimality of $\mathsf{GASP}_{r}$ for all the system parameters that we were able to test.
\end{abstract}


\section{Introduction}

We consider the problem of secure distributed matrix multiplication (SDMM): A user has two matrices, $A \in \mathbb{F}_q^{a \times b}$ and $B \in \mathbb{F}_q^{b \times c}$, and wishes to compute their product, $AB \in \mathbb{F}_q^{a \times c}$, with the assistance of $N$ servers, without leaking any information about either $A$ or $B$ to any server.  We assume that all servers are honest but curious, in that they are not malicious and will faithfully follow the pre-agreed on protocol. However, any $T$ of them may collude to try to deduce information about either $A$ or $B$. This setting was first in introduced in~\cite{ravi2018mmult} with many works following~\cite{Kakar2019OnTC,koreans,d2019gasp,DOliveira2019DegreeTF, Aliasgari2019DistributedAP,aliasgari2020private,Kakar2019UplinkDownlinkTI,Yu2020EntangledPC,mital2020secure,bitar2021adaptive, hasircioglu2021speeding}.

The original performance metric considered was the download cost~\cite{ravi2018mmult}, i.e. the total amount of data downloaded by the user from the servers, with later work considering the total communication cost\cite{Chang2019OnTU,9004505,Kakar2019UplinkDownlinkTI}. In~\cite{jia2019capacity}, the following issue was raised with respect to the SDMM setting: Is it beneficial to offload the computations if security is a concern? Indeed, computing the product $AB$ locally is both secure and has zero communication cost \rev{and is, therefore, optimal from a purely information-theoretic point of view. In \cite{doliveira2020notes}, however, it is shown that if proper coding parameters are chosen, the time needed to perform SDMM is less than for the user to perform the matrix multiplication locally. Hence, information-theoretical converses are not applicable without also taking computational aspects into account.}

For polynomial codes with a fixed matrix partitioning, all these performance metrics are equivalent to minimizing the minimum amount of servers, $N$, needed, often called the recovery threshold of the code. In this paper we consider the outer product partitioning\footnote{In Section \ref{sec: inner vs outer}, we compare the communication costs of the outer product partitioning to the inner product partitioning. We show, for example, that for square matrices, the outer product partitioning achieves a lower asymptotic total communication cost than the inner product partitioning.} given by
\begingroup
\allowdisplaybreaks
\begin{align} \label{partition1}
\begin{aligned}
A = \begin{bmatrix}A_1 \\ \vdots \\ A_K\end{bmatrix} \quad \text{and} \quad
B = \begin{bmatrix}B_1 & \cdots & B_L\end{bmatrix}, \quad
\text{so that} \quad
AB = \begin{bmatrix}
A_1B_1 & \cdots & A_1B_L \\
\vdots & \ddots & \vdots \\
A_KB_1 & \cdots & A_KB_L
\end{bmatrix},
\end{aligned}
\end{align}
\endgroup
where all products $A_kB_\ell$ are well-defined and of the same size.  Computing the product $AB$ is equivalent to computing all the subproducts $A_kB_\ell$.  We then construct a polynomial $h(x)=f(x) \cdot g(x)$, whose coefficients encode the submatrices $A_kB_\ell$, and utilize $N$ servers to compute the evaluations $h(a_1),\ldots,h(a_N)$ for certain $a_1,\ldots,a_N$. The polynomial $h$ is constructed so that no $T$-subset of evaluations reveals any information about $A$ or $B$ ($T$-security), but so that the user can reconstruct all of $AB$ given all $N$ evaluations (decodability). The partition parameters $K$ and $L$ are inversely related to the amount of computation that each of the servers will have to perform. Mathematically, it is convenient to think of the number of servers, $N$, as a function of the partitioning parameters $K$ and $L$, and the security parameter $T$. 

Consider polynomials of the following type:
\begin{align} \label{eq: f and g}
\begin{aligned}
f(x) &= \sum_{k = 1}^KA_kx^{\alpha_k} + \sum_{t = 1}^TR_tx^{\alpha_{K+t}}
\quad \text{and} \quad
g(x) &= \sum_{\ell = 1}^LB_\ell x^{\beta_\ell} + \sum_{t = 1}^TS_t x^{\beta_{L+t}},
\end{aligned}
\end{align}
where $R_t$ and $S_t$ are random matrices used to guarantee privacy. The exponents of the terms in $h(x) = f(x) \cdot g(x)$ will be given by the sum of the exponents, denoted by the vectors $\alpha$ and $\beta$, in $f(x)$ and $g(x)$.

The degree table, first introduced in~\cite{9004505}, of $h(x)$, depicted in Table~\ref{problem}, shows the exponents in $h(x)$ as a function of $\alpha$ and $\beta$. In Theorem~1 of~\cite{9004505}, it is shown that if the degree table satisfies the following conditions: (i)~the numbers in the red block are unique in the table and (ii)~numbers in the green/blue block are pairwise distinct, respectively, then there exists evaluation points such that the polynomial code in Equation~(\ref{eq: f and g}) is decodable and $T$-secure. More so, the minimum number of servers, $N$, is the number of distinct terms in the table. Thus, the main question we are interested in is how to construct a degree table, i.e. $\alpha$ and $\beta$, satisfying conditions (i) and (ii), as to minimize the number of servers, $N$.

\begin{table}
\setlength{\tabcolsep}{3pt}
\centering
\begin{tabular}{c|ccc|ccc}

 & $\beta_1$ & $\!\!\cdots\!\!$ & $\beta_L$ &  \cellcolor{blue!25} $\beta_{L\!+\!1}$ & \cellcolor{blue!25} $\!\!\cdots\!\!$ & \cellcolor{blue!25} $\beta_{L\!+\!T}$  \\
\toprule
$\alpha_1$ & \cellcolor{red!25} $\alpha_1 \!+\! \beta_1$ & \cellcolor{red!25} $\!\!\cdots\!\!$ & \cellcolor{red!25} $\alpha_1 \!+\! \beta_L$ & $\alpha_1 \!+\! \beta_{L\!+\!1}$ & $\!\!\cdots\!\!$ & $\alpha_1 \!+\! \beta_{L\!+\!T}$  \\
$\vdots$ & \cellcolor{red!25} $\vdots$ & \cellcolor{red!25} $\!\!\ddots\!\!$ & \cellcolor{red!25} $\vdots$ & $\vdots$ & $\!\!\ddots\!\!$ & $\vdots$  \\
$\alpha_K$ & \cellcolor{red!25} $\alpha_K \!+\! \beta_1$ & \cellcolor{red!25} $\!\!\cdots\!\!$ & \cellcolor{red!25} $\alpha_K \!+\! \beta_L$ & $\alpha_K \!+\! \beta_{L\!+\!1}$ & $\!\!\cdots\!\!$ & $\alpha_K \!+\! \beta_{L\!+\!T}$  \\
\midrule
\cellcolor{green!25} $\alpha_{K\!+\!1}$ &  $\alpha_{K\!+\!1} \!+\! \beta_1$ & $\!\!\cdots\!\!$ &  $\alpha_{K\!+\!1} \!+\! \beta_L$ & $\alpha_{K\!+\!1} \!+\! \beta_{L\!+\!1}$ & $\!\!\cdots\!\!$ & $\alpha_{K\!+\!1} \!+\! \beta_{L\!+\!T}$ \\
\cellcolor{green!25} $\vdots$ &  $\vdots$ & $\!\!\ddots\!\!$ &  $\vdots$ & $\vdots$ & $\!\!\ddots\!\!$ & $\vdots$ \\
\cellcolor{green!25} $\alpha_{K\!+\!T}$ &  $\alpha_{K\!+\!T} \!+\! \beta_1$ & $\!\!\cdots\!\!$ &  $\alpha_{K\!+\!T} \!+\! \beta_L$ & $\alpha_{K\!+\!T} \!+\! \beta_{L\!+\!1}$ & $\!\!\cdots\!\!$ & $\alpha_{K\!+\!T} \!+\! \beta_{L\!+\!T}$ \\
\bottomrule
\end{tabular}
\caption{The Degree Table. The $\alpha_i$'s and $\beta_i$'s are the exponents of the polynomials $f(x)$ and $g(x)$ in \eqref{eq: f and g} used to  encode  $A$ and $B$, respectively. The table entries are the  monomial  degrees in  $f(x)\cdot g(x)$. The problem is to choose the degrees $\alpha_i$'s and $\beta_i$'s to minimize the number of distinct entries in the table subject to: (i)~Decodability, the numbers in the red block must be unique in the table; (ii)~$T$-security, all numbers in the green/blue block must be pairwise distinct. \rev{In Theorem~1 of \cite{9004505} it is shown that any base field $\mathbb{F}_q$ can be extended in such a way that (i) and (ii) are guaranteed. Thus, in this paper we assume that $\mathbb{F}_q$ already has the required size.}}\label{problem}
\end{table}

\subsection{Related Work}

One  distinguishing factor of the SDMM problem is that both matrices, $A$ and $B$, must be kept secure. In the case where only one of the matrices must be kept secure, one can use methods like Shamir's secret sharing~\cite{sham79} or Staircase codes~\cite{bita17}.

For distributed computations, polynomial codes were originally introduced in~\cite{polycodes1} in a slightly different setting, namely to mitigate stragglers in distributed matrix multiplication. This was followed by a series of works,~\cite{polycodes2,pulkit,pulkit2,fundamental}.

The degree table was introduced in~\cite{9004505}. In this paper, two schemes were presented, $\mathsf{GASP}_{\text{big}}$ and $\mathsf{GASP}_{\text{small}}$. The corresponding $N$ values of $\mathsf{GASP}_{\text{big}}$ and $\mathsf{GASP}_{\text{small}}$ can be computed via formulas presented in~\cite{9004505}. We omit restating them here, since $\mathsf{GASP}_{\text{big}}$ is $\mathsf{GASP}_r$ with $r=\min\{K,T\}$ and $\mathsf{GASP}_{\text{small}}$ is $\mathsf{GASP}_r$ with $r=1$ using the newly introduced common generalization called $\mathsf{GASP}_r$ in Definition~\ref{def:GASPr}. We present a formula to compute the $N$ value of $\mathsf{GASP}_r$ in Theorem~\ref{theo:GASP_r_N}.

The study of the degree table is related to the topic of sumsets in additive combinatorics~\cite{geroldinger2009combinatorial,tao2006additive}. In that context, a problem, usually referred to as the \emph{inverse problem}, is to obtain structural information on two finite sets, $A$ and $B$, given that the cardinality of $A+B=\{a+b : a\in A , b \in B \}$ is small. In our case, the decodability condition implies that at least $KL$ integers appear only with multiplicity one in the sumset and therefore classical theorems about sumsets only yield meaningful results if the product $KL$ is small.

\rev{The literature on SDMM has also studied different variations on the model we focus on here. For instance, in \cite{nodehi2018limited,jia2019capacity,mital2020secure,akbari2021secure} the encoder and decoder are considered to be separate, in \cite{nodehi2018limited} servers are allowed to cooperate, and in \cite{kim2019private} they consider a hybrid between SDMM and private information retrieval where the user has a matrix $A$ and wants to privately multiply it with a matrix $B$ belonging to some public list. The scheme we present here can be readily used or adapted to many of these settings (e.g., \cite{bitar2021adaptive, zhu2021improved}).}

\subsection{Summary of Results}
We summarize below our main results on degree tables and their implications on the constructions of codes for SDMM:

\begin{itemize}
    \item {\bf Constructions:} We construct an improved family of degree tables based   on generalized arithmetic progressions. This leads immediately and transparently to  a new family of polynomial codes for SDMM, which we call \GASP{r} codes.  As far as we know, \GASP{r} codes  achieve the best performance (in terms of number of servers and, hence, of total communication costs) in the literature, so far, for outer product partitioning. \rev{\GASP{r} codes were first introduced in the conference version of this paper \cite{DOliveira2019DegreeTF} together with the formula, without a complete proof, for the recovery threshold.}
    
    \item {\bf Bounds:}   We show in Theorem \ref{theo:lb}  lower bounds on the number of distinct terms in a degree table. This translates into new lower bounds on the number of servers needed for   \GASP{r} codes.  We use these lower bounds   to prove optimality\footnote{The optimality is with respect to the degree table construction, i.e. it is not a general result about SDMM.} or asymptotic optimality of \GASP{r} in certain regimes. \rev{Theorem \ref{theo:lb} first appeared in the conference version of this paper \cite{DOliveira2019DegreeTF}, where only one of the cases was proved.}
    
    \item {\bf Computational approach:}   In Section \ref{sec: linear programming},  we cast the problem of constructing degree tables for any given parameters  as a binary linear program that can be solved using commercially available software such as \texttt{CPLEX}. We accomplish this by first capturing the conditions on a degree table through an infinite number of Boolean variables and constraints.  Then, we bound the value of the entries in the optimal degree table to make the number of the variables and constraints finite. We ran this optimization approach for a large set of parameters and, although we were able to find codes not in the \GASP{r} family which achieve the same performance, we were not able to find any codes that outperform \GASP{r}. 
    

\end{itemize}

\section{Main Results}\label{sec:mainresults}

We begin by introducing our main contribution, $\mathsf{GASP}_{r}$ codes.\footnote{The acronym GASP stands for \say{Gap Additive Secure Polynomial} and was introduced in \cite{d2019gasp}. We utilize the same acronym since the family of codes proposed here, \GASP{r}, is a generalization of those presented in \cite{d2019gasp}.} These codes have four parameters. The partitioning parameters, $K$ and $L$, which are the amount of pieces into which we partition the matrices $A$ and $B$; this determines the amount of work performed by each server.\footnote{Throughout this work, we assume, without loss of generality, that $L\leq K$. In the case where $K<L$ one needs only to interchange the roles of $K$ and $L$ in all the expressions.} The security parameter $T$, which determines the amount of colluding servers which the scheme can tolerate without leaking any information about $A$ or $B$. And the chain length $r$, a parameter which is intrinsic to our code construction and is chosen by the user.

\begin{definition} \label{def:GASPr}
Given the partitioning parameters $K$ and $L$, the security parameter $T$, and the chain length $r$, such that $1\leq r \leq \min\{K,T\}$, we define the polynomial code $\mathsf{GASP}_{r}$ as the polynomials in Equation~\ref{eq: f and g} with exponents $\alpha$ and $\beta$ given by
\begin{itemize}
    \item $\ap = (0,1,\ldots,K-1)$,
    \item $\as = (KL, KL+1, \ldots, KL+r-1, KL+K, KL+K+1, \ldots, KL+K+r-1, \ldots)$ of size $T$,
    \item $\bp = (0,K,\ldots,K(L-1))$, and
    \item $\bs = (KL, KL+1, \ldots, KL+T-1)$,
\end{itemize}
where $L\leq K$, $\alpha$ is the concatenation of $\ap$ and $\as$, and $\beta$ is the concatenation of $\bp$ and $\bs$. The index \say{p} refers to it being the prefix and \say{s} to it being the suffix.

\end{definition}

The following example should make Definition~\ref{def:GASPr} clearer.

\begin{example}
For $K=L=T=4$ we have four $\mathsf{GASP}_{r}$ codes, all of which have the same $\beta = (0,4,8,12,16,17,18,19)$.
\begin{itemize}
    \item For $r=1$: $\alpha = (0,1,2,3,16,20,24,28)$.
    \item For $r=2$: $\alpha = (0,1,2,3,16,17,20,21)$.
    \item For $r=3$: $\alpha = (0,1,2,3,16,17,18,20)$.
    \item For $r=4$: $\alpha = (0,1,2,3,16,17,18,19)$.
\end{itemize}
\end{example}

This family of codes, $\mathsf{GASP}_{r}$, generalizes the codes $\mathsf{GASP}_{\text{small}}$ and $\mathsf{GASP}_{\text{big}}$ first introduced in~\cite{9004505}. Indeed, $\mathsf{GASP}_{\text{small}} = \mathsf{GASP}_{1}$ and $\mathsf{GASP}_{\text{big}} = \mathsf{GASP}_{\min\{K,T\}}$. Table~\ref{tab:KLT4_example} shows the degree tables of all four $\mathsf{GASP}_{r}$ constructions for  $K=L=T=4$. In this case, the chain length which minimizes the amount of distinct terms in the degree table, and thus the recovery threshold, is $r=2$, achieving $N=36$ distinct terms.

\begin{table*}[!t]
\setlength{\tabcolsep}{0.33em}
\begin{subtable}{0.24\textwidth}
\resizebox{\columnwidth}{!}{\begin{tabular}{c|cccc|cccc}
  & 0& 4& 8&12&\cellcolor{blue!25}16&\cellcolor{blue!25}17&\cellcolor{blue!25}18&\cellcolor{blue!25}19\\
\toprule
 0& \cellcolor{red!25}0& \cellcolor{red!25}4& \cellcolor{red!25}8&\cellcolor{red!25}12&16&17&18&19\\
 1& \cellcolor{red!25}1& \cellcolor{red!25}5& \cellcolor{red!25}9&\cellcolor{red!25}13&17&18&19&20\\
 2& \cellcolor{red!25}2& \cellcolor{red!25}6&\cellcolor{red!25}10&\cellcolor{red!25}14&18&19&20&21\\
 3& \cellcolor{red!25}3& \cellcolor{red!25}7&\cellcolor{red!25}11&\cellcolor{red!25}15&19&20&21&22\\
\midrule
\cellcolor{green!25}16&\cellcolor{gray!25}16&\cellcolor{gray!25}20&24&28&32&33&34&35\\
\cellcolor{green!25}20&\cellcolor{gray!25}20&\cellcolor{gray!25}24&\cellcolor{gray!25}28&\cellcolor{gray!25}32&36&37&38&39\\
\cellcolor{green!25}24&\cellcolor{gray!25}24&\cellcolor{gray!25}28&\cellcolor{gray!25}32&\cellcolor{gray!25}36&40&41&42&43\\
\cellcolor{green!25}28&\cellcolor{gray!25}28&\cellcolor{gray!25}32&\cellcolor{gray!25}36&\cellcolor{gray!25}40&44&45&46&47\\
\bottomrule
\end{tabular}}
\caption{$r\!\!=\!1$, $S\!\!=\!14$, $N\!\!=\!41$}\label{tab:KLT4_example_r1}
\end{subtable}
\begin{subtable}{0.24\textwidth}
\resizebox{\columnwidth}{!}{\begin{tabular}{c|cccc|cccc}
  & 0& 4& 8&12&\cellcolor{blue!25}16&\cellcolor{blue!25}17&\cellcolor{blue!25}18&\cellcolor{blue!25}19\\
\toprule
 0& \cellcolor{red!25}0& \cellcolor{red!25}4& \cellcolor{red!25}8&\cellcolor{red!25}12&16&17&18&19\\
 1& \cellcolor{red!25}1& \cellcolor{red!25}5& \cellcolor{red!25}9&\cellcolor{red!25}13&17&18&19&20\\
 2& \cellcolor{red!25}2& \cellcolor{red!25}6&\cellcolor{red!25}10&\cellcolor{red!25}14&18&19&20&21\\
 3& \cellcolor{red!25}3& \cellcolor{red!25}7&\cellcolor{red!25}11&\cellcolor{red!25}15&19&20&21&22\\
\midrule
\cellcolor{green!25}16&\cellcolor{gray!25}16&\cellcolor{gray!25}20&24&28&32&33&34&35\\
\cellcolor{green!25}17&\cellcolor{gray!25}17&\cellcolor{gray!25}21&25&29&\cellcolor{gray!25}33&\cellcolor{gray!25}34&\cellcolor{gray!25}35&36\\
\cellcolor{green!25}20&\cellcolor{gray!25}20&\cellcolor{gray!25}24&\cellcolor{gray!25}28&\cellcolor{gray!25}32&\cellcolor{gray!25}36&37&38&39\\
\cellcolor{green!25}21&\cellcolor{gray!25}21&\cellcolor{gray!25}25&\cellcolor{gray!25}29&\cellcolor{gray!25}33&\cellcolor{gray!25}37&\cellcolor{gray!25}38&\cellcolor{gray!25}39&40\\
\bottomrule
\end{tabular}}
\caption{$r\!\!=\!2$, $S\!\!=\!19$, $N\!\!=\!36$}\label{tab:KLT4_example_r2}
\end{subtable}
\begin{subtable}{0.24\textwidth}
\resizebox{\columnwidth}{!}{\begin{tabular}{c|cccc|cccc}
  & 0& 4& 8&12&\cellcolor{blue!25}16&\cellcolor{blue!25}17&\cellcolor{blue!25}18&\cellcolor{blue!25}19\\
\toprule
 0& \cellcolor{red!25}0& \cellcolor{red!25}4& \cellcolor{red!25}8&\cellcolor{red!25}12&16&17&18&19\\
 1& \cellcolor{red!25}1& \cellcolor{red!25}5& \cellcolor{red!25}9&\cellcolor{red!25}13&17&18&19&20\\
 2& \cellcolor{red!25}2& \cellcolor{red!25}6&\cellcolor{red!25}10&\cellcolor{red!25}14&18&19&20&21\\
 3& \cellcolor{red!25}3& \cellcolor{red!25}7&\cellcolor{red!25}11&\cellcolor{red!25}15&19&20&21&22\\
\midrule
\cellcolor{green!25}16&\cellcolor{gray!25}16&\cellcolor{gray!25}20&24&28&32&33&34&35\\
\cellcolor{green!25}17&\cellcolor{gray!25}17&\cellcolor{gray!25}21&25&29&\cellcolor{gray!25}33&\cellcolor{gray!25}34&\cellcolor{gray!25}35&36\\
\cellcolor{green!25}18&\cellcolor{gray!25}18&\cellcolor{gray!25}22&26&30&\cellcolor{gray!25}34&\cellcolor{gray!25}35&\cellcolor{gray!25}36&37\\
\cellcolor{green!25}20&\cellcolor{gray!25}20&\cellcolor{gray!25}24&\cellcolor{gray!25}28&\cellcolor{gray!25}32&\cellcolor{gray!25}36&\cellcolor{gray!25}37&38&39\\
\bottomrule
\end{tabular}}
\caption{$r\!\!=\!3$, $S\!\!=\!18$, $N\!\!=\!37$}\label{tab:KLT4_example_r3}
\end{subtable}
\begin{subtable}{0.24\textwidth}
\resizebox{\columnwidth}{!}{\begin{tabular}{c|cccc|cccc}
  & 0& 4& 8&12&\cellcolor{blue!25}16&\cellcolor{blue!25}17&\cellcolor{blue!25}18&\cellcolor{blue!25}19\\
\toprule
 0& \cellcolor{red!25}0& \cellcolor{red!25}4& \cellcolor{red!25}8&\cellcolor{red!25}12&16&17&18&19\\
 1& \cellcolor{red!25}1& \cellcolor{red!25}5& \cellcolor{red!25}9&\cellcolor{red!25}13&17&18&19&20\\
 2& \cellcolor{red!25}2& \cellcolor{red!25}6&\cellcolor{red!25}10&\cellcolor{red!25}14&18&19&20&21\\
 3& \cellcolor{red!25}3& \cellcolor{red!25}7&\cellcolor{red!25}11&\cellcolor{red!25}15&19&20&21&22\\
\midrule
\cellcolor{green!25}16&\cellcolor{gray!25}16&\cellcolor{gray!25}20&24&28&32&33&34&35\\
\cellcolor{green!25}17&\cellcolor{gray!25}17&\cellcolor{gray!25}21&25&29&\cellcolor{gray!25}33&\cellcolor{gray!25}34&\cellcolor{gray!25}35&36\\
\cellcolor{green!25}18&\cellcolor{gray!25}18&\cellcolor{gray!25}22&26&30&\cellcolor{gray!25}34&\cellcolor{gray!25}35&\cellcolor{gray!25}36&37\\
\cellcolor{green!25}19&\cellcolor{gray!25}19&23&27&31&\cellcolor{gray!25}35&\cellcolor{gray!25}36&\cellcolor{gray!25}37&38\\
\bottomrule
\end{tabular}}
\caption{$r\!\!=\!4$, $S\!\!=\!16$, $N\!\!=\!39$}\label{tab:KLT4_example_r4}
\end{subtable}
\caption{The degree tables for $\mathsf{GASP}_r$, for all $r$, when $K=L=T=2^2$. As per Proposition~\ref{lem:GASP_r_opt_r2}, the optimal chain length is $r^*=2$, achieving $N=36$. In this case, the best known lower bound is given by Inequality~2 in Theorem~\ref{theo:lb}, which is $N\geq28$. The gray region in the lower half of a degree table consists of the terms which have already appeared before. The number of terms in the gray region is precisely the score, $S$, appearing in the proof of Theorem~\ref{theo:GASP_r_N}.}
\label{tab:KLT4_example}
\end{table*}

In the following theorem we present a closed, though quite big, expression for the number of distinct entries in the degree table of $\mathsf{GASP}_{r}$. As stated previously, this is equivalent to the minimum number of servers needed for the SDMM scheme.

\begin{restatable}{theorem}{theoremgasp} \label{theo:GASP_r_N}
The number of distinct entries in the degree table of $\mathsf{GASP}_{r}$ is given by
\begin{multline*}
  N= KL+2K+3T-2 -\max\{K,\varphi\}+(L-2)\max\{0,\min\{r,r-\varphi\}\}+ \lfloor (T-1)/r \rfloor \min\{T-1,K-r\} \\ -\mathbf{1}_{\varphi < r}\Bigg(\min\{0,\mu-r\} 
  +r(T-1-\mu)/K+\frac{-Kx^2 + (-K-2\max\{0,\varphi\}+2T-2)x+T-1-\mu}{2} \\ -\frac{T-1-\mu}{K} \cdot \frac{T-1+\mu}{2}\Bigg),
\end{multline*}
where $\varphi = T-1-KL+2K$, $\mu \equiv T-1 \pmod{K}$ with $0 \le \mu \le K-1$, and $x = \min\left\{ \frac{T-1-\mu}{K} -\mathbf{1}_{\mu=0}, L-3 \right\}$.
\end{restatable}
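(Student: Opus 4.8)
The plan is to compute $N$ as the cardinality of the sumset $A+B$, where $A=\ap\cup\as$ and $B=\bp\cup\bs$ are the two exponent sets: since the entries of the degree table in Table~\ref{problem} are exactly the pairwise sums $\alpha_i+\beta_j$, the number of distinct entries equals $|A+B|$. I would organize the count by the two horizontal halves of the table, following the score bookkeeping suggested by the examples. First I would dispose of the top half (the $K$ rows indexed by $\ap$). Because $\ap+\bp=\{0,1,\dots,K-1\}+\{0,K,\dots,K(L-1)\}=\{0,\dots,KL-1\}$ and $\ap+\bs=\{0,\dots,K-1\}+\{KL,\dots,KL+T-1\}=\{KL,\dots,KL+K+T-2\}$, the top half produces precisely the contiguous interval $[0,\,KL+K+T-2]$, i.e.\ exactly $KL+K+T-1$ distinct values.

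With the top half understood, I would define the score $S$ to be the number of cells in the $T$ bottom rows (indexed by $\as$) whose value has already occurred, reading the table top-to-bottom and left-to-right. Every bottom cell that is \emph{not} counted by $S$ is, by definition, the first occurrence of a new value, and distinct new values occur in distinct such cells; hence
\[
N \;=\; (KL+K+T-1) \;+\; \bigl(T(L+T)-S\bigr).
\]
The entire content of the theorem is thus the evaluation of $S$, and matching this identity against the stated closed form is a routine rearrangement (one checks, e.g., that both sides equal $36$ for $K=L=T=4$, $r=2$, where $S=19$ as in Table~\ref{tab:KLT4_example}).

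The crux is therefore computing $S$, for which I would exploit the generalized-arithmetic-progression structure of $\as$: every element is $KL+jK+i$ with block index $j\ge 0$ and offset $0\le i\le r-1$, the final block being truncated to length $((T-1)\bmod r)+1$. I would split the bottom half into its left part (columns $\bp$) and right part (columns $\bs$). In the left part the sum $KL+jK+i+mK$ depends only on the offset $i$ and the shifted block index $j+m$, so collisions---both with the top interval and internally---are governed by how far the shifted blocks reach; the quantity $\varphi=T-1-KL+2K$ controls how the shifted blocks interact with the top interval, which is why $\max\{K,\varphi\}$ and $(L-2)\max\{0,\min\{r,r-\varphi\}\}$ appear. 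In the right part the sums are the largest entries, and counting internal repetitions across the $\lfloor(T-1)/r\rfloor$ full blocks against the column width yields the term $\lfloor(T-1)/r\rfloor\min\{T-1,K-r\}$, while the truncation of the last block together with the residue $\mu\equiv T-1\pmod K$ produces the correction gated by $\mathbf{1}_{\varphi<r}$, the quadratic piece in $x$ being a telescoped sum over blocks.

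The main obstacle will be the case analysis inside $S$: one must separate whether the final $\as$-block is full or truncated, whether $\varphi\ge r$ or $\varphi<r$ (equivalently, whether the left-part sums spill out of the top interval or remain absorbed by it), and one must ensure that collisions between the left and right bottom parts, and within each part, are counted exactly once. The quadratic summand with $x=\min\{(T-1-\mu)/K-\mathbf{1}_{\mu=0},\,L-3\}$ is the most delicate and error-prone piece; I would pin down its boundaries and validate the whole count against the explicit degree tables in Table~\ref{tab:KLT4_example} before simplifying to the stated expression.
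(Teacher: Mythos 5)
Your framework is exactly the paper's: the identity $N=(KL+K+T-1)+\bigl(T(L+T)-S\bigr)$, the observation that the top half fills the interval $[0,KL+K+T-2]$, and the split of each bottom row into a left part (columns $\bp$) and a right part (columns $\bs$) all match Equation~\eqref{eq_NS} and Definition~\ref{def:left right score}. But what you have written is a plan, not a proof: the entire content of the theorem is the exact evaluation of $S$, and that evaluation is precisely the part you defer. The paper does it in two steps. First, Lemma~\ref{lem:GASP_r_LiRi} derives closed forms for the per-row scores, namely $|L_i|=\min\{L,2+\lfloor(T-1-i)/K\rfloor\}$ for $1\le i\le r$ and $|L_i|=L$ for $i>r$, and $|R_i|$ equal to $\max\{0,K+T-KL-1\}$, $\max\{0,T-K+r-1\}$, or $T-1$ according to whether $i=1$, $i\equiv 1\pmod r$ with $i\ge 2$, or $i\not\equiv 1\pmod r$; each case is an explicit intersection computation using the arithmetic-progression structure of $\bp$, $\bs$, and $\as$. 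Second, the proof of the theorem sums these, which is where $\varphi$, $\mu$, and $x$ actually enter: the threshold $i\le\varphi$ is shown to be equivalent to $L\le 2+\lfloor(T-1-i)/K\rfloor$, the sum $\sum_{i=T-1-r}^{T-1-z}\lfloor i/K\rfloor$ is evaluated by writing the endpoints as $aK+b$ and $xK+y$ (producing the quadratic in $x$), and the count $\eta=\lfloor(T-1)/r\rfloor$ of indices with $i\equiv 1\pmod r$ produces the term $\lfloor(T-1)/r\rfloor\min\{T-1,K-r\}$.

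None of this is routine rearrangement, and your qualitative attributions (``which is why $\max\{K,\varphi\}$ appears'', ``the quadratic piece being a telescoped sum'') are not derivations. Two concrete gaps you would hit: (i) your reading convention ``top-to-bottom and left-to-right'' silently requires that within a single bottom row the left part and right part never introduce the same new value twice; the paper justifies this by noting that $a+[K-1]+\SET(\bp)$ and $a+[K-1]+\SET(\bs)$ are disjoint for any $K$ consecutive integers, and without that observation your count $T(L+T)-S$ could overcount. (ii) The right-score computation splits into three cases depending on the position of $i$ within an $\as$-block (first row overall, first row of a later block, interior row of a block); your sketch conflates these into ``counting internal repetitions across the full blocks,'' which does not by itself produce $\max\{0,K+T-KL-1\}$ for $i=1$ versus $\max\{0,T-K+r-1\}$ for later block starts. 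Checking the formula against $K=L=T=4$ validates nothing beyond that single instance. To complete the proof you must state and prove the per-row score lemma and then carry out the summation with the substitutions for $a$, $b$, $x$, $y$ in terms of $\mu$ and $\varphi$.
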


The key to proving Theorem~\ref{theo:GASP_r_N} is to determine a parameter $S$ which we call the score of the chain length $r$. The score represents the number of repeated terms in the lower half of the degree table; in Table~\ref{tab:KLT4_example} it is the number of terms in the gray region. After determining $S$, the number of distinct terms in the degree table is given by the sum of the number of distinct terms in the upper half of the degree table, $KL+K+T-1$, and the number of distinct terms in the lower half of the degree table, $T(L+T)-S$. Hence, $N = KL+K+T-1+T(L+T)-S$. We show how to find a closed expression for $S$ in Lemma~\ref{lem:GASP_r_LiRi}.

For a fixed set of parameters, $K$, $L$, and $T$, there are $\min\{K,T\}$ possible values for the chain length $r$. We are particularly interested in the chain length with the best performance, i.e. which minimizes the number of distinct terms in the degree table. 

\begin{definition} \label{def: optimal chain length}
Let $K$ and $L$ be the partitioning parameters, $T$ be the security parameter, and $N(r)$, the number of distinct terms in the degree table constructed by $\mathsf{GASP}_r$. The optimal chain length is defined as
$r^{*} = \argmin\limits_{r  \in \{1,\ldots,\min\{K,T\}\}} N(r)$.
\end{definition}

In Figure~\ref{fig: main results a}, we show the performance of all four $\mathsf{GASP}_r$ schemes for $K=L=4$. As shown in the figure, all possible chain lengths can be optimal depending on the security parameter $T$. The optimal chain length can be found by comparing all the $N(r)$ via Theorem~\ref{theo:GASP_r_N}. In Section~\ref{sec: optimal chain length}, we show that this search space can be reduced and that for certain cases, $r^{*}$ has a simple expression. One example of this is given in the following proposition.

\begin{restatable}{proposition}{lemmansquared} \label{lem:GASP_r_opt_r2}
For $K=L=T=n^2$ the optimal chain length is given by $r^{*} = n$, with 
\begin{align} \label{eq:nsquare}
N = \left\{\begin{matrix}
3 & \text{if} \quad n=1 \\ 
n^4+2n^3+2n^2-n-2 & \text{if} \quad n \geq 2
\end{matrix}\right. .
\end{align}
\end{restatable}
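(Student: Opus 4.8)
The plan is to reduce Theorem~\ref{theo:GASP_r_N} to a clean one-variable function $N(r)$ by specializing to $K=L=T=n^2$, and then to minimize it over the admissible range $r\in\{1,\dots,n^2\}$.

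First I would substitute $K=L=T=n^2$ into the closed form of Theorem~\ref{theo:GASP_r_N} and collapse all the auxiliary quantities. For $n\ge 2$ one has $\varphi = T-1-KL+2K = -n^4+3n^2-1 < 0$, so $\max\{0,\varphi\}=0$, $\max\{K,\varphi\}=n^2$, and $\mathbf{1}_{\varphi<r}=1$ for every admissible $r$. Moreover $\mu\equiv T-1\pmod K$ forces $\mu=n^2-1$, whence $T-1-\mu=0$ and $x=0$. Feeding these back, every term carrying a factor $T-1-\mu$ or $x$ vanishes, and the formula should collapse to
\[
  N(r)=n^4+4n^2-2+(n^2-2)r+\left\lfloor\tfrac{n^2-1}{r}\right\rfloor(n^2-r)-\min\{0,n^2-1-r\}.
\]
For $1\le r\le n^2-1$ the last term is zero, while $r=n^2$ yields a strictly larger value than $r=n$ and can be discarded at the end; so it suffices to minimize $f(r):=(n^2-2)r+\lfloor (n^2-1)/r\rfloor(n^2-r)$ over $1\le r\le n^2-1$.

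The core of the argument is a convex relaxation that is exact precisely at the divisors of $n^2$. Writing $n^2-1=qr+s$ with $0\le s\le r-1$ gives $\lfloor (n^2-1)/r\rfloor=q\ge (n^2-r)/r$, with equality iff $r\mid n^2$. Since $n^2-r>0$ on the relevant range, this yields $f(r)\ge (n^2-2)r+\tfrac{(n^2-r)^2}{r}=(n^2-1)r+\tfrac{n^4}{r}-2n^2=:\phi(r)$, and crucially $f(n)=\phi(n)$ because $n\mid n^2$. The function $\phi$ is convex on $(0,\infty)$ with continuous minimizer $r^{\circ}=n^2/\sqrt{n^2-1}$, and a short estimate shows $n<r^{\circ}<n+1$. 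Hence $\phi$ is decreasing on $(0,n]$ and increasing on $[n+1,\infty)$, so among integers the minimum of $\phi$ lies at $n$ or $n+1$; computing $\phi(n+1)-\phi(n)=n-2+\tfrac{1}{n+1}>0$ for $n\ge 2$ settles it at $r=n$. Chaining $f(r)\ge\phi(r)\ge\phi(n)=f(n)$ then gives $N(r)\ge N(n)$ with equality only at $r=n$, so $r^{*}=n$; evaluating $f(n)=(n^2-2)n+(n-1)(n^2-n)=2n^3-2n^2-n$ produces $N(n)=n^4+2n^3+2n^2-n-2$. The case $n=1$ is handled directly: the only admissible chain length is $r=1$, and the degree table has the three distinct entries $\{0,1,2\}$, giving $N=3$.

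I expect the main obstacle to be the first step, namely the disciplined bookkeeping needed to verify that the monstrous expression of Theorem~\ref{theo:GASP_r_N} really does collapse to the displayed $f(r)$, since one must correctly resolve each $\max$, $\min$, floor, and indicator under the substitution. The only genuinely delicate point in the optimization is that the relaxation $f\ge\phi$ is tight \emph{only} at divisors of $n^2$, so one cannot merely minimize $\phi$ over the reals (its minimizer sits strictly between $n$ and $n+1$); the argument must exploit $n\mid n^2$ to make the bound exact at $r=n$, and must check the discrete comparison $\phi(n)\le\phi(n+1)$ to rule out $r=n+1$.
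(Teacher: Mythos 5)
Your proposal is correct, and it follows the paper's proof almost step for step in its first half: the same specialization $\varphi=-n^4+3n^2-1<0$, $\mu=n^2-1$, $x=0$ collapsing Theorem~\ref{theo:GASP_r_N} to $N(r)=n^4+4n^2-2+(n^2-2)r+\lfloor(n^2-1)/r\rfloor(n^2-r)$ for $r\le n^2-1$, the same separate dismissal of $r=n^2$, and the same handling of $n=1$. The two arguments diverge only in how the floor term is relaxed and the resulting rational function is minimized. The paper uses the strict bound $\lfloor x\rfloor>x-1$ to get $N(r)>n^4+4n^2-2+(n^2-2)r+(n^2-1-r)(n^2-r)/r$ and then shows by factoring the quadratic $(n+1-r)\bigl(n^2-(n+1)r\bigr)$ that this right-hand side already dominates $N(n)$ whenever $r\le n-1$ or $r\ge n+1$. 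You instead use the slightly sharper non-strict bound $\lfloor(n^2-1)/r\rfloor\ge(n^2-r)/r$, which is exact precisely when $r\mid n^2$, so that the relaxation $\phi(r)=(n^2-1)r+n^4/r-2n^2$ satisfies $f(n)=\phi(n)$; convexity of $\phi$, the location of its continuous minimizer in $(n,n+1)$, and the check $\phi(n+1)-\phi(n)=n-2+\tfrac{1}{n+1}>0$ then give $f(r)\ge\phi(r)\ge\phi(n)=f(n)$ with equality only at $r=n$. Your route buys a more conceptual explanation of \emph{why} $r=n$ wins (it is the divisor of $n^2$ nearest the unconstrained optimum $n^2/\sqrt{n^2-1}$, where the relaxation is tight), at the cost of needing the divisibility/tightness observation; the paper's route is a more mechanical sign analysis of a factored quadratic. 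Both are complete and yield the same strict uniqueness of the minimizer.
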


In Section~\ref{sec: lower bounds n} we show the following lower bounds for the degree table.

\begin{theorem}\label{theo:lb}
Let $K$ and $L$ be the partitioning parameters, $T$ the security parameter, and $N$ the number of distinct terms in a degree table. Then the following three inequalities hold.
\begin{enumerate}
\item $KL+\max\{K,L\}+2T-1 \le N$.
\item If $3\max\{K,L\}+3T-2 < KL$ or $2 \le K = L$, then $KL+\max\{K,L\}+2T \le N$.
\item $KL+K+L+2T-1-T\min\{K,L,T\} \le N$.
\end{enumerate}
\end{theorem}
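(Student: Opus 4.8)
The plan is to recast the whole problem in the language of sumsets and then, for each inequality, isolate a sub-collection of table entries whose pairwise distinctness can be certified directly from conditions (i)--(ii). Write the prefix/suffix exponent sets as $A_{\mathrm p}=\{\alpha_1,\dots,\alpha_K\}$, $A_{\mathrm s}=\{\alpha_{K+1},\dots,\alpha_{K+T}\}$, $B_{\mathrm p}=\{\beta_1,\dots,\beta_L\}$, $B_{\mathrm s}=\{\beta_{L+1},\dots,\beta_{L+T}\}$, so that the value set of the table is $(A_{\mathrm p}\cup A_{\mathrm s})+(B_{\mathrm p}\cup B_{\mathrm s})$ and $N$ is its cardinality; throughout I assume $L\le K$ as elsewhere in the paper. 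First I would record two consequences of (i)--(ii). Two equal prefix exponents $\alpha_i=\alpha_{i'}$ would create equal red entries $\alpha_i+\beta_1=\alpha_{i'}+\beta_1$, and a prefix exponent equal to a suffix exponent would make a red entry collide with a green/blue entry; hence all $\alpha$'s are distinct and all $\beta$'s are distinct, so $|A_{\mathrm p}\cup A_{\mathrm s}|=K+T$ and $|B_{\mathrm p}\cup B_{\mathrm s}|=L+T$. Moreover the uniqueness in (i) says exactly that the red value set $A_{\mathrm p}+B_{\mathrm p}$ has $KL$ elements and is disjoint from the set of all non-red values. Setting $\mathcal X=(A_{\mathrm p}\cup A_{\mathrm s})+B_{\mathrm s}$ (the blue columns) and $\mathcal Y=A_{\mathrm s}+(B_{\mathrm p}\cup B_{\mathrm s})$ (the green rows), this gives the exact decomposition
\[
N \;=\; KL \;+\; \bigl|\,\mathcal X\cup\mathcal Y\,\bigr|,
\]
which is the engine for all three bounds.

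For inequality~1 I would bound the union below by one of its parts. The set $\mathcal X$ is a sumset of a set of size $K+T$ and a set of size $T$, so the elementary sorting bound $|X+Y|\ge|X|+|Y|-1$ for finite sets of integers gives $|\mathcal X|\ge K+2T-1$, and symmetrically $|\mathcal Y|\ge L+2T-1$. Since $\mathcal X\cup\mathcal Y\supseteq\mathcal X,\mathcal Y$, we get $N\ge KL+\max\{|\mathcal X|,|\mathcal Y|\}\ge KL+\max\{K,L\}+2T-1$.

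For inequality~2 I would squeeze one more element out of the sorting bound. If $|\mathcal X|\ge K+2T$ we are done, so assume $|\mathcal X|=K+2T-1$; then (for $T\ge2$; the case $T=1$ is handled separately) the equality characterisation forces $A_{\mathrm p}\cup A_{\mathrm s}$ and $B_{\mathrm s}$ to be arithmetic progressions with a common difference $d$, and $\mathcal X$ to be a single progression of length $K+2T-1$. If some element of $A_{\mathrm s}+B_{\mathrm p}$ lies outside $\mathcal X$ then $|\mathcal X\cup\mathcal Y|\ge K+2T$ and we are done; otherwise $A_{\mathrm s}+B_{\mathrm p}\subseteq\mathcal X$, which pins $B_{\mathrm p}$ inside a length-$(K+2T-1)$ progression of difference $d$ and hence confines the red block $A_{\mathrm p}+B_{\mathrm p}$ to a progression of difference $d$ with at most $2K+3T-2$ terms. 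Its $KL$ distinct elements then force $KL\le 2K+3T-2$, contradicting the hypothesis $3\max\{K,L\}+3T-2<KL$. The case $2\le K=L$ I would treat by running the same argument in both orientations and exploiting the resulting symmetric progression structure of $A$ and $B$ to reach the same contradiction for small square parameters.

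For inequality~3 I would keep both parts of the union but arrange the inclusion--exclusion so that the potentially large block $A_{\mathrm s}+B_{\mathrm s}$ (which sits inside $\mathcal X$) is never charged as an overlap. Writing $\mathcal X\cup\mathcal Y=\mathcal X\cup(A_{\mathrm s}+B_{\mathrm p})$,
\[
|\mathcal X\cup\mathcal Y|\ge|\mathcal X|+|A_{\mathrm s}+B_{\mathrm p}|-\bigl|(A_{\mathrm s}+B_{\mathrm p})\cap\mathcal X\bigr|\ge(K+2T-1)+(L+T-1)-\bigl|(A_{\mathrm s}+B_{\mathrm p})\cap\mathcal X\bigr|,
\]
so it suffices to prove $|(A_{\mathrm s}+B_{\mathrm p})\cap\mathcal X|\le T-1+T\min\{L,T\}$, since then $N\ge KL+K+L+2T-1-T\min\{K,L,T\}$. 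When $L\le T$ this is immediate from $|(A_{\mathrm s}+B_{\mathrm p})\cap\mathcal X|\le|A_{\mathrm s}+B_{\mathrm p}|\le TL=T\min\{L,T\}$. The hard case is $T<L$, where pure cardinality is too weak and one must use the Sidon-type structure of the red block: covering $\mathcal X=\bigcup_{\beta\in B_{\mathrm s}}((A_{\mathrm p}\cup A_{\mathrm s})+\beta)$ by $T$ translates, a value in the intersection written as $a_{\mathrm s}+b_{\mathrm p}=a+\beta$ leads, via pigeonhole over the $T$ suffix columns, to a coincidence $b_{\mathrm p}+a'=b_{\mathrm p}'+a$ among prefix exponents, which condition~(i) forbids when $a,a'\in A_{\mathrm p}$; this caps the prefix-witnessed contributions at $\approx T$ per row and the total at order $T^2$. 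I expect this final step---turning the red-block uniqueness into the exact ceiling $T-1+T\min\{L,T\}$ on the cross-overlap when $T<L$, including the case where the witnesses fall in $A_{\mathrm s}$---to be the main obstacle, as it is the only place where the proof must leave the realm of cardinalities and exploit the genuine additive structure forced by~(i).
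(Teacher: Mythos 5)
Your decomposition $N=KL+|\mathcal X\cup\mathcal Y|$ and your proof of Inequality~1 are exactly the paper's argument (Lemma~\ref{lem:sumsetbound} applied to $\SET(\ap|\as)+\SET(\bs)$ and to $\SET(\as)+\SET(\bp|\bs)$, with the red block split off by decodability). Your Inequality~2 also follows the paper's route through the equality case of the sumset lemma: tightness forces $A_{\mathrm p}\cup A_{\mathrm s}$ and $B_{\mathrm s}$ to be arithmetic progressions with a common difference $d$, the containment $A_{\mathrm s}+B_{\mathrm p}\subseteq\mathcal X$ traps $B_{\mathrm p}$ in a $d$-progression, and the $KL$ distinct red entries overflow it; your count $KL\le 2K+3T-2$ is in fact slightly sharper than the paper's $KL\le 3K+3T-2$ and still contradicts the hypothesis. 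The two loose ends there are real but minor: the $T=1$ case (where $|B_{\mathrm s}|=1$ and the equality characterisation gives no progression structure) is flagged but not done, and the $2\le K=L$ case is only gestured at --- the paper closes it by running the argument in both orientations so that $\SET(\alpha)+\SET(\beta)$ becomes a single progression of length $2K+2T-1$, whence tightness forces $K(K-1)=0$, contradicting $K\ge 2$.

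The genuine gap is in Inequality~3. By charging all $T$ green rows at once in the inclusion--exclusion, you must prove $|(A_{\mathrm s}+B_{\mathrm p})\cap\mathcal X|\le T-1+T\min\{L,T\}$, and in the hard case $T<L$ nothing you write delivers this --- as you yourself suspect. Your pigeonhole/Sidon argument (this is the paper's Lemma~\ref{lem:row_col_intersection_size}) controls only the collisions $a_{\mathrm s}+b_{\mathrm p}=a+\beta$ witnessed by $a\in A_{\mathrm p}$: at most one per pair of a green row and a blue column, hence at most $T^2$ in total. The collisions witnessed by $a\in A_{\mathrm s}$, i.e.\ the set $(A_{\mathrm s}+B_{\mathrm p})\cap(A_{\mathrm s}+B_{\mathrm s})$, are constrained by nothing you have established beyond $|A_{\mathrm s}+B_{\mathrm s}|\le T^2$, so your overlap could a priori be as large as $2T^2$, exceeding the budget $T^2+T-1$ for every $T\ge 2$. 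The fix is to not use all $T$ rows. The paper intersects only the single row $\alpha_{K+1}+\SET(\bp)$ with $\SET(\alpha)+\SET(\bs)$, column by column: the $A_{\mathrm p}$-witnessed part of column $\beta_i$ contributes $z_i\in\{0,1\}$ by Lemma~\ref{lem:row_col_intersection_size}, and the $A_{\mathrm s}$-witnessed part contributes at most $\min\{T-1,\,L-z_i\}$, using $|\SET(\as)+\beta_i|=T$ together with $\alpha_{K+1}+\beta_i\notin\alpha_{K+1}+\SET(\bp)$ on one side and $|\alpha_{K+1}+\SET(\bp)|=L$ with one slot already consumed when $z_i=1$ on the other. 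Summing gives $\sum_i\min\{T-1+z_i,L\}\le T\min\{T,L\}$, which yields the same net gain $L-T\min\{T,L\}$ that your accounting aims for, without the unproven multi-row overlap bound.
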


No bound in Theorem~\ref{theo:lb} is strictly larger than the other ones. Indeed, Inequality~3 is stronger than Inequality 1 if and only if $T^2 < \min\{K,L\}$.
Inequality~2 is always stronger than Inequality~1 by one if its condition is met and hence Inequality~3 is stronger than Inequality~2 if and only if its condition is met and $T^2+1 < \min\{K,L\}$.

By comparing the bounds in Theorem~\ref{theo:lb} to the number of distinct terms in $\mathsf{GASP}_r$, counted via Theorem~\ref{theo:GASP_r_N}, we show that $\mathsf{GASP}_{r^{*}}$ is optimal whenever one of the three parameters is equal to $1$, i.e. $K=1$, $L=1$ or $T=1$. We can also show that in the setting of Proposition~\ref{lem:GASP_r_opt_r2}, $\mathsf{GASP}_{r^{*}}$ is asymptotically optimal.

\begin{figure*}[!t]
  \begin{subfigure}[b]{0.48\linewidth}
\begin{gnuplot}[terminal=epslatex, terminaloptions={size 9cm,7cm}]
set xlabel '$T=$ Security Level' offset 0,0.5
set xrange [1:10]
set ylabel '$N=$ Number of Servers' offset 1.5,0
set yrange [20:50]
set key at 10, 31
plot\
    'gasp1.txt' with linespoints title '$\mathsf{GASP}_1$' linewidth 5,\
    'gasp2.txt' with linespoints title '$\mathsf{GASP}_2$' linewidth 5,\
    'gasp3.txt' with linespoints title '$\mathsf{GASP}_3$' linewidth 5,\
    'gasp4.txt' with linespoints title '$\mathsf{GASP}_4$' linewidth 5,\
    (2*x+19) title 'lower bound' linewidth 5,\
\end{gnuplot}
\caption{}
\label{fig: main results a}
  \end{subfigure} \hspace{0.13in}
  \begin{subfigure}[b]{0.48\linewidth}
\begin{gnuplot}[terminal=epslatex, terminaloptions={size 9cm,7cm}]
set xlabel '$n$' offset 0,0.5
set ylabel 'Normalized Number of Servers' offset 1.5,0
set yrange [0.95:2.05]
set key at 29, 1.8
plot\
    'gnuplotdata.txt' using 1:2 title '$\mathsf{GASP}_{\text{small}}$' with linespoints linewidth 5,\
    (x**4+2*x**3+2*x**2-x-2)/(x**4+3*x**2) title '$\mathsf{GASP}_{r^{*}}$' linewidth 5,\
    'gnuplotdata.txt' using 1:4 title '$\mathsf{GASP}_{\text{big}}$' with linespoints linewidth 5,\
    1 title 'lower bound' linewidth 5
\end{gnuplot}
\caption{}
\label{fig: main results b}
  \end{subfigure} 
  \caption{In the figure on the left we present a plot of $\mathsf{GASP}_{r}$ for $r=1,2,3,4$ where $K=L=4$. The lower bound is Inequality 1 in Theorem~\ref{theo:lb}. In the figure on the right we perform a comparison between $\mathsf{GASP}_{\text{small}}$, $\mathsf{GASP}_{r}$ with $r=r^{*}=n$ due to Proposition~\ref{lem:GASP_r_opt_r2}, and $\mathsf{GASP}_{\text{big}}$ for $K=L=T=n^2$. The term ``lower bound'' refers to the left hand side of Inequality~2 in Theorem~\ref{theo:lb} and the number of servers, $N$, is normalized by this lower bound.}
  \label{fig: main results} 
\end{figure*}

\begin{corollary}\label{cor:asympt optimal}
Let $K=L=T=n^2 \ge 4$. Then, $N\geq n^4+3n^2$. Moreover, $\mathsf{GASP}_{r}$ for $r=n$ is asymptotically optimal and within 38\% of the lower bound.
\end{corollary}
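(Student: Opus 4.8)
The plan is to reduce the statement to a one-variable ratio estimate by combining the lower bound of Theorem~\ref{theo:lb} with the exact count of Proposition~\ref{lem:GASP_r_opt_r2}. First I would prove the lower bound $N \geq n^4 + 3n^2$. Since here $K = L = T = n^2$ and $n^2 \geq 4 \geq 2$, the hypothesis $2 \leq K = L$ of Inequality~2 in Theorem~\ref{theo:lb} holds, so that inequality applies and yields $N \geq KL + \max\{K,L\} + 2T = n^4 + n^2 + 2n^2 = n^4 + 3n^2$, which is the first claim and also the benchmark against which the optimality statements are measured.

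Next I would read off the performance of $\mathsf{GASP}_n$. By Proposition~\ref{lem:GASP_r_opt_r2}, the optimal chain length is $r^{*} = n$ and the corresponding count (using the notation $N(r)$ of Definition~\ref{def: optimal chain length}) is $N(n) = n^4 + 2n^3 + 2n^2 - n - 2$ for $n \geq 2$. Writing $N^{*}$ for the true minimum number of distinct terms over all degree tables, the two facts above sandwich it as $n^4 + 3n^2 \leq N^{*} \leq N(n)$. Asymptotic optimality of $\mathsf{GASP}_n$ then follows from the squeeze
\[
1 \leq \frac{N(n)}{N^{*}} \leq \frac{N(n)}{n^4 + 3n^2} = \frac{n^4 + 2n^3 + 2n^2 - n - 2}{n^4 + 3n^2} \to 1 \quad \text{as } n \to \infty,
\]
since the numerator and denominator share the leading term $n^4$.

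For the explicit $38\%$ guarantee I would analyze $f(n) = (n^4 + 2n^3 + 2n^2 - n - 2)/(n^4 + 3n^2)$ directly and show $f(n) \leq 1.38$ for every integer $n \geq 2$. Writing $f(n) = 1 + (2n^3 - n^2 - n - 2)/(n^4 + 3n^2)$ and using the crude bounds $2n^3 - n^2 - n - 2 < 2n^3$ and $n^4 + 3n^2 > n^4$, I obtain the clean tail estimate $f(n) < 1 + 2/n$, which already lies below $1.38$ once $n \geq 6$. The finitely many remaining cases $n \in \{2,3,4,5\}$ are then checked by hand, where the ratio peaks at $n = 3$ with $f(3) = 148/108 = 1.3703\ldots < 1.38$.

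The only real subtlety is in this last step: because $f$ is not monotone in $n$ (it rises from $n=2$ to its maximum at $n=3$ before decaying like $2/n$), one cannot simply evaluate at an endpoint, and must instead combine the large-$n$ tail bound with the explicit small-case checks to certify that the global maximum over $n \geq 2$ is $f(3) < 1.38$. Everything else is a transparent substitution into the previously established Theorem~\ref{theo:lb} and Proposition~\ref{lem:GASP_r_opt_r2}.
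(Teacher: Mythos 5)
Your proof is correct and follows essentially the same route as the paper: Inequality~2 of Theorem~\ref{theo:lb} gives the lower bound $n^4+3n^2$, Proposition~\ref{lem:GASP_r_opt_r2} gives $N(n)=n^4+2n^3+2n^2-n-2$, and the ratio tends to $1$. Your treatment of the $38\%$ claim is in fact more careful than the paper's, whose displayed estimate $1+2n^{-1}+2n^{-2}$ exceeds $1.38$ for small $n$ and so does not by itself certify the constant; your tail bound $1+2/n$ for $n\ge 6$ combined with the explicit checks at $n\in\{2,3,4,5\}$ (peak $148/108$ at $n=3$) closes that gap.
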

\begin{proof}
Inequality~(2) in Theorem~\ref{theo:lb} is $KL+\max\{K,L\}+2T=n^4+3n^2$ and the fraction of the size of the degree table constructed by $\mathsf{GASP}_{r}$, cf. Proposition~\ref{lem:GASP_r_opt_r2}, divided by the left hand side of Inequality~(2) in Theorem~\ref{theo:lb} is
\begin{align}
\begin{aligned}
&\frac{n^4+2n^3+2n^2-n-2}{n^4+3n^2} \le \frac{n^4+2n^3+2n^2}{n^4} = 1+2n^{-1}+2n^{-2} \in 1+\Theta(n^{-1}),
\end{aligned}
\label{eq:ratio}
\end{align}
i.e. the left hand side is asymptotically optimal and its maximum is $<1.38$ at $n \approx 3$.
\end{proof}


In Figure~\ref{fig: main results b}, we show the performance of $\mathsf{GASP}_{\text{small}}$, $\mathsf{GASP}_{\text{big}}$, and $\mathsf{GASP}_{r^*}$ normalized by the best lower bound in Theorem~\ref{theo:lb} for the case where $K=L=T=n^2$. As we can see, both $\mathsf{GASP}_{\text{small}}$ and $\mathsf{GASP}_{\text{big}}$ converge to around two times the lower bound, while $\mathsf{GASP}_{r^*}$ is asymptotically optimal, as stated in Corollary~\ref{cor:asympt optimal}.

In Section~\ref{sec: linear programming}, we present a binary linear program which, given the partitioning parameters $K$ and $L$, and the security parameter $T$, can compute the minimum number of distinct terms $N$ for any degree table. However, the number of variables and constraints are, in general, infinite. The problem can be made finite if upper bounds for the entries in the degree table are given, as we do in Sections \ref{sec: linear bounds for entries} and \ref{sec: operational bounds for entries}. Thus, finding the best practical degree table is a finite problem. We have performed computational searches for degree tables, and although we have found different degree tables which match the performance of $\mathsf{GASP}_{r^*}$, we have not found any degree table which outperforms it.

\section{Notation}

The main symbols used throughout this paper are shown in Table~\ref{tab: definitions}. 

\begin{table}[H] 
\centering
\begin{tabular}{p{1cm} p{7.5cm} } 
Symbol & Definition \\
\toprule
$N$  & Recovery threshold or number of distinct terms in the degree table.  \\[3pt]
$K$  & Partitioning parameter for $A$. \\[3pt]
$L$  & Partitioning parameter for $B$. \\[3pt]
$T$  & Security parameter, i.e. number of colluding servers. \\[3pt]
$f(\cdot)$  & Polynomial code for $A$. \\[3pt]
$g(\cdot)$  & Polynomial code for $B$. \\[3pt]
$h(\cdot)$  & The product of $f$ and $g$. \\[3pt]
$\alpha$  & Vector of exponents of $f$. \\[3pt]
$\beta$  & Vector of exponents of $g$. \\[3pt]
$r$ & Chain length. \\[3pt]
$r^*$ & Optimal chain length.\\[3pt]
$a,b,c$ & Dimensions of the matrices $A$ and $B$.\\
\bottomrule 
\end{tabular}

\caption{List of main symbols.}
\label{tab: definitions}
\end{table}

We use the symbol ``$\mid$'' for horizontal splitting of vectors, and split the vectors $\alpha=\ap\mid\as$ and $\beta=\bp\mid\bs$ in the degree table (Table~\ref{problem}) into a \emph{prefix} and a \emph{suffix} such that the prefix $\ap$ has size $K$, the prefix $\bp$ has size $L$, and both suffixes, $\as$ and $\bs$, have size $T$.

For $A,B \subseteq \mathbb{Z}$ and $x,y \in \mathbb{Z}$, we use the notation $x+yA+B = \{x+ya+b \mid a\in A \text{ and } b \in B\}$.
We abbreviate $[n] = \{0,\ldots,n\} \subseteq \mathbb{Z}$ for a nonnegative integer $n$ and the set consisting of the elements of an integral vector $c$ is abbreviated as $\SET(c)$.

\section{A Motivating Example: \texorpdfstring{$K=L=T=4$}{K=L=T=4}}\label{sec:motivating_example}

In this example we consider the multiplication of two matrices $A$ and $B$ over a finite field $\mathbb{F}_q$, partitioned as
\[
A = \begin{bmatrix}
A_1 \\ A_2 \\ A_3 \\ A_4
\end{bmatrix} \quad \text{and} \quad
B = \begin{bmatrix}
B_1 & B_2 & B_3 &B_4
\end{bmatrix},
\quad \text{so that,} \quad
AB = \begin{bmatrix}
A_1B_1 & A_1B_2 & A_1B_3 & A_1B_4\\
A_2B_1 & A_2B_2 & A_2B_3 & A_2B_4\\
A_3B_1 & A_3B_2 & A_3B_3 & A_3B_4\\
A_4B_1 & A_4B_2 & A_4B_3 & A_4B_4\\
\end{bmatrix}.
\]

We construct a scheme which computes each term $A_kB_\ell$, and therefore all of $AB$ via polynomial interpolation. The scheme must be private for any $T=4$ servers colluding to infer any information about $A$ or $B$.

Let $R_1, \ldots, R_4$ and $S_1, \ldots, S_4$ be matrices chosen independently and uniformly at random with entries in $\mathbb{F}_q$, of sizes equal to the $A_k$ and $B_\ell$, respectively. Define the polynomials
\begin{align*}
f(x) = A_1x^{\alpha_1} + A_2x^{\alpha_2} + A_3x^{\alpha_3} + A_4x^{\alpha_4} + R_1x^{\alpha_5} + R_2x^{\alpha_6} + R_3x^{\alpha_7} + R_4x^{\alpha_8}
\end{align*}
and
\begin{align*}
g(x) = B_1x^{\beta_1} + B_2x^{\beta_2} + B_3x^{\beta_3} + B_4x^{\beta_4} + S_1x^{\beta_5} + S_2x^{\beta_6} + S_3x^{\beta_7} + S_4x^{\beta_8}.
\end{align*}

We recover the products $A_kB_\ell$ by interpolating the product $h(x) = f(x) \cdot g(x)$. Specifically, for some evaluation points $a_n\in \mathbb{F}_q$, we send $f(a_n)$ and $g(a_n)$ to server $n = 1,\ldots,N$, who then responds with $h(a_n) = f(a_n) \cdot g(a_n)$. These evaluations suffice to interpolate $h(x)$, allowing us to retrieve the coefficients of $h(x)$, which in turn allows us to decode all the $A_kB_\ell$. In~\cite{9004505}, it was shown that if the degree table of $\alpha$ and $\beta$ satisfy the conditions in Table~\ref{problem}, then the number of evaluation points needed, $N$, is equal to the number of distinct terms in the degree table.

In Table~\ref{tab:KLT4_example}, we show the degree tables of $\mathsf{GASP}_r$ for all $r$. The upper half of the degree table coincides for every $r$, and consists of the numbers from $0$ to $KL+K+T-2 = 22$. The gray region in the lower half of the degree table consists of the terms which have already appeared before, by ordering them up to down. The number of terms in the gray region is called the score, denoted by $S$, and is a key element in Theorem~\ref{theo:GASP_r_N}.

We calculate the number of distinct terms in the degree table as follows. As seen previously, the upper half of the degree table has $H_1=KL+K+T-1=23$ distinct terms. The lower half has a total of $H_2=T(L+T) = 32$ terms, $S$ of which appear elsewhere. Thus $N = H_1+H_2-S = 55-S$.

In Theorem~\ref{theo:GASP_r_N} we show how to compute the score, $S$, for any $r$. In general, we can determine the best chain length, $r^{*}$, by computing all $\min\{K,T\}=4$ possibilities for $r$ and choosing the one which maximizes the score $S$. In this case, the optimal chain length is $r^{*} = 2$ which could have also been obtained directly through Proposition~\ref{lem:GASP_r_opt_r2}. Thus, for this case, $\mathsf{GASP}_r$ for $r=r^*=2$ is the best known scheme and requires $N=36$ servers. Using the best known lower bound for this case in Theorem~\ref{theo:lb}, we obtain $N\geq KL+\max\{K,L\}+2T= 28$.

\section{\texorpdfstring{$\mathsf{GASP}_{r}$}{GASPr} Codes}\label{sec:GASPr}

In this section, we find the minimum number of servers, $N$, needed for $\mathsf{GASP}_{r}$ codes as presented in Definition~\ref{def:GASPr}. We begin by showing the formal definition of degree tables, first presented in~\cite{d2019gasp}.

\begin{definition}\label{def:degree table}
Let $K$, $L$, and $T$ be positive integers and let $\ap$, $\as$, $\bp$, $\bs$ be non negative integral vectors of sizes $K$, $T$, $L$, and $T$, respectively. Then, the tuple $(\ap,\as,\bp,\bs)$ is called a degree table, if
\begin{enumerate}
\item all integers in $(\ap|\as)$ are distinct,\footnote{Note that $(\ap|\as)$ distinct is equivalent to $\as$ distinct by an application of (3), cf. Table~\ref{problem}. The same is true for $(\bp|\bs)$.}
\item all integers in $(\bp|\bs)$ are distinct, and
\item for any integer $n$ in $\SET(\ap)+\SET(\bp)$, there is a unique $i$ in $\SET(\ap|\as)$ and a unique $j$ in $\SET(\bp|\bs)$ such that $n = i+j$.
\end{enumerate}

If $K$, $L$, and $T$ are clear from the context, we abbreviate $\alpha=(\ap|\as)$, $\beta=(\bp|\bs)$, and $(\alpha,\beta)=(\ap,\as,\bp,\bs)$. We denote the set of all degree tables $(\ap,\as,\bp,\bs)$ of sizes $K$, $T$, $L$, and $T$ by $\A(K,L,T)$.
\end{definition}

As shown in~\cite{d2019gasp}, each degree table corresponds to a polynomial code for secure distributed matrix multiplication. The code is obtained by equating $\alpha$ and $\beta$ in the degree table with the exponents in Equation~\ref{eq: f and g}. It was also shown that the minimum number of servers, $N$, needed for the scheme is given by the cardinality of the sumset $\SET(\alpha)+\SET(\beta)$. Thus, degree tables are a way of transforming the secure distributed matrix multiplication problem into the combinatorial problem shown in Table~\ref{problem}.

\begin{definition}
Let $K$, $L$, and $T$ be positive integers. For $(\alpha,\beta) \in \A(K,L,T)$ we define $\N(\alpha,\beta)$ as the cardinality of the sumset $\SET(\alpha)+\SET(\beta)$.
$\N(K,L,T)$ is the minimum\footnote{Since $\N(\alpha,\beta)$ is a positive integer, the infimum over $(\alpha,\beta) \in \A(K,L,T)$ is attained.} $N(\alpha,\beta)$ for $(\alpha,\beta)$ in $\A(K,L,T)$.
\end{definition}

\begin{remark}
Note that in $\mathsf{GASP}_r$, the vectors $\ap$, $\bp$ and $\bs$ depend on the parameters $K$, $L$, and $T$ but do not depend on the chain length $r$. What changes as a function of the chain length $r$ is the suffix $\as$. This suffix is a generalized arithmetic progression. Indeed, $\SET(\as)$ is precisely the set consisting of the $T$ smallest elements of the sumset $KL+[r-1]+K\mathbb{Z}_{\ge 0}$. The previous constructions presented in~\cite{9004505}, $\mathsf{GASP}_{\text{small}}$ and $\mathsf{GASP}_{\text{big}}$, are particular cases of $\mathsf{GASP}_r$ given by $\mathsf{GASP}_{\text{small}} = \mathsf{GASP}_1$ and $\mathsf{GASP}_{\text{big}} = \mathsf{GASP}_{\min\{K,T\}}$.
\end{remark}

To find the number of distinct terms, $N$, in the degree table of \GASP{r} we need the following.

\begin{definition}\label{def:left right score}
Let $K$ and $L$ be the partitioning parameters, $T$ be the security parameter, and $r$ be the chain length of the code $\mathsf{GASP}_r$. For $1 \le i \le T$ we define $L_i$ (and $R_i$) to be the set of integers that are in the first $L$ (last $T$) entries of row $K+i$ such that these integers appear in the first $K+i-1$ rows of the degree table constructed by $\mathsf{GASP}_r$. We call the cardinalities, $|L_i|$ and $|R_i|$, the left and right score of the row $K+i$, respectively. The score of row $i$ is defined as $S_i = |L_i|+|R_i|$ and the score of the chain length $r$ is defined as
\begin{align}
S(r) = S = \sum_{i=1}^T S_i = \sum_{i=1}^T (|L_i|+|R_i|).\label{eq_slr}
\end{align}
\end{definition}



In Table~\ref{tab:KLT4_example}, the left scores are represented by the gray regions in the lower left side of the degree table, and the right scores by the gray regions in the lower right side. The score, $S$, is the total number of terms in gray, which consists of the terms which have already appeared above in the degree table. Thus, the number of distinct terms in the degree table is given by
\begin{align}
N(r) = KL+K+T-1+T(L+T)-S(r).\label{eq_NS}
\end{align}
Therefore, to determine $N(r)$ we must determine $S(r)$, which is a function of the left and right scores. We present a closed form for these scores in the following lemma.

\begin{lemma}\label{lem:GASP_r_LiRi}
In the setting of Definition~\ref{def:left right score}, it follows that,
\[
|L_i|=
\begin{cases}
\min\{L,2+\lfloor(T-1-i)/K\rfloor\} & \text{if } 1 \le i \le r \\
L & \text{if } r+1 \le i \le T ,
\end{cases}
\]
and,
\[
|R_i|=
\begin{cases}
\max\{0,K+T-KL-1\} & \text{if } i=1 \\
\max\{0,T-K+r-1\} & \text{if $2 \le i$ and $i \equiv 1 \pmod{r}$} \\
T-1 & \text{if } i \not\equiv 1 \pmod{r}
\end{cases}.
\]
\end{lemma}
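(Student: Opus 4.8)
The plan is to analyze, row by row, which entries of the lower half of the degree table have already occurred in a strictly earlier row, and then to read off $|L_i|$ and $|R_i|$ from that count. I first record the two structural facts that drive everything. Writing $i-1 = qr+s$ with $0 \le s < r$, the $i$-th suffix exponent is $\as_i = KL + Kq + s$; in particular $\as$ is strictly increasing in $i$. Moreover, since $\SET(\ap) + \SET(\bp) = \{0,1,\ldots,KL-1\}$ and $\SET(\ap) + \SET(\bs) = KL + \{0,1,\ldots,K+T-2\}$, the $K$ prefix rows realize exactly the integer interval $[0,KL+K+T-2]$. Consequently an entry of row $K+i$ counts toward the score if and only if it either lies in $[0,KL+K+T-2]$ (a prefix-row value) or equals an entry of an earlier suffix row $\as_{i'}$ with $i'<i$.

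For the left score, the relevant $L$ values are $\as_i + \SET(\bp) = \{\as_i + (\ell-1)K : 1 \le \ell \le L\}$, which are distinct and all congruent to $s \bmod K$. When $1 \le i \le r$ we have $q=0$ and $s=i-1$; a short argument on residues and magnitudes (every such entry lies below $2KL$, whereas the right blocks of the earlier suffix rows, all with $q=0$ here, begin at $2KL$, and the left blocks of earlier rows carry a different residue $s'<s$ modulo $K$) shows none of them can occur in an earlier suffix row. Hence $|L_i|$ equals the number of $\ell$ with $\as_i + (\ell-1)K \le KL+K+T-2$, which rearranges to $\ell \le 2 + (T-1-i)/K$ and gives $\min\{L, 2 + \lfloor (T-1-i)/K \rfloor\}$. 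When $r+1 \le i \le T$ we have $q \ge 1$, so $\as_i - K$ is itself a valid suffix exponent $\as_{i'}$ with $i'<i$; its left block $\{\as_i - K, \as_i, \ldots, \as_i + (L-2)K\}$ covers the entries $\ell=1,\ldots,L-1$, and its first right-block entry $\as_{i'} + KL = \as_i + (L-1)K$ covers the remaining entry $\ell=L$. Thus all $L$ entries have appeared and $|L_i| = L$.

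For the right score the entries in question are the $T$ consecutive integers $\as_i + KL + \{0,\ldots,T-1\}$. A magnitude comparison rules out any coincidence with the left block of an earlier row, so appearances come only from the prefix interval or from earlier right blocks. The entry with offset $t$ occurs in the right block of $\as_{i'}$ exactly when $\as_i - \as_{i'} \le T-1-t$; since $\as$ is increasing and each such covered set is downward closed in $t$, the nearest predecessor $\as_{i-1}$ gives the widest coverage $\{t : t \le T-1-d\}$ with $d = \as_i - \as_{i-1}$, and the union over all $i'<i$ equals this set. Computing the gap from the formula for $\as$ gives $d=1$ when $i \not\equiv 1 \pmod r$ and $d = K-r+1$ when $i \equiv 1 \pmod r$ with $i \ge 2$, yielding $T-1$ and $\max\{0, T-K+r-1\}$ covered entries respectively. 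I then check that the prefix coverage is always contained in this suffix coverage (for $i \ge 2$) and that the top offset $t=T-1$ is never covered, so no correction terms arise; the case $i=1$ is treated separately, as there is no earlier suffix row, leaving only the prefix interval and hence $\max\{0, K+T-KL-1\}$.

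The structural preliminaries (the interval realized by the prefix rows, and the monotonicity of $\as$) are immediate, so I expect the main obstacle to be the bookkeeping in the right-score argument: proving that the nearest predecessor $\as_{i-1}$ alone determines the full set of covered offsets, extracting the two values of $d$ cleanly from the residue class of $i$ modulo $r$, and verifying that the prefix contribution never enlarges the suffix coverage so that a single closed expression suffices in each case. The left-score step for $r+1 \le i \le T$ rests on the one non-mechanical observation that $\as_i - K$ is again a suffix exponent whose first right-block entry supplies precisely the last missing value; spotting this is the only point there that is not routine.
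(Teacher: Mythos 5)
Your proposal is correct and follows essentially the same route as the paper's proof: the explicit form $\as_i = KL+Kq+s$, the observation that the prefix rows fill the interval $[0,KL+K+T-2]$, the shift $\as_i - K = \as_{i-r}$ to get $|L_i|=L$ for $i>r$, and the nearest-predecessor gap $d\in\{1,K-r+1\}$ for the right scores. If anything you are more explicit than the paper about why only the immediately preceding suffix row matters for $R_i$ and why the prefix right block never enlarges that coverage for $i\ge 2$ (the paper dispatches both in a single sentence), and those checks do go through.
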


\begin{proof}
See the Appendix.
\end{proof}

We conclude this section with a formula for the number of distinct terms, $N$, in the degree table of \GASP{r}.

\theoremgasp*

\begin{proof}
See the Appendix.
\end{proof}

\section{The Optimal Chain Length \texorpdfstring{$r^{*}$}{r*}} \label{sec: optimal chain length}

For a fixed choice of the degree table parameters, $K$, $L$, and $T$, there are $\min\{K,T\}$ possible values for the chain length $r$ of $\mathsf{GASP}_r$. We are interested in determining the chain length which minimizes the number of distinct terms, $N$, in the degree table. We refer to this value as the optimal chain length and denote it by $r^*$, as shown in Definition~\ref{def: optimal chain length}. 

In general, we can compute $N(r)$ for all possible $\min\{K,T\}$ values of $r$ using Theorem~\ref{theo:GASP_r_N}, and then compare the values. In this section, we show how to determine the optimal chain length for certain parameters, and how to reduce the search space in general. We begin by determining the optimal chain length for $K=L=T=n^2$.

\lemmansquared*

\begin{proof}
See the Appendix.
\end{proof}

Next, we solve the case for which $r \le \varphi$.

\begin{lemma}\label{lem:GASP_rlephi}
In the setting of Theorem~\ref{theo:GASP_r_N}, if $r \le \varphi$, then $r^{*}=\min\{K,T,\varphi\}$ minimizes $N(r)$.
\end{lemma}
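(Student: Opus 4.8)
The plan is to start from the closed-form expression for $N(r)$ in Theorem~\ref{theo:GASP_r_N} and specialize it to the regime $r \le \varphi$, after which a simple monotonicity argument finishes the job. First I would observe that since $r \ge 1$, the hypothesis $r \le \varphi$ forces $\varphi \ge 1 > 0$, so the indicator $\mathbf{1}_{\varphi < r}$ vanishes and the entire bracketed correction term in the formula drops out. Moreover, because $\varphi > 0$ we have $r - \varphi \le 0 < r$, hence $\min\{r, r-\varphi\} = r-\varphi \le 0$ and $\max\{0, \min\{r, r-\varphi\}\} = 0$, which kills the $(L-2)$-term regardless of the sign of $L-2$. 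What remains is
\[
N(r) = \underbrace{KL + 2K + 3T - 2 - \max\{K, \varphi\}}_{=:~C} \;+\; \underbrace{\left\lfloor \frac{T-1}{r} \right\rfloor \min\{T-1, K-r\}}_{=:~g(r)},
\]
where $C$ is independent of $r$. Thus minimizing $N(r)$ over the admissible chain lengths satisfying $r \le \varphi$, namely $1 \le r \le \min\{K,T,\varphi\}$, is equivalent to minimizing $g(r)$ on this range.

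Second, I would show that $g$ is non-increasing on $\{1, \ldots, \min\{K,T,\varphi\}\}$. The factor $\lfloor (T-1)/r \rfloor$ is manifestly non-negative and non-increasing in $r$. The factor $\min\{T-1, K-r\}$ is the minimum of the constant $T-1 \ge 0$ and the strictly decreasing $K-r$; since $r \le \min\{K,T,\varphi\} \le K$ on this range, we have $K - r \ge 0$, so $\min\{T-1, K-r\}$ is non-negative and non-increasing as well. A product of two non-negative non-increasing functions is non-increasing: for $r_1 < r_2$ in the range, $g(r_1) = \lfloor (T-1)/r_1 \rfloor \min\{T-1, K-r_1\} \ge \lfloor (T-1)/r_2 \rfloor \min\{T-1, K-r_1\} \ge \lfloor (T-1)/r_2 \rfloor \min\{T-1, K-r_2\} = g(r_2)$, where each inequality uses the non-negativity of the factor held fixed.

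Finally, since $g$ is non-increasing on $\{1, \ldots, \min\{K,T,\varphi\}\}$, its minimum over this set is attained at the right endpoint $r = \min\{K,T,\varphi\}$, and so is the minimum of $N(r) = C + g(r)$. This gives exactly the claimed optimal chain length $r^{*} = \min\{K,T,\varphi\}$.

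I expect the only delicate point to be the bookkeeping that collapses the two $r$-dependent correction terms in the regime $r \le \varphi$: specifically, noting that $\varphi \ge 1$ is automatic from $r \ge 1$, and that the sign of $L-2$ becomes irrelevant once the relevant maximum is seen to be $0$. Once the expression reduces to $C + g(r)$, the remainder is routine. A secondary point worth stating explicitly is that $r \le K$ holds throughout the range, since this is precisely what guarantees $\min\{T-1, K-r\} \ge 0$ and hence that the product of the two non-increasing factors is itself non-increasing.
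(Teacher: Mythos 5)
Your proposal is correct and follows essentially the same route as the paper: specialize the formula of Theorem~\ref{theo:GASP_r_N} to the regime $r \le \varphi$ (where the indicator term and the $(L-2)$-term both vanish) and then observe that the remaining $r$-dependent part $\lfloor (T-1)/r \rfloor \min\{T-1,K-r\}$ is non-increasing, so the minimum sits at the right endpoint $\min\{K,T,\varphi\}$. Your explicit check that both factors are non-negative (which is what justifies that their product is non-increasing) is a detail the paper glosses over, and is a welcome addition.
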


\begin{proof}
See the Appendix.
\end{proof}

In  light of Lemma~\ref{lem:GASP_rlephi}, we can restrict the task to find an $r^{*}$ to 
\begin{align}
r^{*}_1 \in \argmin_{r \in \{\max\{1,\varphi+1\},\ldots,\min\{K,T\}\}}N(r) \label{eq_restrictedminimizer}
\end{align}
and compare $N(r^{*}_1)$ to $N(r^{*}_2)$, where $r^{*}_2 = \max\{1,\min\{K,T,\varphi\}\}$. In the setting of Equation~\ref{eq_restrictedminimizer}, we have $\varphi +1 \le r$, which in turn allows us to simplify the formula for $N(r)$ in Theorem~\ref{theo:GASP_r_N} to
\begin{align}
\label{eq_N_phismallerr}
\begin{split}
N&(r) =
KL+K+3T-2 +(L-2)\min\{r,r-\varphi\} 
\\&+ \lfloor (T-1)/r \rfloor \min\{T-1,K-r\} -\min\{0,\mu-r\}
\\&-r(T-1-\mu)/K -\frac{-Kx^2 + (-K-2\max\{0,\varphi\}+2T-2)x+T-1-\mu}{2}
\\&+\frac{T-1-\mu}{K} \cdot \frac{T-1+\mu}{2}
\end{split}
\end{align}
so that, by omitting and introducing constants in $r$, we obtain
\begin{align*}
\argmin_{r \in \{\max\{1,\varphi+1\},\ldots,\min\{K,T\}\}}N(r)
=
\argmin_{r \in \{\max\{1,\varphi+1\},\ldots,\min\{K,T\}\}}H(r)
\end{align*}
where
\begin{align}
\label{eq_H}
\begin{split}
H(r) = & (L-2-(T-1-\mu)/K)r +\max\{\mu,r\} + \lfloor (T-1)/r \rfloor \min\{T-1,K-r\}.
\end{split}
\end{align}

This optimization problem is not only not differentiable or continuous, but also not convex, as the following example shows.

\begin{example}\label{example:optimal_r}
Let $L=6$, $K=T=9$.
Then $\varphi = -28$ and $\mu = 8$. Thus, Equation~\ref{eq_H} is given by $H(r) = 4r +\max\{8,r\} + \lfloor 8/r \rfloor (9-r)$ with $r \in \{1,\ldots,9\}$.
This yields the following table.
\begin{center}
  \begin{tabular}{c|lllllllll}
  \toprule
$r$&1&2&3&4&5&6&7&8&9\\

$H(r)$&76&44&32&34&32&35&38&41&45\\
\bottomrule
\end{tabular}  
\end{center}
Thus, the set of minima, $\{3,5\}$, is not convex, implying that $H(r)$ is not convex in general.
\end{example}

Fortunately, $H(r)$ is a piecewise linear function, and the pieces change whenever $\lfloor (T-1)/r \rfloor$ jumps or the $\min$ or $\max$ swap.
The next lemma allows us to compute the regions of $r$ for which $\lfloor (T-1)/r \rfloor$ is constant.

\begin{lemma}\label{lem_H_helper}
Let $r,w,T \in \mathbb{Z}_{\ge 1}$.
Then,
\begin{align*}
w=\left \lfloor \frac{T-1}{r} \right \rfloor
\Leftrightarrow
\left\lfloor \frac{T-1}{w+1} \right\rfloor +1 \le r \le \left\lfloor \frac{T-1}{w} \right\rfloor
\quad \text{and} \quad 0=\left \lfloor \frac{T-1}{r} \right \rfloor
\Leftrightarrow
T \le r .
\end{align*}
\end{lemma}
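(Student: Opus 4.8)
The plan is to reduce both equivalences to the elementary characterization of the floor function: for a real number $x$ and an integer $w$, one has $\lfloor x \rfloor = w$ precisely when $w \le x < w+1$. Writing $n = T-1$ throughout (so $n \ge 0$ by $T \ge 1$), I would treat the two stated equivalences separately, since the first implicitly involves dividing by $w$ and so only makes sense for $w \ge 1$, while the second is exactly the degenerate case $w = 0$ that must be split off.

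For the first equivalence I would begin from $w = \lfloor n/r \rfloor \Leftrightarrow w \le n/r < w+1$. Since $r \ge 1$ is a positive integer I may clear the denominator without reversing inequalities, obtaining the equivalent pair of integer inequalities $wr \le n$ and $n < (w+1)r$. The first, $wr \le n$, is equivalent to $r \le n/w$, and since $r$ is an integer this is the same as $r \le \lfloor n/w \rfloor$ (here the hypothesis $w \ge 1$ is what legitimizes dividing by $w$). The second, $n < (w+1)r$, is equivalent to $r > n/(w+1)$, and because $r$ is an integer this is in turn equivalent to $r \ge \lfloor n/(w+1) \rfloor + 1$. Combining the two bounds yields exactly $\lfloor n/(w+1) \rfloor + 1 \le r \le \lfloor n/w \rfloor$, which is the claimed range upon substituting $n = T-1$.

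For the second equivalence I would again use $\lfloor n/r \rfloor = 0 \Leftrightarrow 0 \le n/r < 1$. Since $r \ge 1$ and $n \ge 0$, the left inequality holds automatically and the right one reads $n < r$, that is, $T-1 < r$, which for integers is the same as $T \le r$. This settles the $w = 0$ case and completes the lemma.

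The argument is entirely elementary, so the only genuine point of care is the rounding direction when converting between the rational bounds $n/w$, $n/(w+1)$ and their floors: one must track strict versus non-strict inequalities precisely and confirm that $r > n/(w+1)$ really does translate to $r \ge \lfloor n/(w+1) \rfloor + 1$ whether or not $n/(w+1)$ is itself an integer. I would also note that the equivalence is to be read pointwise in $r$, so it remains correct even when the interval $[\lfloor n/(w+1)\rfloor+1,\ \lfloor n/w\rfloor]$ is empty, i.e. when no integer $r$ produces the floor value $w$: in that situation both sides of the equivalence are simply false for every $r$, so the biconditional still holds.
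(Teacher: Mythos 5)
Your proof is correct and takes essentially the same approach as the paper, which simply asserts that the lemma "follows directly from $x-1 < \lfloor x \rfloor \le x$"; you have just written out in full the denominator-clearing and integer-rounding steps that the paper leaves implicit. The one point of care you flag — that $r > n/(w+1)$ translates to $r \ge \lfloor n/(w+1)\rfloor + 1$ for integer $r$ regardless of whether $n/(w+1)$ is an integer — is handled correctly.
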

\begin{proof}
This follows directly from  $x-1 < \lfloor x \rfloor \le x$.
\end{proof}

Thus, $H(r)$ is linear on the interval $\big[ \lfloor (T-1)/(w+1) \rfloor +1,\lfloor (T-1)/w \rfloor \big]$, where $w=\lfloor (T-1)/r \rfloor$, provided that $\{\mu,K-T+1\} \cap \big[ \lfloor (T-1)/(w+1) \rfloor +2,\lfloor (T-1)/w \rfloor-1 \big] = \emptyset$ and we can compute its slope, $s(r) = L-2-(T-1-\mu)/K +\mathbf{1}_{\{\mu < r\}} -w \cdot \mathbf{1}_{\{K-T+1 < r\}}$. The set of all possible $w$ is $ W := \{ \lfloor (T-1)/i \rfloor \mid i \in \{\max\{1,\varphi+1\},\ldots,\min\{K,T-1\}\} \}$.

Abbreviating, $l_w := \lfloor (T-1)/(w+1) \rfloor +1$, $r_w := \lfloor (T-1)/w \rfloor$, and $A_w := \{\mu,K-T+1\} \cap [l_w+1,r_w-1]$, we define for all $w \in W$, where for brevity we apply the rule that if more than one if-clause is valid, the uppermost is taken,
\begin{align*}
Q_w := 
\begin{cases}
\emptyset & \text{ if } r_w < l_w \\
\{l_w\} & \text{ if } l_w = r_w \\
\{l_w\} & \text{ if } A_w \ne \emptyset \text{ and } s(l_w) \ge 0 \text{ and } s(r_w) \ge 0\\
\{l_w,r_w\} & \text{ if } A_w \ne \emptyset \text{ and } s(l_w) \ge 0 \text{ and } s(r_w) < 0\\
A_w & \text{ if } A_w \ne \emptyset \text{ and } s(l_w) < 0 \text{ and } s(r_w) \ge 0\\
\{r_w\} & \text{ if } A_w \ne \emptyset \text{ and } s(l_w) < 0 \text{ and } s(r_w) < 0\\
\{l_w\} & \text{ if } s(l_w) \ge 0 \\
\{r_w\} & \text{ if } s(l_w) < 0 \\
\end{cases}
\end{align*}
and $Q := \cup_{w \in W} Q_w$.

We can now reduce the size of the set containing the minimum of $N(r)$.

\begin{theorem} \label{theo: set of min r}
In the setting of Theorem~\ref{theo:GASP_r_N}, let $Q'  := \{\max\{1,\min\{K,T,\varphi\}\},\max\{1,\varphi+1\},T\}$ and $Q'' := (Q' \cup Q) \cap \{1,\ldots,\min\{K,T\}\}$. Then,
\begin{align*}
Q'' \cap \left( \argmin_{r \in \{1,\ldots,\min\{K,T\}\}}N(r) \right)
\end{align*}
is nonempty, and hence only values in $Q''$ need to be tested to find a global minimum of $N(r)$.
\end{theorem}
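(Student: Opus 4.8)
The plan is to reduce the minimization of $N(r)$ to a finite, checkable comparison by exploiting the piecewise-linear structure of the auxiliary function $H(r)$ from Equation~\ref{eq_H}. First I would dispose of the regime $r \le \varphi$: by Lemma~\ref{lem:GASP_rlephi} a minimizer there is $r^{*}_2 = \max\{1,\min\{K,T,\varphi\}\}$, which is one of the three elements of $Q'$. It therefore suffices to locate a minimizer of $N(r)$ over $r \in \{\max\{1,\varphi+1\},\ldots,\min\{K,T\}\}$ and, since on this range $\argmin N(r) = \argmin H(r)$ as already established, to show that such a minimizer can be found inside $Q'' = (Q' \cup Q) \cap \{1,\ldots,\min\{K,T\}\}$.

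Next I would make the piecewise-linear decomposition of $H$ explicit. The term $\lfloor (T-1)/r\rfloor$ is the only source of \emph{global} nonlinearity; by Lemma~\ref{lem_H_helper} it equals a fixed $w$ exactly on the integer interval $[l_w,r_w]$, and as $r$ ranges over the search range $w$ ranges over $W$ (the single value $w=0$, occurring only when $r=T$, contributes a vanishing product and is covered by $T \in Q'$). On each $[l_w,r_w]$ the value $w$ is frozen, so the remaining nonlinearity comes only from the swap of $\max\{\mu,r\}$ at $r=\mu$ and of $\min\{T-1,K-r\}$ at $r=K-T+1$; these are exactly the interior breakpoints collected in $A_w$. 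Between breakpoints $H$ is affine with integer slope $s(r)$ --- integrality is crucial and holds because $\mu \equiv T-1 \pmod{K}$ forces $(T-1-\mu)/K \in \mathbb{Z}$.

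The heart of the argument is then local: on an integer interval an affine function attains its minimum at an endpoint, so a minimizer of $H$ restricted to $[l_w,r_w]$ must lie in $\{l_w,r_w\} \cup A_w$, and I would verify that $Q_w$ selects a correct one in each case. When $A_w=\emptyset$, $H$ is affine and the sign of $s(l_w)=s(r_w)$ decides between $l_w$ and $r_w$, matching the last two clauses. When $A_w \ne \emptyset$ there is a single breakpoint $b$, splitting $[l_w,r_w]$ into two affine pieces with slopes $s(l_w)$ and $s(r_w)$; the four sign patterns give, respectively, a minimizer at $l_w$ (up, up), at $\{l_w,r_w\}$ (up, down), at the valley $b=A_w$ (down, up), or at $r_w$ (down, down), matching the four remaining clauses. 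Finally, because the intervals $[l_w,r_w]$ may be truncated by the search range, a minimizer of the truncated problem could sit at the range boundary; including $\max\{1,\varphi+1\}$ and $T$ in $Q'$ absorbs this, and intersecting with $\{1,\ldots,\min\{K,T\}\}$ discards any out-of-range endpoints. Assembling these per-interval minimizers over $w \in W$ gives that $Q$, hence $Q''$, meets $\argmin H$, hence $\argmin N$.

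The step I expect to be the main obstacle is justifying that the local analysis really needs only the two \emph{boundary} slopes $s(l_w),s(r_w)$ and never a middle slope. A priori $A_w$ could contain both $\mu$ and $K-T+1$, producing three affine pieces and an interior valley invisible to the boundary test. The key fact that rules this out is that at most one of the two swaps is ever interior: if $p := \lfloor (T-1)/K\rfloor = 0$ then $\mu = T-1 \ge r_w$, so $\mu$ is never interior, whereas if $p \ge 1$ then $K-T+1 \le 0$ lies outside the search range entirely and $\min\{T-1,K-r\}=K-r$ throughout. Establishing this dichotomy cleanly --- and with it that $\lvert A_w\rvert \le 1$, so the two-slope case analysis is exhaustive --- together with the careful boundary bookkeeping above, is where the real work lies; the rest is the bounded arithmetic of comparing the finitely many candidates in $Q''$.
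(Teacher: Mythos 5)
Your overall route is the same as the paper's: dispose of $r\le\varphi$ via Lemma~\ref{lem:GASP_rlephi}, pass to the piecewise-linear $H(r)$ of Equation~\ref{eq_H}, and argue that on each interval $[l_w,r_w]$ where $\lfloor(T-1)/r\rfloor$ is frozen a minimizer sits at an endpoint or at the interior breakpoint collected in $A_w$, which is what $Q_w$ records. You are considerably more careful than the paper's three-sentence proof, and your observation that $|A_w|\le 1$ (either $T-1<K$, whence $\mu=T-1\ge r_w$ is never interior, or $T-1\ge K$, whence $K-T+1\le 0$ is out of range) is a genuine improvement: without it the two-boundary-slope test in the definition of $Q_w$ is not obviously exhaustive, and the paper never addresses this.

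However, your handling of the truncation at the \emph{upper} end of the search range is wrong, and the error is fatal because the statement itself fails there. You claim that including $T$ in $Q'$ absorbs the right boundary; but the search range is $\{\max\{1,\varphi+1\},\ldots,\min\{K,T\}\}$, so when $K<T$ the right boundary is $K$, the element $T$ is discarded by the intersection with $\{1,\ldots,\min\{K,T\}\}$, and nothing puts $K$ into $Q''$. Concretely, if the piece $[l_w,r_w]$ containing $r=\min\{K,T\}$ has $r_w>\min\{K,T\}$ and negative slope, $Q_w=\{r_w\}$ is discarded and the in-range minimizer $\min\{K,T\}$ of that piece is lost. Take $K=L=4$, $T=11$: then $\varphi=2$, $\mu=2$, $W=\{2,3\}$, $Q_3=\{3\}$, and for $w=2$ one has $l_2=4$, $r_2=5$, $A_2=\emptyset$, $s(4)=-1<0$, so $Q_2=\{5\}$; hence $Q''=(\{2,3,11\}\cup\{3,5\})\cap\{1,\ldots,4\}=\{2,3\}$. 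But Theorem~\ref{theo:GASP_r_N} (or a direct count of the degree tables) gives $N(1)=81$, $N(2)=61$, $N(3)=55$, $N(4)=53$, so the unique global minimizer is $r=4\notin Q''$. The paper's own proof is silent on this truncation and shares the defect; the statement can be repaired by adding $\min\{K,T\}$ to $Q'$ (equivalently, by clamping each $r_w$ to $\min\{K,T\}$ before forming $Q_w$), but as written neither your argument nor the paper's establishes the theorem, because it is false.
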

\begin{proof}
Following the discussion in this section, we have that $Q$ contains at most one end point of each linear part of the piecewise linear function unless $A_w \ne \emptyset$.
The special cases arising
by Lemma~\ref{lem:GASP_rlephi}, $\{\max\{1,\min\{K,T,\varphi\}\}$ and $\max\{1,\varphi+1\}$,
and
by Lemma~\ref{lem_H_helper}, $T$,
are tested via $Q'$.
The special cases arising by $A_w$, $\mu$ and $K-T+1$, are embedded in the reasoning of $Q$.
\end{proof}

In particular, at most $5+\#W$ values of $r$ must be tested to find a global minimum of $N(r)$, which can be significantly smaller than testing all $\min\{K,T\}$ values in $\{1,\ldots,\min\{K,T\}\}$. Indeed, checking all parameters $1 \le L \le K \le 300$ and $1 \le T \le 300$, we find that the arithmetic mean of $\frac{
5+\#W
}{
\min\{K,T\}
}$ is approximately $32.5\%$.

\begin{example}
Continuing Example~\ref{example:optimal_r}, we have $W=\{1,2,4,8\}$, $Q'=\{1,9\}$, and $Q=\{1,2,3,5\}$. Hence, instead of testing all 9 values in $\{1,\ldots,9\}$, we can restrict our search to the smaller set $Q''=\{1,2,3,5,9\}$. Indeed, $\argmin_{r \in \{1,\ldots,\min\{K,T\}\}}N(r) = \{3,5\}$ and $Q''$ intersect nontrivially.
\end{example}

\section{Equivalence of Degree Tables} \label{sec: equivalence}

In this section, we introduce a framework for dealing with general degree tables. The results presented here serve as tools for proving bounds on the degree table, in Section~\ref{sec: sumsets and lower bounds}, and for analyzing the degree tables as an integer linear programming problem in Section~\ref{sec: linear programming}. The main idea is to define an equivalence relation on $\A(K,L,T)$ under which $\N(\alpha,\beta)$ is invariant.




We use the following additional notation: we denote the all-one vector of length $n$ by $\mathds{1}_n$, the symmetric group on $n$ symbols by $\mathbb{S}_n$, and, for a given permutation $\pi\in\mathbb{S}_n$ and a vector $(v_1,\ldots,v_n)$, we define the operation $\pi\circ(v_1,\ldots,v_n) = (v_{\pi(1)},\ldots,v_{\pi(n)})$. Let $v$ be an $n$-dimensional integral vector. A permutation $\pi \in \mathbb{S}_n$ is called a \emph{sorting permutation} with respect to $v$ if $(\pi \circ v)_i \le (\pi \circ v)_{i+1}$ for all $1 \le i < n$. Note that the sorting permutation is unique if and only if $v$ contains no repeated numbers.

The following lemma presents two operations which can be performed on the vectors $\alpha$ and $\beta$ of a degree table without changing the number of distinct terms $N(\alpha,\beta)$.

\begin{lemma}\label{lem:squeeze}
Let $(\alpha,\beta) \in \A(K,L,T)$ be a degree table and $p=\pi \circ \alpha$, $q=\tau \circ \beta$ the corresponding sorted vectors for unique sorting permutations $\pi$ and $\tau$.
Moreover, we abbreviate $a=\min\{\SET(\alpha)\}$, $A=\max\{\SET(\alpha)\}$, $b=\min\{\SET(\beta)\}$, and $B=\max\{\SET(\beta)\}$. 

\begin{enumerate}
\item For an integer $i$ with $1 \le i < K+T$ and $p_i+B < (p_{i+1}-1)+b$, there is a degree table $(\alpha',\beta) \in \A(K,L,T)$ with $\N(\alpha,\beta) = \N(\alpha',\beta)$ where
$
\alpha'_j
=
\begin{cases}
\alpha_j & \text{if } j \le \pi(i) \\
\alpha_j-1 & \text{else}
\end{cases}
$.
\item For an integer $i$ with $1 \le i < L+T$ and $q_i+A < (q_{i+1}-1)+a$, there is a degree table $(\alpha,\beta') \in \A(K,L,T)$ with $\N(\alpha,\beta) = \N(\alpha,\beta')$ where
$
\beta'_j
=
\begin{cases}
\beta_j & \text{if } j \le \tau(i) \\
\beta_j-1 & \text{else}
\end{cases}
$.
\end{enumerate}

\end{lemma}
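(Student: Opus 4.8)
The plan is to prove both statements through a single structural observation: the hypothesis isolates a ``gap'' in the sorted $\alpha$-spectrum that splits the sumset $\SET(\alpha)+\SET(\beta)$ into two non-interacting blocks, and the proposed modification merely slides the upper block down by one without letting the two blocks merge. I will prove part~(1) in detail; part~(2) is obtained by interchanging the roles of $\alpha$ and $\beta$.

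First I would set up the block decomposition. Writing the sorted vector as $p_1 < \dots < p_{K+T}$, I split $\SET(\alpha)$ into the \emph{bottom block} $\mathcal{B}=\{p_1,\dots,p_i\}$ and the \emph{top block} $\mathcal{T}=\{p_{i+1},\dots,p_{K+T}\}$. The vector $\alpha'$ is precisely the one whose entries equal $\alpha$ on the coordinates carrying a value in $\mathcal{B}$ and equal $\alpha-1$ on those carrying a value in $\mathcal{T}$; equivalently, $\alpha'$ is the unsorted preimage of $(p_1,\dots,p_i,p_{i+1}-1,\dots,p_{K+T}-1)$. Define the value map $\sigma:\SET(\alpha)\to\mathbb{Z}$ by $\sigma(v)=v$ on $\mathcal{B}$ and $\sigma(v)=v-1$ on $\mathcal{T}$. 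Since the gap condition $p_i+B<(p_{i+1}-1)+b$ forces $p_{i+1}-p_i\ge B-b+2\ge 2$, we get $p_{i+1}-1>p_i\ge 0$, so $\sigma$ is a strictly order-preserving, hence injective and nonnegative-valued, bijection onto $\SET(\alpha')$. This already secures conditions~(1) and~(2) of Definition~\ref{def:degree table} for $(\alpha',\beta)$: the entries of $\alpha'$ are distinct and nonnegative, and $\beta$ is untouched.

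The crux is the following collision-preservation claim: for all pairs $(u,y),(u',y')\in\SET(\alpha)\times\SET(\beta)$,
\[
u+y=u'+y' \iff \sigma(u)+y=\sigma(u')+y'.
\]
I would verify it by cases on the blocks containing $u,u'$. If both lie in $\mathcal{B}$ or both in $\mathcal{T}$, the two $\alpha$-values are shifted by the same amount ($0$ or $-1$), so equality transfers verbatim. The delicate case is $u\in\mathcal{B}$, $u'\in\mathcal{T}$ (and its symmetric twin): here the gap does the work. In the original table $u+y\le p_i+B<p_{i+1}+b\le u'+y'$, and in the modified table $\sigma(u)+y=u+y\le p_i+B<(p_{i+1}-1)+b\le \sigma(u')+y'$, so the two sums coincide in neither table and the equivalence holds because both sides are false. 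This is the step I expect to be the main obstacle, in the sense that it is exactly where the precise form of the hypothesis is needed: one must check that lowering the top block by one still leaves it strictly separated from the bottom block.

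Finally I would harvest the consequences. The map $\Phi(u,y)=(\sigma(u),y)$ is a bijection $\SET(\alpha)\times\SET(\beta)\to\SET(\alpha')\times\SET(\beta)$ that, by the claim, carries the ``equal-sum'' equivalence relation onto itself; hence the number of equivalence classes, equivalently the number of distinct sums, is unchanged, giving $\N(\alpha,\beta)=\N(\alpha',\beta)$. The same claim yields condition~(3) for $(\alpha',\beta)$: any $n'\in\SET(\ap')+\SET(\bp)$ can be written as $\sigma(u)+y$ with $u\in\SET(\ap)$ and $y\in\SET(\bp)$, and the number of full-set representations of $n'$ in the new table equals the number of full-set representations of $m=u+y\in\SET(\ap)+\SET(\bp)$ in the original table, which is $1$ by hypothesis. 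Thus $(\alpha',\beta)\in\A(K,L,T)$ with the same $\N$. Part~(2) then follows by the identical argument after swapping $\alpha\leftrightarrow\beta$ and $K\leftrightarrow L$ and using $q_i+A<(q_{i+1}-1)+a$ to separate the two $\beta$-blocks.
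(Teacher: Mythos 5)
Your proposal is correct and follows essentially the same route as the paper's own proof: both establish distinctness of the shifted entries from the gap condition and then prove the collision-preservation equivalence $u+y=u'+y'\iff\sigma(u)+y=\sigma(u')+y'$ by the same case split (same block trivially, cross-block via the strict separation $p_i+B<(p_{i+1}-1)+b$). Your write-up is if anything slightly more explicit than the paper's in spelling out how collision preservation yields condition~(3) of Definition~\ref{def:degree table} and the equality of $\N$.
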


\begin{proof}
See the Appendix.
\end{proof}

If one applies Lemma~\ref{lem:squeeze} multiple times, then the ordering of the operations (1.) and (2.) in Lemma~\ref{lem:squeeze} is unique.

\begin{lemma}\label{lem:squeeze_ordering}
Let $(\alpha,\beta)$ be a degree table.
Then at most one of the two operations (1.) and (2.) in Lemma~\ref{lem:squeeze} are applicable for $(\alpha,\beta)$.
Two successive operations of (1.) respectively (2.) using the indices $i,i'$ applied to $(\alpha,\beta)$ yield the same degree table as if the indices $i',i$ were used.
\end{lemma}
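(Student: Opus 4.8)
The plan is to prove the two assertions separately: mutual exclusivity follows from a comparison of the two ``spreads'' $A-a$ and $B-b$, while commutativity follows once one checks that operation (1.) fixes both the sorting permutation $\pi$ of $\alpha$ and the whole vector $\beta$ (and symmetrically for operation (2.)). Throughout I would rely only on Lemma~\ref{lem:squeeze} and on the fact that a degree table has $\alpha$ and $\beta$ with pairwise distinct entries.

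First I would dispose of the claim that at most one operation is applicable. Suppose operation (1.) is applicable at some index $i$, so that $p_i+B<(p_{i+1}-1)+b$, equivalently $p_{i+1}-p_i>B-b+1$. Since $p$ is the sorted form of $\alpha$ we have $p_1=a$, $p_{K+T}=A$, and all consecutive gaps are nonnegative, whence $A-a=\sum_{k=1}^{K+T-1}(p_{k+1}-p_k)\ge p_{i+1}-p_i>B-b+1$, and in particular $A-a>B-b$. By the identical argument applied to $q$ and $\beta$, applicability of operation (2.) forces $B-b>A-a$. As these two strict inequalities cannot hold simultaneously, at most one of the operations is applicable to $(\alpha,\beta)$.

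For commutativity I would focus on operation (1.), the case of (2.) following by interchanging the roles of $\alpha$ and $\beta$. The crucial observations are two invariances. First, operation (1.) does not alter $\beta$, so $b$ and $B$, and hence the applicability condition for any further type-(1.) operation, are unchanged. Second, it preserves the sorting permutation $\pi$: the condition gives $p_{i+1}-p_i>B-b+1\ge 1$, so after lowering by $1$ the block of entries lying above the gap one still has $p_i<p_{i+1}-1$, the order within that block is untouched, and distinctness persists; hence $\pi$ remains the unique sorting permutation and the positions named by a second index $i'$ denote the same entries after the first operation. Reading operation (1.) at index $j$ as subtracting $1$ from exactly the entries in sorted positions $j+1,\dots,K+T$, I would then check that performing it at $i$ leaves the gap tested at $i'$ intact: for $i'<i$ both positions $i',i'+1$ lie at or below $i$ and are fixed, while for $i'>i$ both lie strictly above $i$ and descend by $1$ together, so $p_{i'+1}-p_{i'}$ is unchanged and the second operation stays applicable. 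Finally, the net decrement at sorted position $k$ after carrying out both operations equals $-(\mathbf{1}_{k>i}+\mathbf{1}_{k>i'})$, which is symmetric in $i$ and $i'$, so the two orderings yield the same vector, proving commutativity.

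I expect the only delicate point to be the two competing indexings. Operation (1.) in Lemma~\ref{lem:squeeze} is phrased through the sorting permutation, so before I may speak of ``the same sorted positions'' across successive applications I must first pin down that $\pi$ is genuinely invariant under the operation; once that invariance is established, both assertions collapse to the elementary spread inequalities above and to the commutativity of summing indicator functions.
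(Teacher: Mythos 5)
Your proposal is correct and follows essentially the same route as the paper: mutual exclusivity via the contradiction between $A-a>B-b$ and $B-b>A-a$ forced by the two applicability conditions, and commutativity by observing that the net decrement at each sorted position is the symmetric quantity $\mathbf{1}_{k>i}+\mathbf{1}_{k>i'}$. Your additional verification that operation (1.) preserves the sorting permutation $\pi$ and the applicability of the second operation is a detail the paper's proof leaves implicit, but it does not change the underlying argument.
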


\begin{proof}
See the Appendix.
\end{proof}

\begin{lemma}\label{lem:sq_2}
Using the notation of Lemma~\ref{lem:squeeze}, if $p_{i}+B \le p_{i+1}+b$, then operation (2.) cannot be applied and if $q_{i}+A \le q_{i+1}+a$, then operation (1.) cannot be applied.
\end{lemma}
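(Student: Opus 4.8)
The plan is to reduce both assertions to a single elementary comparison between the total ranges of the sorted vectors, using only monotonicity; notably, no degree-table property of Definition~\ref{def:degree table} is needed here. I read the hypotheses as referring to the same index $i$ from the notation of Lemma~\ref{lem:squeeze}, with $1 \le i < K+T$ in the first claim and $1 \le i < L+T$ in the second.

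First I would record the identifications coming from $p=\pi\circ\alpha$ and $q=\tau\circ\beta$ being the sorted vectors: namely $a = p_1$, $A = p_{K+T}$, $b = q_1$, and $B = q_{L+T}$. I would then restate the applicability conditions of Lemma~\ref{lem:squeeze} in terms of consecutive gaps. Operation (1.) is applicable at an index exactly when some gap satisfies $p_{i+1}-p_i > B-b+1$, and operation (2.) exactly when some gap satisfies $q_{j+1}-q_j > A-a+1$. The hypothesis $p_i+B \le p_{i+1}+b$ rearranges to $p_{i+1}-p_i \ge B-b$.

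Next I would chain the inequalities. Since $p$ is sorted and $1 \le i < K+T$, we have $p_1 \le p_i$ and $p_{i+1} \le p_{K+T}$, so the full range dominates the single gap: $A-a = p_{K+T}-p_1 \ge p_{i+1}-p_i \ge B-b$. Now for every index $j$ with $1 \le j < L+T$, the corresponding $\beta$-gap obeys $q_{j+1}-q_j \le q_{L+T}-q_1 = B-b \le A-a < A-a+1$, so the applicability condition $q_j+A < (q_{j+1}-1)+a$ fails for every $j$; hence operation (2.) cannot be applied. The second claim follows by swapping the roles of $\alpha$ and $\beta$ (equivalently, of $p,a,A$ with $q,b,B$, and of operations (1.) and (2.)) in the identical argument.

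I expect no genuine obstacle: the only care required is in the bookkeeping of strict versus nonstrict inequalities. The crux is that the $+1$ built into the applicability condition is exactly what makes the weaker bound $B-b \le A-a$ sufficient to rule out operation (2.), and that a single nontrivial gap already forces $A-a \ge B-b$ by monotonicity of the sorted vector.
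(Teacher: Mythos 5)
Your proof is correct and is essentially the paper's argument in rearranged form: the paper chains $q_{j+1}+a \le B+p_i \le p_{i+1}+b \le A+q_j$ to contradict the applicability condition, while you establish the equivalent gap comparison $q_{j+1}-q_j \le B-b \le p_{i+1}-p_i \le A-a < A-a+1$ directly, using the same four monotonicity facts plus the hypothesis. The only cosmetic difference is direct refutation versus proof by contradiction.
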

\begin{proof}
We only proof the first part, the second part follows with a similar argument.
Assume operation (ii) would be applicable, i.e., there is a $j$ with $q_{j}+A < (q_{j+1}-1)+a$.
Then,
$q_{j+1}+a
\le B+p_{i}
\le p_{i+1}+b
\le A+q_{j}$
is a contradiction.
\end{proof}

Quite figuratively, we call the operations defined in Lemma~\ref{lem:squeeze} a \emph{squeezing} of the degree table $(\alpha,\beta)$. Using Lemma~\ref{lem:squeeze_ordering}, we denote by $\squeeze(\alpha,\beta,n)$, the unique degree table which is the result of $n$ squeezings of $(\alpha,\beta)$, using the smallest feasible index in each step. We denote by $\squeeze(\alpha,\beta)$, the unique degree table such that there is a non-negative integer $m$ with $\squeeze(\alpha,\beta) = \squeeze(\alpha,\beta,m)$ and such that neither the application of (1.) nor (2.) of Lemma~\ref{lem:squeeze} on $\squeeze(\alpha,\beta)$ is feasible. In particular $\squeeze(\alpha,\beta,0) = (\alpha,\beta)$. If $(\alpha,\beta)=\squeeze(\alpha,\beta)$, we say $(\alpha,\beta)$ is \emph{squeezed}.

\begin{proposition}\label{prop:sq_3}
For any non-squeezed degree table $(\alpha,\beta)$, $\squeeze(\alpha,\beta)$ is built by either applying only operation (1.) of Lemma~\ref{lem:squeeze} possibly multiple times or applying only operation (2.) of Lemma~\ref{lem:squeeze} possibly multiple times.
In both cases, the result of the process in independent of the order of the operations.
\end{proposition}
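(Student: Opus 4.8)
The plan is to prove Proposition~\ref{prop:sq_3} by combining the two structural lemmas immediately preceding it: the mutual exclusivity in Lemma~\ref{lem:squeeze_ordering} and the ``blocking'' behavior captured by Lemma~\ref{lem:sq_2}. The key observation I would establish first is that once operation (1.) has been applied to a degree table, operation (2.) can \emph{never} subsequently become feasible (and symmetrically with the roles of (1.) and (2.) reversed). This is the heart of the claim: it forces the entire squeezing sequence to consist of a single type of operation, so $\squeeze(\alpha,\beta)$ is reached either by iterating (1.) alone or by iterating (2.) alone.

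First I would set up the induction on the number of squeezing steps $m$ with $\squeeze(\alpha,\beta) = \squeeze(\alpha,\beta,m)$. Since $(\alpha,\beta)$ is non-squeezed, by Lemma~\ref{lem:squeeze_ordering} exactly one of (1.) or (2.) is applicable on the initial table; say without loss of generality it is (1.). The crux is to show that after performing (1.), operation (2.) remains inapplicable. I would prove this by tracking the relevant quantities $a,A,b,B$ from Lemma~\ref{lem:squeeze}: applying operation (1.) decrements a tail of $\alpha$ by $1$, which can only decrease $A=\max\{\SET(\alpha)\}$ (or leave it unchanged), while $a,b,B$ are governed in the right direction. The condition for (1.) being applicable is $p_i+B < (p_{i+1}-1)+b$, and I would show this condition, together with the contrapositive form of Lemma~\ref{lem:sq_2} (namely that $q_j+A \le q_{j+1}+a$ blocks operation (1.), and the symmetric statement blocks (2.)), propagates so that the feasibility of (1.) is preserved and the infeasibility of (2.) is never violated as the gaps shift.

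The main obstacle, and the step deserving the most care, is verifying that decrementing the tail of $\alpha$ does not inadvertently open up a feasible index for operation (2.). Concretely, I expect to argue that applying (1.) at the smallest feasible index either lowers $A$ by one or preserves it, and correspondingly the inequality $q_j + A \le q_{j+1} + a$ that prevents (2.)—supplied by Lemma~\ref{lem:sq_2} under the hypothesis that (1.) was the applicable operation—is maintained throughout. The symmetric reasoning with Lemma~\ref{lem:sq_2} shows that if (2.) is applied, then (1.) stays blocked. Hence no squeezing sequence can ever switch operation type.

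Finally, for the order-independence claim within a single operation type, I would invoke Lemma~\ref{lem:squeeze_ordering} directly: it already asserts that two successive applications of (1.) (respectively (2.)) using indices $i,i'$ commute, producing the same degree table as the reversed order $i',i$. Since the whole sequence consists of operations of one fixed type, iterating this commutativity over an arbitrary permutation of the applied indices shows the final squeezed table is independent of the order in which the feasible indices are processed. Combining the single-type conclusion with this commutativity yields the full statement of the proposition.
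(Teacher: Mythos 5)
Your proposal is correct and follows essentially the same route as the paper: the paper's proof likewise combines Lemma~\ref{lem:squeeze_ordering} with Lemma~\ref{lem:sq_2}, noting that whenever operation (1.) is applicable at index $i$ the inequality $p_i+B \le p_{i+1}+b$ holds and persists as the gap closes down to equality $p_i+B=(\widetilde{p_{i+1}}-1)+b$, so operation (2.) stays blocked throughout, and order-independence is exactly the commutativity already supplied by Lemma~\ref{lem:squeeze_ordering}. One slip to fix when writing this out: in your second-to-last paragraph you attribute the blocking of operation (2.) to the inequality $q_j+A\le q_{j+1}+a$ and propose to track $A$, but by Lemma~\ref{lem:sq_2} that inequality blocks operation (1.); the condition you actually need to maintain is $p_i+B\le p_{i+1}+b$, which persists for the simple reason that $B$ and $b$ are untouched by operation (1.) and the gap $p_{i+1}-p_i$ only shrinks toward the threshold.
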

\begin{proof}
This follows from Lemma~\ref{lem:squeeze_ordering} and Lemma~\ref{lem:sq_2}, if operation (1.) is applicable, i.e., there is an $i$ with $p_i+B < (p_{i+1}-1)+b$, after applying operation (1.) $s$ times with index $i$, we have with $\widetilde{p_{i+1}} := p_{i+1}-s$ then $p_i+B = (\widetilde{p_{i+1}}-1)+b$ and hence operation (2.) cannot be applied by Lemma~\ref{lem:sq_2} for all subsequent applications of operation (1.). The other statement follows similarly.
\end{proof}

\begin{table*}[!t]
\setlength{\tabcolsep}{0.33em}
\begin{subtable}{0.24\textwidth}
\begin{center}
\begin{tabular}{c|cccc}
  & 0& 2& \cellcolor{blue!25}4 & \cellcolor{blue!25}5\\
\toprule
 0 & \cellcolor{red!25}0 & \cellcolor{red!25}2 & 4 & 5\\
 1 & \cellcolor{red!25}1 & \cellcolor{red!25}3 & 5 & 6\\
 \cellcolor{green!25} 9 & 9 & 11 & 13 & 14\\ 
 \cellcolor{green!25} 10 & 10 & 12 & 14 & 15\\
\bottomrule
\end{tabular}    
\end{center}
\caption{}\label{tab:operations 1}
\end{subtable}
\begin{subtable}{0.24\textwidth}
\begin{center}
\begin{tabular}{c|cccc}
  & 0& 2& \cellcolor{blue!25}4 & \cellcolor{blue!25}5\\
\toprule
 0 & \cellcolor{red!25}0 & \cellcolor{red!25}2 & 4 & 5\\
 1 & \cellcolor{red!25}1 & \cellcolor{red!25}3 & 5 & 6\\
 \cellcolor{green!25} 8 & 8 & 10 & 12 & 13\\ 
 \cellcolor{green!25} 9 & 9 & 11 & 13 & 14\\
\bottomrule
\end{tabular}    
\end{center}
\caption{}\label{tab:operations 2}
\end{subtable}
\begin{subtable}{0.24\textwidth}
\begin{center}
\begin{tabular}{c|cccc}
  & 0& 1& \cellcolor{blue!25}8 & \cellcolor{blue!25}9\\
\toprule
 0 & \cellcolor{red!25}0 & \cellcolor{red!25}1 & 8 & 9\\
 2 & \cellcolor{red!25}2 & \cellcolor{red!25}3 & 10 & 11\\
 \cellcolor{green!25} 4 & 4 & 5 & 12 & 13\\ 
 \cellcolor{green!25} 5 & 5 & 6 & 13 & 14\\
\bottomrule
\end{tabular}    
\end{center}
\caption{}\label{tab:operations 3}
\end{subtable}
\begin{subtable}{0.24\textwidth}
\begin{center}
\begin{tabular}{c|cccc}
  & 0& 1& \cellcolor{blue!25}7 & \cellcolor{blue!25}8\\
\toprule
 0 & \cellcolor{red!25}0 & \cellcolor{red!25}1 & 7 & 8\\
 2 & \cellcolor{red!25}2 & \cellcolor{red!25}3 & 9 & 10\\
 \cellcolor{green!25} 4 & 4 & 5 & 11 & 12\\ 
 \cellcolor{green!25} 5 & 5 & 6 & 12 & 13\\
\bottomrule
\end{tabular}    
\end{center}
\caption{}\label{tab:operations 4}
\end{subtable}
\caption{In Table (a) we present an initial degree table. After one operation (1.) from Lemma \ref{lem:squeeze} we obtain Table (b). In Table (c) we present a degree table, which is actually the transpose of Table (b). After one operation (2.) from Lemma \ref{lem:squeeze} we obtain Table (d). Table (d) is a squeezed degree table. As illustrated in this figure, an operation of type (1.) on a degree table is equivalent to an operation of type (2.) on the transpose of that same degree table.}
\label{tab:operations}
\end{table*}

We illustrate the squeezing of a degree table in Table \ref{tab:operations}.

Lemma~\ref{lem:squeeze} allows us to bound the integral entries of a squeezed degree table.

\begin{corollary}\label{cor:finite}
Let $(\alpha,\beta) \in \A(K,L,T)$ be a squeezed degree table.
Using the unique sorting permutations $\pi$ and $\tau$ for $\alpha$ and $\beta$, we have
$(\pi\circ\alpha_{i+1}-1)+\min\{\SET(\beta)\}\le\pi\circ\alpha_i+\max\{\SET(\beta)\}$ for all integers $i$ with $1 \le i < K+T$
and
$(\tau\circ\beta_{i+1}-1)+\min\{\SET(\alpha)\}\le\tau\circ\beta_i+\max\{\SET(\alpha)\}$ for all integers $i$ with $1 \le i < L+T$.

In particular, we have
$\max\{\SET(\alpha)\}-\min\{\SET(\alpha)\} \le (K+T-1)(\max\{\SET(\beta)\}-\min\{\SET(\beta)\}+1)$
and
$\max\{\SET(\beta)\}-\min\{\SET(\beta)\} \le (K+T-1)(\max\{\SET(\alpha)\}-\min\{\SET(\alpha)\}+1)$.
\end{corollary}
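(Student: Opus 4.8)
The plan is to prove the first pair of inequalities directly from the definition of a squeezed degree table and then derive the ``in particular'' bounds as easy consequences. For the first statement, recall that $(\alpha,\beta)$ being squeezed means that neither operation (1.) nor operation (2.) of Lemma~\ref{lem:squeeze} is applicable. By definition, operation (1.) is applicable at index $i$ precisely when $p_i+B < (p_{i+1}-1)+b$, where $p=\pi\circ\alpha$ is the sorted version of $\alpha$ and $b=\min\{\SET(\beta)\}$, $B=\max\{\SET(\beta)\}$. Since no such $i$ exists, the negation of this strict inequality must hold for every $i$ with $1\le i<K+T$, that is $(p_{i+1}-1)+b\le p_i+B$, which is exactly the first asserted inequality written in terms of $\pi\circ\alpha$. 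The second inequality follows identically by swapping the roles of $\alpha$ and $\beta$ and using the non-applicability of operation (2.).

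For the ``in particular'' statement, I would telescope the chain of inequalities just obtained. From $(\pi\circ\alpha)_{i+1}\le (\pi\circ\alpha)_i+(B-b)+1$ for each $i$, I sum (or iterate) over $i=1,\ldots,K+T-1$, yielding
\begin{align*}
(\pi\circ\alpha)_{K+T}-(\pi\circ\alpha)_1 \le (K+T-1)\bigl((B-b)+1\bigr).
\end{align*}
Since $\pi$ is a sorting permutation, $(\pi\circ\alpha)_1=\min\{\SET(\alpha)\}$ and $(\pi\circ\alpha)_{K+T}=\max\{\SET(\alpha)\}$ (the vector $\alpha$ has $K+T$ entries, all distinct by condition~(1) of Definition~\ref{def:degree table}, so sorting genuinely places the extreme values at the endpoints). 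Substituting $B-b=\max\{\SET(\beta)\}-\min\{\SET(\beta)\}$ gives the claimed bound on the spread of $\SET(\alpha)$. The symmetric bound on the spread of $\SET(\beta)$ follows by the same telescoping applied to the second inequality.

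The main subtlety, rather than obstacle, is bookkeeping with the sorting permutations: I must confirm that the indices match up correctly, i.e.\ that the per-step inequalities are indexed by the \emph{sorted} positions and that the endpoints of the sorted vector are indeed the min and max of the entry set. This is guaranteed because $\alpha$ and $\beta$ each have distinct entries (conditions (1) and (2) of Definition~\ref{def:degree table}), so the sorting permutations $\pi$ and $\tau$ are unique and the sorted sequences are strictly increasing, making the telescoping both valid and tight at its endpoints. The argument is otherwise a direct unwinding of the definition of ``squeezed'' together with a one-line summation, so I expect no genuine difficulty beyond ensuring the direction of each inequality is preserved when negating the strict $<$ condition into a non-strict $\le$.
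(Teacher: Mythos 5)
Your proposal is correct and follows essentially the same route as the paper: the first pair of inequalities is just the negation of the applicability condition in Lemma~\ref{lem:squeeze} (which the paper dismisses as ``obvious''), and the spread bounds follow by the same telescoping sum over the sorted entries that the paper writes out. Your extra care about the sorting permutations placing $\min\{\SET(\alpha)\}$ and $\max\{\SET(\alpha)\}$ at the endpoints is sound and only makes explicit what the paper leaves implicit.
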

\begin{proof}
The first part is obvious.
For the second part:
\begin{align*}
&
\max\{\SET(\alpha)\}-\min\{\SET(\alpha)\}
=
\pi\circ\alpha_{K+T}-\pi\circ\alpha_{1}
=
\sum_{i=1}^{K+T-1} \pi\circ\alpha_{i+1}-\pi\circ\alpha_{i}
\\
&
\le
\sum_{i=1}^{K+T-1} \max\{\SET(\beta)\}-\min\{\SET(\beta)\}+1
=
(K+T-1)(\max\{\SET(\beta)\}-\min\{\SET(\beta)\}+1)
\end{align*}
The last part follows by a similar argument.
\end{proof}

Now, we define an equivalence relation on the space $\A(K,L,T)$.

\begin{definition}\label{def:equivalence}
Let $(\ap,\as,\bp,\bs)$ and $(\ap',\as',\bp',\bs')$ be in $\A(K,L,T)$.
Then $(\ap,\as,\bp,\bs)$ and $(\ap',\as',\bp',\bs')$ are called \emph{equivalent}, i.e. $(\ap,\as,\bp,\bs) \sim (\ap',\as',\bp',\bs')$, if there are rational numbers $r,s,t$ and permutations $\kappa\in\mathbb{S}_K, \lambda\in\mathbb{S}_T, \mu\in\mathbb{S}_L, \nu\in\mathbb{S}_T$ such that
\begin{align*}
&
(\ap,\as,\bp,\bs)
=
\\&
r \cdot (\kappa\circ\ap'+s\mathds{1}_K,\lambda\circ\as'+s\mathds{1}_T,\mu\circ\bp'+t\mathds{1}_L,\nu\circ\bs'+t\mathds{1}_T).
\end{align*}
\end{definition}

Thus, two degree tables with same parameters $K,L,T$ are equivalent, if and only if one can be transformed into the other by first permuting each of the four entries in the tuple separately, then translating the permuted $\alpha$ and $\beta$ by not necessary the same values and finally multiplying the permuted and translated $\alpha$ and $\beta$ by the same value.

We abbreviate $r \cdot (\sigma \circ (\alpha,\beta) + (s,t)) := r \cdot (\kappa\circ\ap+s\mathds{1}_K,\lambda\circ\as+s\mathds{1}_T,\mu\circ\bp+t\mathds{1}_L,\nu\circ\bs+t\mathds{1}_T)$ for a degree table $(\alpha,\beta)=(\ap,\as,\bp,\bs)$, where $\sigma = (\kappa,\lambda,\mu,\nu) \in \mathbb{S}_K \times \mathbb{S}_T \times \mathbb{S}_L \times \mathbb{S}_T$, and omit $r=1$, $\sigma=(id,id,id,id)$ or $(s,t)=(0,0)$.

\begin{lemma} \label{lem:equivalence relation}
The relation defined in Definition~\ref{def:equivalence} is an equivalence relation and it holds that, if $(\ap,\as,\bp,\bs) \sim (\ap',\as',\bp',\bs')$, then $\N(\ap,\as,\bp,\bs)=\N(\ap',\as',\bp',\bs')$.
\end{lemma}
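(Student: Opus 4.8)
The plan is to verify the two claims of Lemma~\ref{lem:equivalence relation} separately: first that $\sim$ is an equivalence relation, and second that $\N$ is invariant under $\sim$.

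For the equivalence relation, I would check reflexivity, symmetry, and transitivity directly from Definition~\ref{def:equivalence}. Reflexivity follows by taking $r=1$, $s=t=0$, and all four permutations equal to the identity. For symmetry, suppose $(\ap,\as,\bp,\bs) = r \cdot (\sigma \circ (\ap',\as',\bp',\bs') + (s,t))$ with $\sigma = (\kappa,\lambda,\mu,\nu)$; I would solve for the primed tuple, obtaining $(\ap',\as',\bp',\bs') = r^{-1} \cdot (\sigma^{-1} \circ (\ap,\as,\bp,\bs)) + (-s,-t)$ after rearranging, where $\sigma^{-1} = (\kappa^{-1},\lambda^{-1},\mu^{-1},\nu^{-1})$. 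The only subtlety is that scaling and translation must be applied in the correct order, so I would carefully track that $r^{-1}$ multiplies the whole expression and the shift becomes $(-s,-t)$ after dividing; since $r$ is a nonzero rational (it must be nonzero for the transformation to preserve distinctness), $r^{-1}$ is again rational. For transitivity, I would compose two such transformations and observe that the composite scalar is a product of rationals, the composite permutations are products in the respective symmetric groups, and the composite shifts combine linearly (with appropriate scaling), all of which stay within the allowed form.

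For the invariance of $\N$, the key observation is that $\N(\alpha,\beta) = |\SET(\alpha)+\SET(\beta)|$, so I would show that each of the three building-block operations---permutation, translation, and common scaling---induces a bijection on the sumset $\SET(\alpha)+\SET(\beta)$ and hence preserves its cardinality. Permuting the entries within $\ap,\as,\bp,\bs$ does not change the underlying sets $\SET(\alpha)$ and $\SET(\beta)$ at all, so the sumset is literally unchanged. Translating $\alpha \mapsto \alpha + s\mathds{1}$ and $\beta \mapsto \beta + t\mathds{1}$ sends every sumset element $i+j$ to $i+j+s+t$, which is a bijection of the sumset onto its shift by $s+t$. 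Finally, multiplying both $\alpha$ and $\beta$ by the same nonzero rational $r$ sends $i+j$ to $r(i+j)$, again a bijection. Composing these, the map $x \mapsto r(x+s+t)$ is a bijection from $\SET(\alpha')+\SET(\beta')$ onto $\SET(\alpha)+\SET(\beta)$, so the cardinalities agree.

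I expect the main obstacle to be a bookkeeping one rather than a conceptual one: ensuring that the scaling by $r$ keeps everything in a valid degree table (integrality is relaxed to rationals here, but $\N$ as a sumset cardinality is well-defined for rational entries, and the degree-table conditions (1)--(3) of Definition~\ref{def:degree table} are preserved under injective affine maps). I would note explicitly that for the equivalence-preserving operations to land back in $\A(K,L,T)$ one uses that affine bijections preserve distinctness and the uniqueness-of-representation condition, so the transformed tuple is again a legitimate degree table. The potentially delicate point in symmetry and transitivity is verifying that the shift parameters transform correctly under scaling---since the definition applies the shift \emph{before} scaling, inverting requires shifting by $-s,-t$ \emph{after} applying $r^{-1}$---but this is a routine computation once the order of operations is pinned down.
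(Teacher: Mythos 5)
Your proposal is correct and follows essentially the same route as the paper: reflexivity, symmetry, and transitivity are verified by exhibiting the explicit inverse and composite transformations, and invariance of $\N$ follows because the composite map $x \mapsto r(x+s+t)$ is a bijection of one sumset onto the other. Your attention to the order of scaling versus translation when inverting is warranted — the canonical-form inverse shift is $(-rs,-rt)$ rather than $(-s,-t)$ — but since any rational shift is admissible in Definition~\ref{def:equivalence}, this bookkeeping does not affect the conclusion.
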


\begin{proof}
See the Appendix.
\end{proof}

Our next goal is to introduce a canonical representative for any given degree table.

\begin{definition}
A degree table $(\ap,\as,\bp,\bs)$ is called \emph{normal}, if the entries of $\ap$, $\as$, $\bp$, $\bs$ are sorted in increasing order, respectively, the smallest entry in $(\ap,\as)$, respectively in $(\bp,\bs)$, is zero, and the greatest common divisor of $\SET(\ap|\as|\bp|\bs)$ is one.

The \emph{negate} degree table of $(\ap,\as,\bp,\bs)$ is $-(\ap,\as,\bp,\bs)$ translated by $\max \SET(\ap|\as|\bp|\bs)$.

A degree table $(\ap,\as,\bp,\bs)$ is called \emph{canonical} if it is normal and its normal negate degree table is lexicographically at least as large as $(\ap,\as,\bp,\bs)$.
\end{definition}

The definition immediately provides a construction for the normal degree table of $(\alpha,\beta)$, denoted by $\normal(\alpha,\beta)$, the negate degree table, denoted by $\negate(\alpha,\beta)$, and the canonical degree table, denoted by $\canonical(\alpha,\beta)$. Note that $(\alpha,\beta) \sim\normal(A)\sim \negate(\alpha,\beta)\sim\canonical(\alpha,\beta)$ and for any two equivalent degree table $(\alpha,\beta)\sim (\alpha',\beta')$, we have $\canonical(\alpha,\beta) = \canonical(\alpha',\beta')$ and $\normal(\alpha,\beta) \in \{\normal(\alpha',\beta'),\normal(\negate(\alpha',\beta'))\}$. Hence, without loss of generality, we can assume that any degree table is normal.

In the special case of $K=L$, it is also possible to augment the equivalence relation defined in Definition~\ref{def:equivalence} with an additional operation called transposition which interchanges $\alpha$ and $\beta$, see Table~\ref{tab: optimal} for an example.

\begin{table*}[!t]
\setlength{\tabcolsep}{0.33em}
\begin{subtable}{0.24\textwidth}
\begin{center}
\begin{tabular}{c|ccc}
  & 2& 6& \cellcolor{blue!25}10\\
\toprule
 19& \cellcolor{red!25}21& \cellcolor{red!25}25&29\\
 21& \cellcolor{red!25}23& \cellcolor{red!25}27&31\\
 1& \cellcolor{red!25}3& \cellcolor{red!25}7&11\\ 
 \cellcolor{green!25} 9&11&15&19\\
\bottomrule
\end{tabular}    
\end{center}
\caption{$(\alpha,\beta)$}\label{tab:forms 1}
\end{subtable}
\begin{subtable}{0.24\textwidth}
\begin{center}
\begin{tabular}{c|ccc}
  & 0& 4& \cellcolor{blue!25}8\\
\toprule
 0& \cellcolor{red!25}0& \cellcolor{red!25}4&8\\
 18& \cellcolor{red!25}18& \cellcolor{red!25}22&26\\
 20& \cellcolor{red!25}20& \cellcolor{red!25}24&28\\ 
 \cellcolor{green!25} 8&8&12&16\\
\bottomrule
\end{tabular}    
\end{center}
\caption{$\normal(\alpha,\beta)$}\label{tab:forms 2}
\end{subtable}
\begin{subtable}{0.24\textwidth}
\begin{center}
\begin{tabular}{c|ccc}
  & 10& 8& \cellcolor{blue!25}6\\
\toprule
 10& \cellcolor{red!25}20& \cellcolor{red!25}18&16\\
 1& \cellcolor{red!25}11& \cellcolor{red!25}9&7\\
 0& \cellcolor{red!25}10& \cellcolor{red!25}18&16\\ 
 \cellcolor{green!25} 6&16&14&12\\
\bottomrule
\end{tabular}    
\end{center}
\caption{$\negate \circ \normal (\alpha,\beta)$}\label{tab:forms 3}
\end{subtable}
\begin{subtable}{0.24\textwidth}
\begin{center}
\begin{tabular}{c|ccc}
  & 2& 4& \cellcolor{blue!25}0\\
\toprule
 0& \cellcolor{red!25}2& \cellcolor{red!25}4&0\\
 1& \cellcolor{red!25}3& \cellcolor{red!25}5&1\\
 10& \cellcolor{red!25}12& \cellcolor{red!25}14&10\\ 
 \cellcolor{green!25} 6&8&10&6\\
\bottomrule
\end{tabular}    
\end{center}
\caption{$\canonical(A)$}\label{tab:forms 4}
\end{subtable}
\caption{Different forms of the degree table $(\alpha,\beta)$.}
\label{tab:forms}
\end{table*}

\section{Bounds for the Degree Table} \label{sec: sumsets and lower bounds}

In this section, we present lower bounds for the number of distinct terms, $N$, in the degree table and upper bounds for its largest terms. The lower bounds are inherently interesting since, after all, our goal is to minimize $N$. The upper bounds on the entries are used in Section \ref{sec: linear programming} when formulating the degree table as an integer linear programming problem and allow us to find an optimal degree table via a finite search.

\subsection{Lower Bound on the Number of Distinct Terms} \label{sec: lower bounds n}

In this section, we prove the three inequalities in Theorem~\ref{theo:lb}. They are stated as Theorem~\ref{theo:degree tablelowerbound}, Corollary~\ref{cor:degree tablelowerbound_improvedbyone}, and Theorem~\ref{theo:degree tablelowerbound2_GASPoptT1}, respectively. The main technique behind the proof of these bounds comes from the theory of sumsets. In this context, the following lemma is well known (see~\cite{geroldinger2009combinatorial} and~\cite[Lemma~5.3, Proposition~5.8]{tao2006additive}). For completeness, we present a proof in the appendix. 

\begin{lemma}[\cite{geroldinger2009combinatorial,tao2006additive}]\label{lem:sumsetbound}
Let $A$ and $B$ be sets of integers.
Then $|A|+|B|-1 \le |A+B|$ and if $2 \le |A|, |B|$, then equality holds if and only if $A$ and $B$ are arithmetic progressions with the same common difference, i.e. $A=a+d[m]$ and $B=b+d[n]$ for $a,b,d,m,n \in \mathbb{Z}$.
\end{lemma}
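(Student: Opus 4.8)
The plan is to prove the cardinality bound by exhibiting an explicit monotone chain of sums, and then to treat the equality characterization in two parts: the easy (``if'') direction by a one-line computation, and the hard (``only if'') direction by induction on $|A|$.

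For the \textbf{lower bound}, I would sort both sets as $A = \{a_1 < a_2 < \cdots < a_s\}$ and $B = \{b_1 < b_2 < \cdots < b_t\}$ with $s = |A|$, $t = |B|$, and observe that
\[
a_1+b_1 < a_1+b_2 < \cdots < a_1+b_t < a_2+b_t < \cdots < a_s+b_t
\]
is a strictly increasing list of $t + (s-1) = s+t-1$ elements of $A+B$; hence $|A+B| \ge |A|+|B|-1$. For the \textbf{``if'' direction}, if $A = a+d[m]$ and $B = b+d[n]$ (so $|A|=m+1$ and $|B|=n+1$ in the convention $[k]=\{0,\ldots,k\}$), then $A+B = (a+b)+d[m+n]$ has exactly $m+n+1 = |A|+|B|-1$ elements, so equality holds.

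For the \textbf{``only if'' direction}, assume $2 \le |A|,|B|$ and $|A+B| = |A|+|B|-1$, and induct on $|A|$ with $B$ fixed; since sumset cardinalities are translation invariant, I normalize $\min A = \min B = 0$. In the \emph{base case} $|A|=2$, write $A=\{0,d\}$ so that $A+B = B \cup (d+B)$, and the equality forces $|B \cap (d+B)| = |B|-1$. The two forced witnesses of non-overlap are $0 \in B \setminus (d+B)$ and $d+\max B \in (d+B)\setminus B$; since exactly one element can be missing on each side, removing these and matching the remaining sorted elements yields $b_{k+1}=b_k+d$ for all $k$, so $B$ is an arithmetic progression of common difference $d$, matching $A$. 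In the \emph{inductive step} with $|A| \ge 3$, set $A' = A \setminus \{\max A\}$. Because $\max A + \max B$ is the maximum of $A+B$ and strictly exceeds everything in $A'+B$, we get $|A+B| \ge |A'+B|+1 \ge |A'|+|B|-1+1 = |A|+|B|-1$; equality then propagates, giving $|A'+B| = |A'|+|B|-1$ together with the fact that $\max A + B$ adds exactly one new element. By the induction hypothesis $A'$ and $B$ are arithmetic progressions of a common difference $d$, so $A'+B$ is an arithmetic progression of step $d$, and it remains only to locate $\max A$.

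The \textbf{main obstacle} is this final pinning-down step, which is where the delicate counting lives. I would first argue that $\max A \equiv \min A \pmod d$: since $A'+B$ lies in a single residue class modulo $d$, if $\max A$ were off that class then all $|B| \ge 2$ elements of $\max A + B$ would be new, contradicting that only one is new. Writing $\max A = \min A + kd$ with $k \ge |A|-1$ (forced by $\max A > \max A'$), the two progressions $\max A + B$ and $A'+B$ become index windows $\{k,\ldots,k+|B|-1\}$ and $\{0,\ldots,|A|+|B|-3\}$ scaled by $d$, and the number of new elements is $k-|A|+2$; setting this equal to $1$ forces $k = |A|-1$, i.e.\ $\max A = \max A' + d$. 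This makes $A$ an arithmetic progression of difference $d$, closing the induction. Everything outside this window count is routine; the care is entirely in verifying that the overlap is exactly $|B|-1$.
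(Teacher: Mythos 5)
Your proof is correct, but it follows a genuinely different route from the paper's. The paper proves the equality characterization non-inductively: it exhibits \emph{two} maximal strictly increasing staircase chains inside $A+B$, namely $M=\{a_0+b_0,a_0+b_1,\ldots,a_0+b_n,a_1+b_n,\ldots,a_m+b_n\}$ and $N=\{a_0+b_0,a_1+b_0,a_1+b_1,\ldots,a_1+b_{n-1},a_2+b_{n-1},\ldots,a_m+b_{n-1},a_m+b_n\}$, each of size $|A|+|B|-1$; when equality holds both must enumerate all of $A+B$ in increasing order, so matching entries position by position gives $a_0+b_i=a_1+b_{i-1}$ and $a_j+b_n=a_{j+1}+b_{n-1}$ simultaneously, hence all gaps equal $d=a_1-a_0$ in one stroke. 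Your argument instead inducts on $|A|$: the base case $|A|=2$ extracts the rigidity from $|B\cap(d+B)|=|B|-1$ via the sorted bijection $B\setminus\{0\}=d+(B\setminus\{\max B\})$, and the inductive step peels off $\max A$, uses the residue-class observation to force $\max A\equiv\min A\pmod d$, and pins $\max A=\max A'+d$ by the window count $k-|A|+2=1$. All of these steps check out (including the implicit normalizations $d>0$ and $|A'|\ge 2$). The paper's version is shorter and delivers both families of difference identities at once; yours is longer but makes the source of the rigidity more transparent and is the style of argument that extends to stronger inverse results (e.g.\ Freiman-type statements), at the cost of the somewhat delicate final pinning-down step that the two-chain trick avoids entirely.
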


\begin{proof}
See the Appendix.
\end{proof}

Using Lemma~\ref{lem:sumsetbound}, we give a lower bound on the sizes of degree tables.

\begin{theorem}\label{theo:degree tablelowerbound}
Let $(\alpha,\beta) \in \mathcal{A}(K,L,T)$. Then, it holds that $KL+\max\{K,L\}+2T-1 \le \N(\alpha,\beta)$.
\end{theorem}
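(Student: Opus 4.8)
=== PROOF PROPOSAL ===

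The plan is to apply Lemma~\ref{lem:sumsetbound} to suitable subsets of $\SET(\alpha)$ and $\SET(\beta)$, and then account separately for the entries coming from the random (suffix) parts of the polynomials, which must be distinct by the decodability and security conditions. Write $A_{\mathrm{p}} = \SET(\ap)$, $A_{\mathrm{s}} = \SET(\as)$, $B_{\mathrm{p}} = \SET(\bp)$, $B_{\mathrm{s}} = \SET(\bs)$, so that $\SET(\alpha) = A_{\mathrm{p}} \cup A_{\mathrm{s}}$ and $\SET(\beta) = B_{\mathrm{p}} \cup B_{\mathrm{s}}$, with $|A_{\mathrm{p}}| = K$, $|B_{\mathrm{p}}| = L$, and $|A_{\mathrm{s}}| = |B_{\mathrm{s}}| = T$ (all distinct by Definition~\ref{def:degree table}(1),(2)). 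The quantity we want to bound below is $|\SET(\alpha) + \SET(\beta)|$.

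First I would handle the ``prefix $\times$ prefix'' block, which corresponds to the red region of Table~\ref{problem}. By Lemma~\ref{lem:sumsetbound}, $|A_{\mathrm{p}} + B_{\mathrm{p}}| \ge K + L - 1$; however, condition~(3) of Definition~\ref{def:degree table} says that every element of $A_{\mathrm{p}} + B_{\mathrm{p}}$ has a \emph{unique} representation as $i + j$ with $i \in \SET(\alpha)$, $j \in \SET(\beta)$. This forces all $KL$ sums $\alpha_k + \beta_\ell$ (for $1 \le k \le K$, $1 \le \ell \le L$) to be distinct, so in fact $|A_{\mathrm{p}} + B_{\mathrm{p}}| = KL$. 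This is the crucial gain over the naive sumset bound: decodability upgrades $K + L - 1$ to the full product $KL$. To reach $KL + \max\{K,L\} + 2T - 1$, I then need to exhibit $\max\{K,L\} + 2T - 1$ further distinct sums lying outside $A_{\mathrm{p}} + B_{\mathrm{p}}$. Assuming without loss of generality $K \ge L$ (the paper's standing convention $L \le K$), I would look at the ``suffix'' contributions. Consider $A_{\mathrm{p}} + B_{\mathrm{s}}$: since $|B_{\mathrm{s}}| = T \ge 2$ is distinct and $|A_{\mathrm{p}}| = K$, Lemma~\ref{lem:sumsetbound} gives at least $K + T - 1$ distinct sums in this block, and similarly $A_{\mathrm{s}} + B_{\mathrm{p}}$ contributes a block. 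The subtle point is to argue these newly counted sums are genuinely new, i.e.\ disjoint from $A_{\mathrm{p}} + B_{\mathrm{p}}$ and from each other, after reducing to a normal/squeezed representative via Lemma~\ref{lem:equivalence relation} so that the prefix entries are the ``small'' values and the suffix entries are pushed large.

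The cleanest route is probably to pass to a normal degree table (legitimate by Lemma~\ref{lem:equivalence relation}, since $\N$ is an equivalence invariant) where $\min \SET(\alpha) = \min \SET(\beta) = 0$, and then carefully order the relevant sums. The key counting identity I would aim for is: the largest block $A_{\mathrm{p}} + B_{\mathrm{p}}$ contributes $KL$ distinct values; then I select $K$ additional values of the form (largest prefix $\alpha$) $+$ (the $T$ suffix $\beta$'s) plus a shifted chain that stays above the red block, and symmetrically pick up the remaining $T$ from the other suffix, being careful that the ``$\max\{K,L\}$'' term and the ``$2T$'' term do not double-count an overlap of size one — which is exactly why the bound reads $+2T - 1$ rather than $+2T$. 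I expect the main obstacle to be precisely this disjointness bookkeeping: showing that the suffix-induced sums cannot collide with the $KL$ red-block sums nor with one another except for an unavoidable single coincidence. I would manage this by sorting $\alpha$ and $\beta$ and using a telescoping/extremal-element argument (the largest sum $\max \SET(\alpha) + \max \SET(\beta)$ is attained uniquely, and peeling off maximal elements one at a time produces a strictly decreasing sequence of distinct sums), which converts the geometric picture of Table~\ref{problem} into a rigorous injective count of $KL + \max\{K,L\} + 2T - 1$ distinct elements of the sumset.
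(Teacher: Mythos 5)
Your first half is sound and matches the paper: by Property~(3) of Definition~\ref{def:degree table}, every sum in $\SET(\ap)+\SET(\bp)$ has a unique representation, so that block has exactly $KL$ elements and is disjoint from every sum that uses a suffix index; it then suffices to exhibit $\max\{K,L\}+2T-1$ distinct sums outside it. The gap is in how you propose to produce those. You apply Lemma~\ref{lem:sumsetbound} to $\SET(\ap)+\SET(\bs)$ to get $K+T-1$ elements and then plan to collect $T$ more from ``the other suffix,'' deferring to ``disjointness bookkeeping'' with ``an unavoidable single coincidence.'' That premise is false in general: the blocks $\SET(\ap)+\SET(\bs)$, $\SET(\as)+\SET(\bp)$ and $\SET(\as)+\SET(\bs)$ can overlap in many elements, not just one --- this is exactly what the gray cells in Table~\ref{tab:KLT4_example} record, and maximizing such overlaps is the whole design goal of \GASP{r} --- so no amount of bookkeeping will show they meet in a single point. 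Your fallback of peeling off maximal elements of the whole table also cannot work on its own: a monotone staircase through a $(K+T)\times(L+T)$ table has length at most $K+L+2T-1$, far short of the target.

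The missing idea --- and the paper's actual proof --- is to group the sets differently and make \emph{one} application of Lemma~\ref{lem:sumsetbound}, to the pair $\SET(\ap|\as)$ and $\SET(\bs)$ of sizes $K+T$ and $T$. This single sumset has at least $(K+T)+T-1=K+2T-1$ elements with no internal disjointness to check, and it is entirely disjoint from $\SET(\ap)+\SET(\bp)$ by Property~(3), since each of its sums uses some $j\in\SET(\bs)$. Hence $\N(\alpha,\beta)\ge KL+K+2T-1$; symmetrically, using $\SET(\as)+\SET(\bp|\bs)$ gives $\N(\alpha,\beta)\ge KL+L+2T-1$, and taking the larger of the two finishes the proof. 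All the ingredients are in your write-up; the step you cannot complete disappears once the lemma is applied to the union $\SET(\ap)\cup\SET(\as)$ rather than to the sub-blocks separately. (Minor point: $T\ge 2$ is neither given nor needed for the inequality in Lemma~\ref{lem:sumsetbound}; only the equality characterization requires it.)
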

\begin{proof}
Let $(\alpha,\beta) = (\ap,\as,\bp,\bs)$.
We obtain, by omitting $\SET(\as)+\SET(\bp)$ and by Property~(3) of Definition~\ref{def:degree table} that
\begin{align*}
\SET(\ap|\as)+\SET(\bp|\bs)
&=
(\SET(\ap)+\SET(\bp)) \dot\cup \Big((\SET(\as)+\SET(\bp))\cup (\SET(\ap|\as)+\SET(\bs))\Big)
\\
&
\supseteq
(\SET(\ap)+\SET(\bp)) \dot\cup (\SET(\ap|\as)+\SET(\bs))
\end{align*}
Again by Property~(3) of Definition~\ref{def:degree table}, we have $|\SET(\ap)+\SET(\bp)|=KL$ and $|\SET(\ap|\as)+\SET(\bs)| \ge (K+T)+(T)-1$ follows by Lemma~\ref{lem:sumsetbound}.
Hence, we get $\N(\ap,\as,\bp,\bs) \ge KL+K+2T-1$ and $\N(\ap,\as,\bp,\bs) \ge KL+L+2T-1$ follows by the same argumentation and omitting $\SET(\ap)+\SET(\bs)$.
This concludes the proof.
\end{proof}


The bound in Theorem \ref{theo:degree tablelowerbound} is tight if $\min\{K,L\}=1$ and is attained by $\mathsf{GASP}_{1}$. The application of Lemma~\ref{lem:sumsetbound} also allows us to get some insight into the structure of degree tables whenever the bound in Theorem~\ref{theo:degree tablelowerbound} is tight.

\begin{lemma}\label{lem:degree tablelowerbound_structure}
Let $K$, $L$, and $T$ be positive integers with $L \le K$ and $(\ap,\as,\bp,\bs)$ be a degree table in $\A(K,L,T)$ with $\N(\ap,\as,\bp,\bs) = KL+K+2T-1$.
Then there are nonnegative integers $a$, $b$, $d$, $x$, $y$, $e$ such that $1 \le d$, $n=T-1$, $m=K+T-1$ and
\begin{enumerate}
\item $\SET(\ap|\as) = \SET(\ap) \dot\cup \SET(\as) = a+d[m]$,
\item $\SET(\bs) = b+d[n]$,
\item $\SET(\as)+\SET(\bp) \subseteq \SET(\ap|\as)+\SET(\bs) = a+b+d[m+n]$,
\item $\SET(\bp) \subseteq b+d(\{-K,\ldots,-1\} \dot\cup \{n+1,\ldots,m+n\})$,
\item $\SET(\ap)+\SET(\bp) \subseteq \SET(\ap|\as)+\SET(\bp) = a+b+d\{-K,\ldots,2m+n\}$,
\item $|\SET(\ap)+\SET(\bp)| \le |a+b+d\{-K,\ldots,2m+n\}| = 2m+n+K+1$, and
\item $KL \le 3K+3T-2$, i.e. $L \le 3+\frac{3T-2}{K}$.
\end{enumerate}
If additionally $K=L$, then
\begin{enumerate}
\setcounter{enumi}{7}
\item $\SET(\as)=x+e[n]$ and $\SET(\beta)=y+e[m]$,
\item $\SET(\alpha)+\SET(\bs) = \SET(\as)+\SET(\beta) = a+b+d[m+n] = x+y+e[m+n]$, i.e. $d=e$,
\item $\SET(\alpha)+\SET(\beta) = a+y+d[2m]$, and
\item $K=L=1$.
\end{enumerate}
\end{lemma}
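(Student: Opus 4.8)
The plan is to convert the hypothesis $\N=KL+K+2T-1$ into rigidity through the equality clause of Lemma~\ref{lem:sumsetbound}, and then, in the case $K=L$, to run the same analysis a second time with the roles of $\alpha$ and $\beta$ exchanged.

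First I would revisit the decomposition from the proof of Theorem~\ref{theo:degree tablelowerbound}, namely $\SET(\alpha)+\SET(\beta)=(\SET(\ap)+\SET(\bp))\,\dot\cup\,R$ with $R=(\SET(\as)+\SET(\bp))\cup(\SET(\ap|\as)+\SET(\bs))$, where the first block has exactly $KL$ elements by Property~(3) of Definition~\ref{def:degree table}. Since $\N=KL+K+2T-1$, we get $|R|=(K+T)+T-1$, while $R\supseteq\SET(\ap|\as)+\SET(\bs)$ and $|\SET(\ap|\as)+\SET(\bs)|\ge (K+T)+T-1$ by Lemma~\ref{lem:sumsetbound}. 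Both inequalities must therefore be tight: the set equality $R=\SET(\ap|\as)+\SET(\bs)$ gives the inclusion $\SET(\as)+\SET(\bp)\subseteq\SET(\ap|\as)+\SET(\bs)$, and the size equality, via the equality clause of Lemma~\ref{lem:sumsetbound} (whose hypothesis $|\SET(\bs)|=T\ge 2$ I assume here, treating $T=1$ separately), forces $\SET(\ap|\as)$ and $\SET(\bs)$ to be arithmetic progressions with a common difference $d\ge 1$. Spelling this out gives items~1--3, and Property~(1) makes $\SET(\ap),\SET(\as)$ a disjoint partition of $a+d[m]$.

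Next I would locate $\SET(\bp)$. Writing $\SET(\ap)=a+dI_p$ and $\SET(\as)=a+dI_s$ with $I_p\,\dot\cup\,I_s=[m]$ and $|I_p|=K$, the inclusion from item~3 forces, for each $b_p\in\SET(\bp)$ and each $i\in I_s$, that $a+d\,i+b_p\in a+b+d[m+n]$; reducing modulo $d$ gives $b_p=b+d\,c_p$ for an integer $c_p$ with $-\min I_s\le c_p\le m+n-\max I_s$. A pigeonhole count on $\{0,\dots,K\}$ shows $\min I_s\le K$, hence $c_p\ge -K$, while $\max I_s\ge 0$ gives $c_p\le m+n$; disjointness of $\bp$ and $\bs$ (Property~(2)) removes $c_p\in\{0,\dots,n\}$, which is exactly item~4. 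Adding $\SET(\ap|\as)=a+d[m]$ places the red block inside the interval $a+b+d\{-K,\dots,2m+n\}$ (items~5--6), and since that block carries $KL$ distinct values by Property~(3), comparing cardinalities yields item~7, $KL\le 3K+3T-2$. For the case $K=L$ I would exploit that the symmetric bound $KL+L+2T-1$ coincides with $KL+K+2T-1$, so the entire argument reruns after interchanging $\alpha$ and $\beta$ (equivalently, deleting $\SET(\ap)+\SET(\bs)$ from the decomposition). This produces item~8: $\SET(\as)$ and $\SET(\beta)$ are arithmetic progressions with a common difference $e$, and $\SET(\as)+\SET(\bp|\bs)=x+y+e[m+n]$. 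The key step is then that item~3 gives $\SET(\as)+\SET(\bp|\bs)\subseteq\SET(\ap|\as)+\SET(\bs)$ while both sides are arithmetic progressions of the same length $m+n+1$; hence they coincide, and comparing common differences forces $d=e$ (item~9). Consequently $\SET(\alpha)+\SET(\beta)=(a+d[m])+(y+d[m])=a+y+d[2m]$ is a single arithmetic progression of $2m+1=2K+2T-1$ points (item~10), and equating this with $\N=K^2+K+2T-1$ gives $K^2=K$, i.e.\ $K=L=1$ (item~11).

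The main obstacle I anticipate is twofold. The conceptual crux is the collapse in the $K=L$ case: one must recognize that the two remainder progressions produced by the direct and by the transposed arguments are literally the same set, since that coincidence is what pins down $d=e$ and then forces the total sumset to be a single arithmetic progression of the wrong length unless $K=1$. The fiddly part is the index bookkeeping behind item~4 (the pigeonhole bounds on $\min I_s$ and $\max I_s$) together with the boundary case $T=1$, where $\SET(\bs)$ and $\SET(\as)$ are singletons and the equality clause of Lemma~\ref{lem:sumsetbound} no longer applies, so the arithmetic-progression conclusions there must be obtained by a separate, more elementary argument.
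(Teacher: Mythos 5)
Your proposal is correct and follows essentially the same route as the paper's own proof: the same decomposition of the sumset into the red block plus $(\SET(\as)+\SET(\bp))\cup(\SET(\ap|\as)+\SET(\bs))$, the same appeal to the equality clause of Lemma~\ref{lem:sumsetbound} to force arithmetic progressions, the same pigeonhole bound $\min I_s \le K$ for item~4, and the same transposed rerun plus cardinality comparison for items~8--11. The only divergence is that you explicitly flag the $T=1$ boundary case, where the hypothesis $2\le|\SET(\bs)|$ of the equality clause fails --- a point the paper's proof silently passes over --- and this caution is warranted, since for $T=1$ the arithmetic-progression conclusion can genuinely fail rather than merely require a separate argument.
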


\begin{proof}
See the Appendix.
\end{proof}

Using the equivalence relation in Definition~\ref{def:equivalence}, we can set, without loss of generality, $a=0$, $d=1$, and $b=K$ in Lemma~\ref{lem:degree tablelowerbound_structure}. The Bullet points~(7.) and~(11.) allow us to strengthen the bound of Theorem~\ref{theo:degree tablelowerbound}.

\begin{corollary}\label{cor:degree tablelowerbound_improvedbyone}
Let $(\alpha,\beta) \in \mathcal{A}(K,L,T)$ be such that $3\max\{K,L\}+3T-2 < KL$ or $2 \le K = L$. Then, $KL+\max\{K,L\}+2T \le \N(\alpha,\beta)$.
\end{corollary}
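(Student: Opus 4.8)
The plan is to argue by contradiction, leveraging the structural description of equality-attaining degree tables provided by Lemma~\ref{lem:degree tablelowerbound_structure}. Assuming without loss of generality that $L \le K$ (so that $\max\{K,L\} = K$, as in the standing convention of the paper), Theorem~\ref{theo:degree tablelowerbound} already guarantees $KL+K+2T-1 \le \N(\alpha,\beta)$. Since $\N(\alpha,\beta)$ is an integer, to improve this bound by one it suffices to show that the extreme case $\N(\alpha,\beta) = KL+K+2T-1$ is impossible under either of the two stated hypotheses. So I would suppose, for contradiction, that $\N(\ap,\as,\bp,\bs) = KL+K+2T-1$, which is precisely the hypothesis under which Lemma~\ref{lem:degree tablelowerbound_structure} supplies its list of structural consequences, and then derive a contradiction in each of the two regimes.

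First I would handle the regime $3\max\{K,L\}+3T-2 < KL$, i.e. $3K+3T-2 < KL$ under our convention. Bullet point~(7.) of Lemma~\ref{lem:degree tablelowerbound_structure} asserts that any degree table attaining $\N = KL+K+2T-1$ must satisfy $KL \le 3K+3T-2$. This directly contradicts the hypothesis $3K+3T-2 < KL$, so equality is impossible and hence $\N(\alpha,\beta) \ge KL+K+2T = KL+\max\{K,L\}+2T$. Next I would treat the regime $2 \le K = L$. Here bullet point~(11.) of Lemma~\ref{lem:degree tablelowerbound_structure}, which applies precisely because $K=L$, forces $K=L=1$; this contradicts $2 \le K=L$. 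Again equality cannot hold, and the same conclusion $\N(\alpha,\beta) \ge KL+\max\{K,L\}+2T$ follows.

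I do not anticipate a genuine obstacle in the argument itself: the corollary is a clean consequence of the heavy lifting already done in Lemma~\ref{lem:degree tablelowerbound_structure}, whose bullet points~(7.) and~(11.) were evidently isolated with exactly this strengthening in mind. The only points requiring mild care are (i) respecting the $L \le K$ normalization so that $\max\{K,L\}$ is consistently replaced by $K$ throughout, and thereby matching the hypothesis $3\max\{K,L\}+3T-2 < KL$ to the inequality in bullet~(7.), and (ii) using the integrality of $\N(\alpha,\beta)$ to upgrade ``$\N$ strictly exceeds $KL+K+2T-1$'' into ``$\N \ge KL+K+2T$.'' With these observations in place, the two cases close the proof.
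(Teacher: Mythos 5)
Your proposal is correct and is exactly the argument the paper intends: its proof simply cites Theorem~\ref{theo:degree tablelowerbound} together with Lemma~\ref{lem:degree tablelowerbound_structure}, and your write-up fills in precisely the intended details (ruling out the equality case via bullet points~(7.) and~(11.) under the respective hypotheses, then using integrality of $\N$).
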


\begin{proof}
This follows immediately by Theorem~\ref{theo:degree tablelowerbound} and Lemma~\ref{lem:degree tablelowerbound_structure}.
\end{proof}

For the lower bound in Theorem~\ref{theo:degree tablelowerbound2_GASPoptT1}, we need the following lemma.

\begin{lemma}\label{lem:row_col_intersection_size}
Let $(\ap,\as,\bp,\bs) \in \A(K,L,T)$. Then, for all $1 \le i \le K+T$ and all $1 \le j \le L+T$, it holds that $|(\alpha_i+\SET(\beta_p)) \cap (\SET(\alpha_p)+\beta_j)| \le 1$.
\end{lemma}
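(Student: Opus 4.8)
The plan is to argue by contradiction, reducing the claim to the uniqueness (decodability) condition, Property~(3) of Definition~\ref{def:degree table}. Here $\beta_p$ denotes the prefix $\bp$ and $\alpha_p$ the prefix $\ap$, so that $\SET(\beta_p)=\SET(\bp)$ and $\SET(\alpha_p)=\SET(\ap)$. Geometrically, $\alpha_i+\SET(\bp)$ is the set of entries in the first $L$ columns of row $i$ of the degree table, while $\SET(\ap)+\beta_j$ is the set of entries in the first $K$ rows of column $j$; the lemma asserts that this row-segment and column-segment share at most one common value.

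First I would suppose, for contradiction, that the intersection contains two distinct integers $n_1 \ne n_2$. By definition, each $n_s$ (for $s \in \{1,2\}$) admits representations $n_s = \alpha_i + b_s$ with $b_s \in \SET(\bp)$, and $n_s = a_s + \beta_j$ with $a_s \in \SET(\ap)$. Since $n_1 \ne n_2$ while $\alpha_i$ and $\beta_j$ are held fixed, these equalities force $b_1 \ne b_2$ and $a_1 \ne a_2$.

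The key step is the algebraic identity $a_1 + b_2 = a_2 + b_1$. Indeed, using $a_s = n_s - \beta_j$ and $b_s = n_s - \alpha_i$, both sides equal $n_1 + n_2 - \alpha_i - \beta_j$. Call this common value $m$. Since $m = a_1 + b_2 \in \SET(\ap) + \SET(\bp)$, Property~(3) of Definition~\ref{def:degree table} applies and guarantees a \emph{unique} pair $(i',j') \in \SET(\alpha)\times\SET(\beta)$ with $m = i' + j'$. But $a_1,a_2 \in \SET(\ap) \subseteq \SET(\alpha)$ and $b_1,b_2 \in \SET(\bp) \subseteq \SET(\beta)$ exhibit two such decompositions, $m = a_1 + b_2 = a_2 + b_1$; uniqueness then forces $a_1 = a_2$, contradicting $a_1 \ne a_2$. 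Hence the intersection cannot contain two distinct elements, which proves $|(\alpha_i+\SET(\bp)) \cap (\SET(\ap)+\beta_j)| \le 1$.

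I do not anticipate a genuine obstacle here: once the symmetry identity $a_1 + b_2 = a_2 + b_1$ is spotted, the decodability condition does all the work. The only point requiring care is the bookkeeping of set memberships, so that the uniqueness in Property~(3) — stated for elements of the full sumset $\SET(\alpha) + \SET(\beta)$ — is legitimately invoked for the ``off-diagonal'' sum $m=a_1 + b_2$, which must be verified to lie in the sub-sumset $\SET(\ap) + \SET(\bp)$ before the property can be applied.
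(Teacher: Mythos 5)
Your proof is correct and follows essentially the same route as the paper's: both derive the symmetry identity $a_1+b_2=a_2+b_1$ (the paper writes it as $\alpha_n+\beta_o=\alpha_p+\beta_m$) from two putative elements of the intersection and then invoke the uniqueness in Property~(3) of Definition~\ref{def:degree table} to reach a contradiction. The only cosmetic difference is that the paper first splits into cases according to whether $i\le K$ and/or $j\le L$ before running this argument, whereas your version handles all cases uniformly since it only uses $a_s\in\SET(\ap)$ and $b_s\in\SET(\bp)$.
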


\begin{proof}
See the Appendix.
\end{proof}

Lemma~\ref{lem:row_col_intersection_size} provides a new lower bound which shows the optimality of $\mathsf{GASP}_r$ for $T=1$.

\begin{theorem}\label{theo:degree tablelowerbound2_GASPoptT1}
Let $(\alpha,\beta) \in \A(K,L,T)$. Then, $KL+K+L+2T-1-T\min\{K,L,T\} \le \N(\alpha,\beta)$. Thus, for $\ap=(0,1,\ldots,K-1)$, $\as=(KL)$, $\bp=(0,K,\ldots,K(L-1))$, and $\bs=(KL)$, the degree table $(\ap,\as,\bp,\bs) \in \A(K,L,1)$ is of minimum size.
\end{theorem}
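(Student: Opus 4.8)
The plan is to prove the lower bound $KL+K+L+2T-1-T\min\{K,L,T\} \le \N(\alpha,\beta)$ by carefully counting the distinct terms of the degree table while accounting for the worst-case overlaps between the ``row'' sumsets and ``column'' sumsets, using Lemma~\ref{lem:row_col_intersection_size} to control those overlaps. First I would decompose the full sumset $\SET(\alpha)+\SET(\beta)$ into three parts: the decodable block $\SET(\ap)+\SET(\bp)$, which by Property~(3) of Definition~\ref{def:degree table} contributes exactly $KL$ distinct terms that appear nowhere else in the table; the ``column'' contribution $\SET(\alpha)+\SET(\bs)$; and the ``row'' contribution $\SET(\as)+\SET(\bp)$. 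Writing $U = \SET(\alpha)+\SET(\bs)$ and $V = \SET(\as)+\SET(\bp)$, the count becomes
\begin{align*}
\N(\alpha,\beta) \ge KL + |U \cup V| = KL + |U| + |V| - |U \cap V|.
\end{align*}
By Lemma~\ref{lem:sumsetbound} and Property~(3), $|U| = |\SET(\alpha)+\SET(\bs)| \ge (K+T)+T-1 = K+2T-1$ (with the $T$-security condition ensuring $\SET(\bs)$ has size $T$), and symmetrically $|V| \ge L+2T-1$; the real work is bounding $|U \cap V|$ from above.

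To bound $|U\cap V|$, I would exploit the structure of Lemma~\ref{lem:row_col_intersection_size}, which states that for each fixed row index $i$ and column index $j$, the intersection $(\alpha_i+\SET(\bp)) \cap (\SET(\ap)+\beta_j)$ has at most one element. The intuition is that $V = \SET(\as)+\SET(\bp)$ is a union over the $T$ rows indexed by $\as$ of the row-sets $\alpha_i + \SET(\bp)$, while $U = \SET(\alpha)+\SET(\bs)$ is a union over the $T$ columns indexed by $\bs$ of the column-sets $\SET(\ap) + \beta_j$ (together with the $\SET(\as)+\SET(\bs)$ corner). An element of $U\cap V$ lying in the body of the table must lie in some row $\alpha_i+\SET(\bp)$ with $\alpha_i \in \SET(\as)$ and some column $\SET(\ap)+\beta_j$ with $\beta_j \in \SET(\bs)$; since there are $T$ such rows and $T$ such columns, and each row-column pair contributes at most one common element by Lemma~\ref{lem:row_col_intersection_size}, the total contribution is at most $T \cdot T = T^2$. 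One must also verify that this bound can be replaced by $T\min\{K,L\}$ when $K$ or $L$ is smaller than $T$, since a single row-set $\alpha_i+\SET(\bp)$ has only $L$ elements and a column-set has only $K$ elements, so each of the $T$ columns can meet the rows in at most $\min\{K,L\}$ places (or each of the $T$ rows meets the columns in at most $\min\{K,L\}$ places). Combining, $|U\cap V| \le T\min\{K,L,T\}$, and substituting the three bounds gives
\begin{align*}
\N(\alpha,\beta) \ge KL + (K+2T-1) + (L+2T-1) - T\min\{K,L,T\} - (2T-1),
\end{align*}
where the subtracted $2T-1$ corrects for the double-count of the single corner region $\SET(\as)+\SET(\bs)$ that sits in both $U$ and $V$; after simplification this yields exactly $KL+K+L+2T-1-T\min\{K,L,T\}$.

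The main obstacle I anticipate is the careful bookkeeping of which terms are shared between $U$ and $V$ versus double-counted within the inclusion-exclusion, particularly the corner block $\SET(\as)+\SET(\bs)$ and ensuring the final constant comes out to $2T-1$ rather than an off-by-$T$ error; getting the exact arithmetic of the overlap count to match the stated bound requires pinning down precisely how Lemma~\ref{lem:row_col_intersection_size} interacts with the sizes of the index sets in all three regimes $T\le \min\{K,L\}$, $\min\{K,L\} < T$, etc. For the optimality claim in the case $T=1$, I would simply instantiate $\ap=(0,1,\ldots,K-1)$, $\bp=(0,K,\ldots,K(L-1))$, $\as=\bs=(KL)$ and directly compute $|\SET(\alpha)+\SET(\beta)|$: the decodable block fills $\{0,\ldots,KL-1\}$, and the single random row and column plus the corner add exactly $K+L+1$ new terms, matching the bound $KL+K+L+2\cdot1-1-1\cdot\min\{K,L,1\} = KL+K+L$ when $\min\{K,L,1\}=1$, thereby confirming both the achievability and that $\mathsf{GASP}_1$ is optimal for $T=1$.
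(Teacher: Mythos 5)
There is a genuine gap in your treatment of the overlap. Write $U=\SET(\alpha)+\SET(\bs)$ and $V=\SET(\as)+\SET(\bp)$ as you do. First, two bookkeeping errors: Lemma~\ref{lem:sumsetbound} gives only $|V|\ge L+T-1$, not $L+2T-1$ (the factor $\SET(\as)$ has just $T$ elements), and the identity $|U\cup V|=|U|+|V|-|U\cap V|$ is exact, so the extra subtraction of $2T-1$ ``for the corner'' has no justification --- the corner $\SET(\as)+\SET(\bs)$ lies in $U$ but not in $V$, so it is never double-counted between them. These two errors are each of size about $T$ and happen to cancel in your final formula, but neither step is valid. Second, and more substantively, your bound $|U\cap V|\le T\min\{K,L,T\}$ only accounts for elements of $V$ that coincide with some column set $\SET(\ap)+\beta_j$ with $\beta_j\in\SET(\bs)$; that is the only situation Lemma~\ref{lem:row_col_intersection_size} controls. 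An element $\alpha_i+\beta_j\in V$ (with $\alpha_i\in\SET(\as)$, $\beta_j\in\SET(\bp)$) may instead coincide with an element of the corner $\SET(\as)+\SET(\bs)\subseteq U$, and Lemma~\ref{lem:row_col_intersection_size} says nothing about such coincidences: a whole row $\alpha_i+\SET(\bp)$ can sit inside the corner, so this uncontrolled contribution is not absorbed by $T\min\{K,L,T\}$.

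The paper's proof sidesteps both issues by being less ambitious about $V$: it retains only the single row $\alpha_{K+1}+\SET(\bp)$, which has exactly $L$ elements, and bounds its intersection with $U$ column by column. The decisive step is the coupling of the two cases within one column: $|(\alpha_{K+1}+\SET(\bp))\cap(\SET(\ap)+\beta_i)|=z_i\in\{0,1\}$ by Lemma~\ref{lem:row_col_intersection_size}, while $|(\alpha_{K+1}+\SET(\bp))\cap(\SET(\as)+\beta_i)|\le\min\{T-1,\,L-z_i\}$, so each column of $\bs$ contributes at most $\min\{T-1+z_i,L\}\le\min\{T,L\}$ to the overlap, giving $T\min\{T,L\}$ in total; the $\min\{T,K\}$ variant follows by exchanging the roles of $\alpha$ and $\beta$, and taking the better of the two yields $T\min\{K,L,T\}$. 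To repair your argument you would need an analogous control on the overlap of $V$ with the corner, or you should retreat to a single row/column as the paper does. Finally, a small slip in your $T=1$ check: the new terms number $K+L$, not $K+L+1$ (the element $KL$ is shared by the new row and the new column, and the corner contributes one further element $2KL$), so the total $KL+K+L$ does match the lower bound.
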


\begin{proof}
See the Appendix.
\end{proof}

The bound in Theorem~\ref{theo:degree tablelowerbound2_GASPoptT1} is stronger than the bound in Theorem~\ref{theo:degree tablelowerbound} if and only if
\begin{align*}
KL+\max\{K,L\}+2T-1 < KL+K+L+2T-1-T\min\{K,L,T\}.
\end{align*}
This is equivalent to $T^2 < \min\{K,L\}$.

\subsection{A Quadratic Upper Bound on the Entries of the Degree Table for Large $T$} \label{sec: linear bounds for entries}

In this section, we present an upper bound on the entries of the degree table for large values of the security parameter $T$. This upper bound is quadratic in the partitioning parameters, $K$ and $L$, and the security parameter, $T$. Upper bounding the entries in the degree table allows us to transform the problem into a finite search, as we do in Section \ref{sec: linear programming}. Our main tool in this section is the following corollary from \cite{MR1379398}, which we cite as a theorem due to its importance. For a finite set of integers $S$ we use the abbreviation $\ell(S)=\max S-\min S$.

\begin{theorem}[Corollary~2 in~\cite{MR1379398}]\label{theo:Stanchescu}
Let $A=\{a_1,\ldots,a_k\}$ and $B=\{b_1,\ldots,b_l\}$ be two sets of integers and $\delta= \mathbf{1}_{\ell(A)=\ell(B)}$.
Denote by $d$ the greatest common divisor of $a_2-a_1,\ldots,a_k-a_1,b_2-b_1,\ldots,b_l-b_1$ and put $a=\ell(A)/d$, $b=\ell(B)/d$.
If $N=|A+B|<|A|+|B|+\min\{|A|,|B|\}-2-\delta$, then $a\le N-l$ and $b \le N-k$.
\end{theorem}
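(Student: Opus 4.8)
The plan is to reduce both conclusions to a single sharp sumset lower bound and then establish that bound by induction, with the extremal cases controlled by Lemma~\ref{lem:sumsetbound}. First I would normalize: replacing $A$ by $(A-\min A)/d$ and $B$ by $(B-\min B)/d$ is an affine change that leaves $N=|A+B|$, $k:=|A|$, $l:=|B|$, and $\delta$ unchanged, makes the gcd of all differences equal to $1$, and turns $a$ and $b$ into $\max A$ and $\max B$. Thus I may assume $\min A=\min B=0$, so that $a=\ell(A)$ and $b=\ell(B)$, and the two claims become $a+l\le N$ and $b+k\le N$. Since the threshold $|A|+|B|+\min\{|A|,|B|\}-2-\delta$, the quantity $N$, and $\delta$ are all symmetric under interchanging $A$ and $B$, and this interchange swaps the two claims, it suffices to prove just one of them, say $a+l\le N$.

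I would then recast the goal as the unconditional bound
\[
|A+B|\ \ge\ \min\bigl\{\,\ell(A)+|B|,\ \ |A|+|B|+\min\{|A|,|B|\}-2-\delta\,\bigr\}.
\]
This implies the theorem: under the hypothesis $N<|A|+|B|+\min\{|A|,|B|\}-2-\delta$ the second term strictly exceeds $N$, so the inequality forces $N\ge\ell(A)+|B|=a+l$. The degenerate cases $k=1$ or $l=1$ need no argument, since then $N=|A+B|=\max\{k,l\}$ already equals the threshold plus $\delta$, making the hypothesis vacuous; so I would assume $k,l\ge 2$.

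The engine for the displayed inequality is a sweep over the translates $A+b_j$, where $B=\{0=b_1<\dots<b_l=b\}$: every element of $A+b_j$ exceeding $\max(A+b_{j-1})=a+b_{j-1}$ is new to $\bigcup_{i<j}(A+b_i)$, and there are $|A\cap(a-(b_j-b_{j-1}),\,a]|\ge 1$ such elements. I would run this inside an induction on $k+l$: peeling off the top point $b=\max B$ (or, symmetrically, an extreme point of $A$) and re-normalizing relates $|A+B|$ to $|A+(B\setminus\{b\})|$ plus the overhang above, and one checks that this gain keeps both terms of the $\min$ in step as $k+l$ grows. The base cases, where a translate sweeps trivially, are handled directly by the Cauchy--Davenport-type bound $|A+B|\ge|A|+|B|-1$ of Lemma~\ref{lem:sumsetbound}.

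The main obstacle is the tight regime, where $A$ is spread out relative to its cardinality so that the second term of the $\min$ is binding: there the crude overhang count yields only $|A+B|\ge|A|+|B|-1$ and must be improved by the full $\min\{|A|,|B|\}-1-\delta$. This is exactly where the inverse structure enters—one uses the equality characterization in Lemma~\ref{lem:sumsetbound} (equality forces $A$ and $B$ to be arithmetic progressions with a common difference) to exclude the extremal configurations and, in the case $\ell(A)=\ell(B)$, to extract the extra unit recorded by $\delta$. Making this case analysis precise and tracking the $\delta$ correction (together with the re-normalization of the gcd after peeling a point) through the induction is the only genuinely technical part; the normalization and symmetry reductions are routine.
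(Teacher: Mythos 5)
First, a point of reference: the paper does not prove Theorem~\ref{theo:Stanchescu} at all; it is imported verbatim as Corollary~2 of \cite{MR1379398} and used as a black box, so there is no in-paper proof to compare against, and you are in effect proposing to reprove a published inverse sumset theorem of Freiman ``$3k-4$'' type. Your reductions are sound: the affine normalization, the symmetry argument reducing to one of the two conclusions, and the reformulation of the goal as the unconditional bound $|A+B|\ge\min\{\ell(A)+|B|,\ |A|+|B|+\min\{|A|,|B|\}-2-\delta\}$ are all correct.

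The gap is that everything after that reduction is a plan rather than a proof, and the tools you propose do not suffice to execute it. The sweep over the translates $A+b_j$ yields $|A+B|\ge |A|+\sum_{j\ge 2}\bigl|A\cap(\ell(A)-(b_j-b_{j-1}),\ \ell(A)]\bigr|\ge |A|+|B|-1$ and, in general, nothing more, since each overhang can consist of the single point $\max A$. Lemma~\ref{lem:sumsetbound} gives structural information only when $|A+B|$ equals $|A|+|B|-1$ exactly, whereas under the hypothesis of the theorem $|A+B|$ may be as large as $|A|+|B|+\min\{|A|,|B|\}-3-\delta$; in that range the lemma is silent, so it cannot ``exclude the extremal configurations'' as you claim. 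The known proofs instead reduce modulo $\ell(A)$ (so that $\min A$ and $\max A$ collapse to one residue) and invoke Kneser's theorem in $\mathbb{Z}/a\mathbb{Z}$ --- a genuinely stronger ingredient than Lemma~\ref{lem:sumsetbound} --- and even then the naive count only delivers $a\le N-l+\delta$; removing that extra $\delta$ is precisely the refinement that is the content of \cite{MR1379398}. Finally, the induction on $k+l$ is unverified at its only nontrivial point: peeling $\max B$ can simultaneously change the gcd $d$ (forcing a rescaling of $\ell(A)$), flip $\delta$, and decrease $\min\{|A|,|B|\}$, and the assertion that the overhang ``keeps both terms of the $\min$ in step'' is exactly the statement that needs proof. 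As written, the proposal defers the entire mathematical content of the theorem to ``one checks'' steps.
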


This theorem has consequences for normal degree tables that allow us to bound the largest integer in the degree table, provided that a suitable upper bound on $\N(\alpha,\beta)$ is known, e.g. via Theorem~\ref{theo:GASP_r_N}. Thus, the determination of $\N(\alpha,\beta)$ and its associated binary linear program, which we present in Theorem~\ref{theo:BLP}, are finite problems.

\begin{lemma}\label{lem:largeT}
Let $(\alpha,\beta) \in \A(K,L,T)$ be a normal degree table and $\delta= \mathbf{1}_{\max\{\SET(\alpha)\}=\max\{\SET(\beta)\}}$.
If $\N(\alpha,\beta) \le K+L+\min\{K,L\}+3T-3-\delta$, then $\max\{\SET(\alpha)\} \le \N(\alpha,\beta)-L-T$ and $\max\{\SET(\beta)\} \le \N(\alpha,\beta)-K-T$.
\end{lemma}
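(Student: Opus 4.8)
The plan is to recognize this lemma as an immediate consequence of Theorem~\ref{theo:Stanchescu} applied to the two sets $A = \SET(\alpha)$ and $B = \SET(\beta)$, so that essentially all the work lies in verifying that the hypotheses of that theorem are met and that its conclusion reads off exactly as stated.

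First I would record the basic combinatorial data. By Properties~(1) and~(2) of Definition~\ref{def:degree table}, the entries of $\alpha=(\ap|\as)$ and of $\beta=(\bp|\bs)$ are pairwise distinct, so $|A| = K+T$ and $|B| = L+T$; and by the definition of $\N(\alpha,\beta)$ we have $|A+B| = \N(\alpha,\beta)$. In the notation of Theorem~\ref{theo:Stanchescu} this means $k=K+T$, $l=L+T$, and $N=\N(\alpha,\beta)$, while $\min\{|A|,|B|\} = \min\{K,L\}+T$.

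Next I would translate the numerical hypothesis. The threshold appearing in Theorem~\ref{theo:Stanchescu} is $|A|+|B|+\min\{|A|,|B|\}-2-\delta = K+L+\min\{K,L\}+3T-2-\delta$, so the assumption $\N(\alpha,\beta) \le K+L+\min\{K,L\}+3T-3-\delta$ is precisely the assertion that $N$ lies \emph{strictly} below this threshold, which is what the theorem requires; this is also where the $-3$ rather than $-2$ in the hypothesis comes from, since the cited inequality is strict. At this step I must confirm that the two conventions for $\delta$ agree: the lemma uses $\delta = \mathbf{1}_{\max\{\SET(\alpha)\}=\max\{\SET(\beta)\}}$, whereas the theorem uses $\delta = \mathbf{1}_{\ell(A)=\ell(B)}$ with $\ell(S)=\max S-\min S$. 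Because $(\alpha,\beta)$ is normal, $\min A = \min B = 0$, so $\ell(A)=\max\{\SET(\alpha)\}$ and $\ell(B)=\max\{\SET(\beta)\}$, and the two definitions coincide.

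I would then use normality once more to evaluate the quantities $d$, $a$, $b$. Since $0 \in A \cap B$, the common difference $d=\gcd(a_2-a_1,\ldots,b_l-b_1)$ equals the gcd of all entries of $\SET(\ap|\as|\bp|\bs)$, which is $1$ by the definition of a normal degree table. Hence $a=\ell(A)/d=\max\{\SET(\alpha)\}$ and $b=\ell(B)/d=\max\{\SET(\beta)\}$, and the conclusions $a\le N-l$, $b\le N-k$ of Theorem~\ref{theo:Stanchescu} read exactly as $\max\{\SET(\alpha)\}\le\N(\alpha,\beta)-(L+T)$ and $\max\{\SET(\beta)\}\le\N(\alpha,\beta)-(K+T)$, which is the claim. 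There is no substantive obstacle here: the only points demanding care are matching the two meanings of $\delta$ and pinning down $d=1$, both of which are forced by the normality assumption, together with the bookkeeping that the strict inequality in the cited theorem corresponds to the $3T-3$ in the lemma.
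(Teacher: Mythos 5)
Your proof is correct and takes exactly the same route as the paper: both reduce the lemma to Theorem~\ref{theo:Stanchescu} applied to $\SET(\alpha)$ and $\SET(\beta)$, using normality to get $\min=0$, $\ell=\max$, and $d=1$. You simply spell out the bookkeeping (the cardinalities $K+T$ and $L+T$, the strict-versus-nonstrict threshold, and the agreement of the two $\delta$ conventions) in more detail than the paper does.
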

\begin{proof}
Since $(\alpha,\beta)$ is normal, $\min\{\SET(\alpha)\}=0$, $\ell(\SET(\alpha))=\max\{\SET(\alpha)\}$, $\min\{\SET(\beta)\}=0$, $\ell(\SET(\beta))=\max\{\SET(\beta)\}$, the greatest common divisor is one, it follows that $a=\max\{\SET(\alpha)\}$, and $b=\max\{\SET(\beta)\}$.
Then the definition of $\N(\alpha,\beta)$ and application of Theorem~\ref{theo:Stanchescu} complete the proof.
\end{proof}

We now use Theorem~\ref{theo:GASP_r_N} to upper bound the largest elements in $\alpha$ and $\beta$.

\begin{theorem}\label{theo_finiteness_by_sumsets}
Let $K,L,T$ be such that
\begin{align} \label{math_inline_stanchescu}
2KL-K-L-\min\{K,L\}+3 \le T.
\end{align}
Then, any normal degree table $(\alpha^*,\beta^*) \in \A(K,L,T)$ with $\N(\alpha^*,\beta^*) = \N(K,L,T)$ has the property that the largest entry in $\alpha^*$ is at most $2KL+T-1-L$ and the largest entry in $\beta^*$ is at most $2KL+T-1-K$.
\end{theorem}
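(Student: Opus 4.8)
The plan is to deduce the statement from Lemma~\ref{lem:largeT} applied to the optimal normal table $(\alpha^*,\beta^*)$. That lemma converts an upper bound on $\N(\alpha^*,\beta^*)=\N(K,L,T)$ that is linear in $T$ into exactly the two desired bounds on the largest entries, so the whole argument reduces to exhibiting one explicit degree table in $\A(K,L,T)$ that has few distinct terms \emph{and}, crucially, small largest entries. Since we assume $L\le K$ throughout, the natural candidate is $\mathsf{GASP}_{K}$ (that is, $\mathsf{GASP}_{\text{big}}$), whose suffix is a single contiguous block.

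First I would check admissibility and bound the number of terms. Definition~\ref{def:GASPr} requires $r=K\le\min\{K,T\}$, i.e. $K\le T$; rewriting hypothesis~\eqref{math_inline_stanchescu} with $\min\{K,L\}=L$ gives $T\ge 2KL-K-2L+3$, and since $2KL-K-2L+3-K=2(K-1)(L-1)+1>0$ we have $T>K$, so $\mathsf{GASP}_{K}\in\A(K,L,T)$. For the term count I would use a crude interval containment rather than the full formula of Theorem~\ref{theo:GASP_r_N}: for $r=K$ the set $\SET(\as)$ consists of the $T$ smallest elements of $KL+[K-1]+K\mathbb{Z}_{\ge 0}=\{KL,KL+1,\dots\}$, namely $\{KL,\dots,KL+T-1\}$, and together with $\bs=(KL,\dots,KL+T-1)$ this shows that both $\SET(\alpha)$ and $\SET(\beta)$ lie in $[0,KL+T-1]$. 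Hence $\SET(\alpha)+\SET(\beta)\subseteq[0,2KL+2T-2]$, giving $\N(K,L,T)\le\N(\mathsf{GASP}_{K})\le 2KL+2T-1$.

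Second, I would feed this into Lemma~\ref{lem:largeT}. Writing $\delta=\mathbf{1}_{\max\{\SET(\alpha^*)\}=\max\{\SET(\beta^*)\}}\le 1$, it suffices to verify $\N(K,L,T)\le K+L+\min\{K,L\}+3T-3-\delta$ even in the worst case $\delta=1$; by the previous step this is implied by $2KL+2T-1\le K+2L+3T-4$, which simplifies to exactly $2KL-K-2L+3\le T$, i.e. hypothesis~\eqref{math_inline_stanchescu}. Lemma~\ref{lem:largeT} then yields $\max\{\SET(\alpha^*)\}\le\N(K,L,T)-L-T$ and $\max\{\SET(\beta^*)\}\le\N(K,L,T)-K-T$, and substituting $\N(K,L,T)\le 2KL+2T-1$ gives the claimed bounds $2KL+T-1-L$ and $2KL+T-1-K$.

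The main obstacle is conceptual rather than computational: one must recognize that the right witness is precisely $\mathsf{GASP}_{\text{big}}$, whose exponents pack into the single interval $[0,KL+T-1]$, so that its term count \emph{and} its largest entry are controlled simultaneously; an ordinary \GASP{r} with small $r$ has a spread-out suffix and would violate the entry bound even though it is still a valid table. The only delicate point is the $\delta$-bookkeeping in the threshold of Lemma~\ref{lem:largeT}, where the margin in~\eqref{math_inline_stanchescu} must line up exactly with the $\delta=1$ case, which the computation above confirms.
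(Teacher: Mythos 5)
Your proof is correct and follows essentially the same route as the paper: exhibit $\mathsf{GASP}_{\text{big}}$ (your $\mathsf{GASP}_K$, valid since the hypothesis forces $K<T$) to get $\N(K,L,T)\le 2KL+2T-1$, then check that hypothesis~\eqref{math_inline_stanchescu} meets the threshold of Lemma~\ref{lem:largeT} even with $\delta=1$ and read off the two entry bounds. The paper's proof is terser (it cites Theorem~\ref{theo:GASP_r_N} for the size of $\mathsf{GASP}_{\text{big}}$ rather than your direct interval-containment argument), but the logical skeleton is identical.
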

\begin{proof}
$\mathsf{GASP}_{\text{big}}$ (which is $\mathsf{GASP}_{r}$ with $r=\min\{K,T\}$) constructs a degree table $(\alpha,\beta)$ of size $2KL+2T-1$, cf. Theorem~\ref{theo:GASP_r_N} or~\cite[Theorem~3]{MR1379398}, and hence we have $\N(K,L,T) \le 2KL+2T-1$.
Next, \eqref{math_inline_stanchescu} provides a sufficient criterion for the application of Lemma~\ref{lem:largeT}, which completes the proof.
\end{proof}

We want to remark that it is possible to improve all three inequalities in the statement of Theorem~\ref{theo_finiteness_by_sumsets} if better upper bounds on $N$ are used.

Theorem~\ref{theo_finiteness_by_sumsets} allows us to perform an exhaustive search of all normal degree tables $(\alpha^*,\beta^*)$ meeting \eqref{math_inline_stanchescu} with $\N(\alpha^*,\beta^*) = \N(K,L,T)$. For example, consider the case where $K=L=2$ and $T=5$. Then, Theorem~\ref{theo_finiteness_by_sumsets} implies that the largest entry of $\alpha$ and $\beta$ is $M=10$. Hence, $\alpha,\beta \subseteq \{0,\ldots,10\}$ and there are $\binom{(M+1)-1}{K+T-1} \cdot \binom{K+T}{T} = 4410$ possible choices for $\alpha$ (similar for $\beta$) such that $\ap,\as,\bp,\bs$ are sorted and $0 \in \alpha$ as well as $0 \in \beta$.

Combining all $4410^2$ possibilities for $(\alpha,\beta)$ and filtering the cases in which Definition~\ref{def:degree table} (3) is violated and such that $\operatorname{GCD}(\SET(\alpha\mid\beta))=1$, we find that there are 2716 normal degree tables with $K=L=2$ and $T=5$ and of those, only 4 attain the minimum value of $N=17$, the same value achieved by $\mathsf{GASP}_{r^*} = \mathsf{GASP}_{2} = \mathsf{GASP}_{\text{big}}$. Moreover, from the bounds in Theorem~\ref{theo:lb} we obtain $15 \le N$, $16 \le N$, and $7 \le N$. 

Thus, we just showed that for $K=L=2$ and $T=5$, \GASP{r} is optimal, and that the bounds in Theorem~\ref{theo:lb} are not tight in general. The four optimal degree tables are presented in Table \ref{tab: optimal}. 

\begin{table*}[!t]
\setlength{\tabcolsep}{0.33em}
\begin{subtable}{0.24\textwidth}
\resizebox{3.5cm}{!}{\begin{tabular}{c|ccccccc}
   &  7&  8&  \cellcolor{blue!25}0& \cellcolor{blue!25}1&\cellcolor{blue!25}2& \cellcolor{blue!25}3& \cellcolor{blue!25}4\\\toprule
 6&  \cellcolor{red!25}13&  \cellcolor{red!25}14&  6& 7&8&9&10\\
 8&  \cellcolor{red!25}15& \cellcolor{red!25}16&  8& 9&10&11&12\\
 \cellcolor{green!25}0&  7&  8& 0& 1&2&3&4\\
 \cellcolor{green!25}1&  8&  9& 1& 2&3&4&5\\
 \cellcolor{green!25}2& 9& 10&2&3&4&5&6\\
 \cellcolor{green!25}3& 10& 11& 3& 4&5&6&7\\
\cellcolor{green!25}4& 11& 12& 4& 5&6&7&8\\
\bottomrule
\end{tabular}}
\caption{}\label{tab:optimal 1}
\end{subtable}
\begin{subtable}{0.24\textwidth}
\resizebox{3.5cm}{!}{\begin{tabular}{c|ccccccc}
    &  6&  8&  \cellcolor{blue!25}0& \cellcolor{blue!25}1&\cellcolor{blue!25}2& \cellcolor{blue!25}3& \cellcolor{blue!25}4\\\toprule
 7&  \cellcolor{red!25}13& \cellcolor{red!25}15& 7&8&9&10&11\\
 8&  \cellcolor{red!25}14& \cellcolor{red!25}16&  8& 9&10&11&12\\
 \cellcolor{green!25}0&  6&  8& 0& 1&2&3&4\\
 \cellcolor{green!25}1&  7&  9& 1& 2&3&4&5\\
 \cellcolor{green!25}2& 8& 10&2&3&4&5&6\\
 \cellcolor{green!25}3& 9& 11& 3& 4&5&6&7\\
 \cellcolor{green!25}4& 10& 12& 4& 5&6&7&8\\
\bottomrule
\end{tabular}}
\caption{}\label{tab:optimal 2}
\end{subtable}
\begin{subtable}{0.24\textwidth}
\resizebox{3.65cm}{!}{\begin{tabular}{c|ccccccc}
   &  0&  2&  \cellcolor{blue!25}4& \cellcolor{blue!25}5& \cellcolor{blue!25}6& \cellcolor{blue!25}7& \cellcolor{blue!25}8\\\toprule
 0&  \cellcolor{red!25}0&  \cellcolor{red!25}2& 4&5&6&7&8\\
 1&  \cellcolor{red!25}1& \cellcolor{red!25}3&  5&6&7&8&9\\
 \cellcolor{green!25}4&  4&  6& 8& 9&10&11&12\\
 \cellcolor{green!25}5&  5&  7& 9& 10&11&12&13\\
 \cellcolor{green!25}6& 6& 8&10&11&12&13&14\\
 \cellcolor{green!25}7& 7& 9& 11& 12&13&14&15\\
\cellcolor{green!25}8& 8& 10& 12& 13&14&15&16\\
\bottomrule
\end{tabular}}
\caption{}\label{tab:optimal 3}
\end{subtable}
\begin{subtable}{0.24\textwidth}
\resizebox{3.5cm}{!}{\begin{tabular}{c|ccccccc}
   &  0&  1&  \cellcolor{blue!25}4& \cellcolor{blue!25}5& \cellcolor{blue!25}6& \cellcolor{blue!25}7& \cellcolor{blue!25}8\\\toprule
 0&  \cellcolor{red!25}0&  \cellcolor{red!25}1& 4&5&6&7&8\\
 2&  \cellcolor{red!25}2& \cellcolor{red!25}3&  6&7&8&9&10\\
 \cellcolor{green!25}4&  4&  5& 8& 9&10&11&12\\
 \cellcolor{green!25}5&  5&  6& 9& 10&11&12&13\\
 \cellcolor{green!25}6& 6& 7&10&11&12&13&14\\
 \cellcolor{green!25}7& 7& 8& 11& 12&13&14&15\\
\cellcolor{green!25}8& 8& 9& 12& 13&14&15&16\\
\bottomrule
\end{tabular}}
\caption{}\label{tab:optimal 4}
\end{subtable}
\caption{The four optimal normal degree tables for $K=L=2$ and $T=5$, all achieving $N=17$. Note that (c) is $\mathsf{GASP}_2$ and is such that $(c) = \normal \circ \negate (b)$. Also, (d) is the transpose of (c), and such that $(d)  = \normal \circ \negate (a)$.}
\label{tab: optimal}
\end{table*}

\subsection{An Operational Upper Bound on the Entries of the Degree Table} \label{sec: operational bounds for entries}

In order for the degree table to meaningfully solve an SDMM problem, it cannot take longer to compute with the numbers in the degree table than it would to perform the matrix multiplication locally. This reasoning provides us with an \say{operational} upper bound for the entries in the degree table, i.e. if the upper bound is not satisfied, it is better to perform the computation locally.

Here we count operations, i.e. additions and multiplications, over $\mathbb{F}_q$. As $\mathsf{GASP}_{r}$ extends the field, we compare the case that a single field operation over the required extension field is more expensive than performing the whole matrix multiplication locally.

\begin{theorem}\label{theo_finite_involving_q}
Let $(\alpha,\beta) \in \mathcal{A}$ be a degree table used to perform SDMM on two matrices $A \in \mathbb{F}_q^{a \times b}$ and $B \in \mathbb{F}_q^{b \times c}$. If any entry in $\alpha$ or $\beta$ is at least $q^{2abc-ac}-2$, then, performing the multiplication locally is more efficient than performing SDMM with $(\alpha,\beta)$.
\end{theorem}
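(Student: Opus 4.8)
The plan is to count elementary $\mathbb{F}_q$-operations for both strategies and show that, under the hypothesis, even a single arithmetic operation of the SDMM scheme is at least as costly as the entire local product. Computing $AB$ directly requires $abc$ multiplications (one per triple $(i,k,j)$) and $ac(b-1)$ additions to accumulate the $ac$ entries, for a total of $C := abc + ac(b-1) = 2abc - ac$ operations over $\mathbb{F}_q$. This is precisely the exponent in the statement, so the whole argument reduces to forcing one operation of the scheme to cost at least $C$ base operations.

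First I would show that a degree table entry of size $M$ forces the scheme to run over an extension field $\mathbb{F}_{q^s}$ with $q^s \ge M + 2$. After normalizing via the equivalence of Definition~\ref{def:equivalence} so that the smallest entries of $\alpha$ and $\beta$ are $0$, the monomial degrees appearing in $h(x)=f(x)g(x)$ range over an interval $[0, M_{\mathrm{table}}]$ with $M_{\mathrm{table}} = \max\SET(\alpha)+\max\SET(\beta)$, which is at least as large as the given entry. Evaluations are carried out at nonzero points $x \in \mathbb{F}_{q^s}^\times$, where Fermat's relation $x^{q^s - 1} = 1$ makes the value of a monomial $x^e$ depend only on $e \bmod (q^s - 1)$. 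The degree table guarantees decodability and $T$-security through integer (in)equalities among the exponent sums of Table~\ref{problem} (via Theorem~1 of \cite{9004505}); for these integer conditions to govern the evaluated polynomial, the exponents must not be identified modulo $q^s-1$, which forces $M_{\mathrm{table}} \le q^s - 2$. Combined with the hypothesis $M \ge q^{C} - 2$, this yields $q^s \ge M_{\mathrm{table}} + 2 \ge M + 2 \ge q^{C}$, and hence $s \ge C$.

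Then I would conclude with the cost of a single field operation. An element of $\mathbb{F}_{q^s}$ is represented by $s$ coordinates over $\mathbb{F}_q$, so one addition already costs $s$ base operations and one multiplication costs strictly more. Since the scheme necessarily performs at least one multiplication over $\mathbb{F}_{q^s}$ — each server returns $h(a_n)=f(a_n)\cdot g(a_n)$, and encoding $f$ and $g$ multiplies the submatrices by powers of the evaluation points — its cost over $\mathbb{F}_q$ strictly exceeds $s \ge C = 2abc - ac$. As $C$ is exactly the cost of the local product, even a single operation of the scheme is already at least as expensive as computing $AB$ directly, so the local computation is more efficient.

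I expect the delicate step to be the second paragraph: rigorously justifying the lower bound $q^s \ge M+2$, i.e.\ arguing that a degree table with an integer entry of size $M$ cannot be faithfully realized over a field in which exponents collapse modulo $q^s-1$ without jeopardizing decodability or $T$-security. The operation-counting parts are routine, but one must also verify that the constant $-2$ in the hypothesis is exactly what converts $q^s \ge M+2$ into $q^s \ge q^{C}$ and hence into the clean inequality $s \ge C\ (=2abc-ac)$.
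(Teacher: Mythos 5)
Your proposal is correct and follows essentially the same route as the paper's proof: count the local cost as $2abc-ac$ operations, observe that an entry of size $M$ forces an extension field $\mathbb{F}_{q^s}$ with $q^s \ge M+2$ (since $x^{q^s-1}=1$ collapses exponents modulo $q^s-1$, so they must fit among the representatives $\{0,\ldots,q^s-2\}$), and conclude that a single operation over $\mathbb{F}_{q^s}$ costs at least $s \ge \log_q(M+2) \ge 2abc-ac$ base-field operations. The step you flag as delicate is justified in the paper at exactly the same level of detail you give, so no gap remains.
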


\begin{proof}
The trivial matrix multiplication algorithm, i.e. $C_{ij} = \sum_{k=1}^{s} A_{ik}B_{kj}$, uses $b-1$ additions and $b$ multiplications for the computation of each of the $ac$ elements, i.e. a total of $2abc-ac$ operations. Let $M$ be the largest entry in $\alpha$ or $\beta$. The size of the extension field $q'$ required by $\mathsf{GASP}_{r}$ is at least $M+2$ as $x^{q'-1} = 1$ for all $x \in \mathbb{F}_{q'}^{\times}$ implies that all entries in $\alpha$ and $\beta$ are in the ring $\mathbb{Z} / ((q'-1)\mathbb{Z})$ and the canonical set of representatives $\{0,\ldots,q'-2\}$ can be chosen.

A single operation in $\mathbb{F}_{q'}$ with $q'=q^e$ takes at least $e=\log_q(q') \ge \log_q(M+2)$ operations in $\mathbb{F}_q$. Hence, the number of operations performing the matrix multiplication locally using the trivial algorithm is at most the number of operations of the application of $\mathsf{GASP}_{r}$ if $2abc-ac \le \log_q(M+2)$ which is equivalent to $q^{2abc-ac}-2 \le M$.

\end{proof}

\section{Degree Tables via Integer Linear Programming} \label{sec: linear programming}

In this section we formulate the problem of finding a degree table with a low number of distinct entries as a binary linear program (BLP) problem. We first look at the general case in which the BLP has an infinite number of variables and constraints. The problem can be made finite by bounding the entries in the degree table, as done in Sections \ref{sec: linear bounds for entries} and \ref{sec: operational bounds for entries}. We then consider a special case where $\ap$ an $\beta$ are fixed. This allows us to run a greedy algorithm which is able to find degree tables which are different from \GASP{r} but attain the same number of distinct terms. We have not found any degree table which outperforms \GASP{r}.

\subsection{An Infinite Binary Linear Program for Finding Degree Tables}

In this section we present a binary linear program with an infinite number of variables and constraints which can compute $\N(K,L,T)$. The number of variables and constraints can be made finite by bounding the values of the entries in the degree table, e.g. as shown in Sections~\ref{sec: linear bounds for entries} and~\ref{sec: operational bounds for entries}.


\begin{theorem}\label{theo:BLP}
The minimum number of distinct terms in a degree table, $\N(K,L,T)$, is the optimum value of the following BLP, where $\lambda = \min\{K,L\}+T$.
\begin{align}
\text{Minimize}\quad &\sum_{e} U_e  \label{BLP_obj}\\
\text{subject to:} \hspace{5em} M_{r,c,e} &\le U_e && \forall r,c,e \label{BLP_lbU}\\
\sum_{(r,c) \ne (r',c')} M_{r,c,e} &\le \lambda \cdot (1-M_{r',c',e}) && \forall e, r' \le K, c' \le L \label{BLP_decodability}\\
\sum_{r} R_{r,e} &\le 1 && \forall e \label{BLP_distinct_rows}\\
\sum_{c} C_{c,e} &\le 1 && \forall e \label{BLP_distinct_columns}\\
\sum_{e} M_{r,c,e} &= 1 && \forall r,c \label{BLP_M1}\\
\sum_{e} R_{r,e} &= 1 && \forall r \label{BLP_R1}\\
\sum_{e} C_{c,e} &= 1 && \forall c \label{BLP_C1}\\
\sum_{e} e \cdot M_{r,c,e} &= \sum_{e} e \cdot R_{r,e} + \sum_{e} e \cdot C_{c,e} && \forall r,c \label{BLP_computeM}\\
U_{e},R_{r,e},&C_{c,e},M_{r,c,e} \in \{0,1\} && \forall r,c,e \label{BLP_var}
\end{align}
\end{theorem}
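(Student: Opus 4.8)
The plan is to prove the two inequalities $\omega \le \N(K,L,T)$ and $\omega \ge \N(K,L,T)$ separately, where $\omega$ denotes the optimum value of the program~\eqref{BLP_obj}--\eqref{BLP_var}; each direction is a dictionary that translates a degree table into a feasible point of the BLP and back. Throughout I read the variables as follows: $R_{r,e}=1$ encodes $\alpha_r=e$, $C_{c,e}=1$ encodes $\beta_c=e$, $M_{r,c,e}=1$ encodes that cell $(r,c)$ carries the value $e=\alpha_r+\beta_c$, and $U_e=1$ records that $e$ occurs somewhere in the table. The index $e$ ranges over $\mathbb{Z}_{\ge 0}$, the row index $r$ over $\{1,\ldots,K+T\}$, the column index $c$ over $\{1,\ldots,L+T\}$, and the ``red block'' appearing in~\eqref{BLP_decodability} is exactly the set of pairs $(r',c')$ with $r'\le K$ and $c'\le L$.

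For $\omega \le \N(K,L,T)$, I would take an optimal degree table $(\alpha,\beta)\in\A(K,L,T)$ with $\N(\alpha,\beta)=\N(K,L,T)$ and set $R_{r,e}=\mathbf{1}_{\alpha_r=e}$, $C_{c,e}=\mathbf{1}_{\beta_c=e}$, $M_{r,c,e}=\mathbf{1}_{\alpha_r+\beta_c=e}$, and $U_e=\mathbf{1}_{e\in\SET(\alpha)+\SET(\beta)}$. Constraints~\eqref{BLP_M1}, \eqref{BLP_R1}, \eqref{BLP_C1} hold because each cell, row, and column carries exactly one value; \eqref{BLP_distinct_rows} and~\eqref{BLP_distinct_columns} are Properties~(1) and~(2) of Definition~\ref{def:degree table}; \eqref{BLP_computeM} is the tautology $\alpha_r+\beta_c=\alpha_r+\beta_c$; and \eqref{BLP_lbU} holds by the definition of $U_e$. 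The objective then equals $\sum_e U_e=|\SET(\alpha)+\SET(\beta)|=\N(K,L,T)$, giving the bound.

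For $\omega \ge \N(K,L,T)$, starting from any feasible point I would use~\eqref{BLP_R1} and~\eqref{BLP_C1} to define $\alpha_r,\beta_c$ as the unique values with $R_{r,\alpha_r}=1$, $C_{c,\beta_c}=1$; then~\eqref{BLP_M1} with~\eqref{BLP_computeM} forces $M_{r,c,e}=1\iff e=\alpha_r+\beta_c$, while \eqref{BLP_distinct_rows}--\eqref{BLP_distinct_columns} give Properties~(1)--(2). For Property~(3) I would reverse the decodability argument: for $n\in\SET(\ap)+\SET(\bp)$ write $n=\alpha_{r'}+\beta_{c'}$ with $(r',c')$ in the red block, so $M_{r',c',n}=1$ and~\eqref{BLP_decodability} forces every other $M_{r,c,n}=0$, i.e.\ $n$ has a unique representation. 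Thus $(\alpha,\beta)\in\A(K,L,T)$, and since~\eqref{BLP_lbU} yields $U_e\ge\max_{r,c}M_{r,c,e}$ we get $\sum_e U_e\ge|\SET(\alpha)+\SET(\beta)|=\N(\alpha,\beta)\ge\N(K,L,T)$; minimizing over feasible points gives the claim.

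The hard part is verifying that the decodability constraint~\eqref{BLP_decodability} encodes Property~(3) with the correct big-$\lambda$ constant $\lambda=\min\{K,L\}+T$. When $M_{r',c',e}=1$ the right-hand side is $0$, and the constraint demands that no other cell carry $e$; this matches uniqueness of representation because two distinct cells sharing a value necessarily sit in distinct rows and distinct columns (by Properties~(1)--(2)), hence give two genuinely different sum representations. When $M_{r',c',e}=0$ the right-hand side is $\lambda$, and one must check this never over-constrains: at most $\min\{K+T,L+T\}=\lambda$ cells can carry a common value, since distinctness of the $\alpha$'s allows at most one such cell per row and distinctness of the $\beta$'s at most one per column. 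A secondary point worth a remark is that, although the program has infinitely many variables and constraints, every feasible point of interest has only finitely many nonzero variables (all entries are finite sums), so the optimum is attained and all translations above are well defined, consistent with the finiteness already noted for $\N(K,L,T)$.
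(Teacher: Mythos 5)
Your proof is correct and follows essentially the same route as the paper's: the identical dictionary between degree tables and feasible points of the BLP, with each constraint matched to the corresponding property of Definition~\ref{def:degree table} and the objective counting $|\SET(\alpha)+\SET(\beta)|$. You are in fact slightly more careful than the paper in verifying that the constant $\lambda=\min\{K,L\}+T$ in~\eqref{BLP_decodability} never over-constrains a genuine degree table (at most one cell per row and per column can carry a given value), a point the paper's proof asserts without justification.
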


\begin{proof}
The variables $R_{r,e}$ and $C_{c,e}$ are one iff $\alpha_r=e$ respectively $\beta_c=e$.
To ensure that exactly one $e$ value is assigned to the same $\alpha_r$ or $\beta_c$, we have the Equations~(\ref{BLP_R1}) and~(\ref{BLP_C1}).
In this setting, the Inequalities~(\ref{BLP_distinct_rows}) and~(\ref{BLP_distinct_columns}) ensure Property~(1) and~(2) of Definition~\ref{def:degree table}.

The variables $M_{r,c,e}$ are one iff $\alpha_r+\beta_c=e$ and this sum is represented in Equation~(\ref{BLP_computeM}).
Equation~(\ref{BLP_M1}) implies then that for each $r,c$ exactly one $M_{r,c,e}$ is one.
Property~(3) of Definition~\ref{def:degree table} is handled by the Inequality~(\ref{BLP_decodability}):
If $M_{r',c',e}$ is one, i.e. $\alpha_{r'}+\beta_{c'}=e$, then there may not be $(r,c) \ne (r',c')$ with $\alpha_{r}+\beta_{c}=e$, i.e. $M_{r,c,e}=1$, so in that case the right hand side is zero, implying that all variables on the left hand side must be zero.
Conversely, if $M_{r',c',e}$ is zero, then the right hand side is $\min\{K,L\}+T$ which imposes no restriction on the left hand side.

The variables $U_e$ are one (in any optimal solution) iff there is an $(r,c)$ with $\alpha_{r}+\beta_{c}=e$.
This logic is ensured since Inequality~(\ref{BLP_lbU}) implies that $U_e$ is one if there is an $(r,c)$ with $\alpha_{r}+\beta_{c}=e$ and the converse is provided by the objective function, which in turn is then exactly the size of the degree table.
\end{proof}

The invariance of the sizes of degree tables under the equivalence in Definition~\ref{def:equivalence}, allows us to incorporate additional constraints in the BLP of Theorem~\ref{theo:BLP} in order to reduce the search space and to mitigate the impact of symmetries.
\begin{align}
\sum_e e \cdot R_{r,e}+1 &\le \sum_e e \cdot R_{r+1,e} && \forall r < K \label{BLP_wlog_sort_ap}\\
\sum_e e \cdot R_{r,e}+1 &\le \sum_e e \cdot R_{r+1,e} && \forall K < r < K+T \label{BLP_wlog_sort_as}\\
\sum_e e \cdot C_{c,e}+1 &\le \sum_e e \cdot C_{c+1,e} && \forall c < L \label{BLP_wlog_sort_bp}\\
\sum_e e \cdot C_{c,e}+1 &\le \sum_e e \cdot C_{c+1,e} && \forall L < c < L+T \label{BLP_wlog_sort_bs}\\
R_{1,0}+R_{K+1,0} &= 1 \label{BLP_wlog_a0}\\
C_{1,0}+C_{L+1,0} &= 1 \label{BLP_wlog_b0}
\end{align}
Inequalities~\ref{BLP_wlog_sort_ap},~\ref{BLP_wlog_sort_as},~\ref{BLP_wlog_sort_bp}, and~\ref{BLP_wlog_sort_bs} ensure that $\ap$, $\as$, $\bp$, and $\bs$ are sorted.
Equations~\ref{BLP_wlog_a0} and~\ref{BLP_wlog_b0} force $\min\{\SET(\alpha)\}=0$ and $\min\{\SET(\beta)\}=0$.

\subsection{The special case of fixed \texorpdfstring{$\ap$}{alpha prefix} and \texorpdfstring{$\beta$}{beta}}\label{sec:fixed}

In this section, we fix $\ap=(0,1,\ldots,K-1)$, $\bp=(0,K,\ldots,K(L-1))$, and $\bs=(KL,KL+1,\ldots,KL+T-1)$. Note that $\mathsf{GASP}_{r}$ uses this structure and that $\SET(\ap)+\SET(\bp)=[KL-1]$ and $\SET(\ap)+\SET(\bp|\bs)=[KL+K+T-2]$. From Corollary~\ref{cor:finite} we immediately obtain the bound $\max\{\SET(\alpha)\} \le (K+T-1)(KL+T)$, which can be improved since we fixed $\ap$.

\begin{lemma}\label{lem:finite2}
Let $(\ap,\as,\bp,\bs) \in \A(K,L,T)$ be a normal degree table such that $\ap=(0,1,\ldots,K-1)$, $\bp=(0,K,\ldots,K(L-1))$, and $\bs=(KL,KL+1,\ldots,KL+T-1)$. Then, it holds that $\SET(\as) \subseteq \{KL,KL+1,\ldots,T(KL+T) + K-1\}$.
\end{lemma}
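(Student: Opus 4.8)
We want to show that in a normal degree table with the prescribed
$\ap$, $\bp$, and $\bs$, the suffix $\as$ can only contain integers in the
range $\{KL,\ldots,T(KL+T)+K-1\}$.
The lower bound is the easier part: since the table is normal, all entries are
nonnegative, and since $\ap=(0,1,\ldots,K-1)$ and $\bp=(0,K,\ldots,K(L-1))$ give
$\SET(\ap)+\SET(\bp)=[KL-1]$, any element of $\SET(\as)$ that landed below $KL$
would either collide with the prefix entries of $\alpha$ (violating Property~(1)
of Definition~\ref{def:degree table}) or create a sumset element already
uniquely produced in the red block (violating Property~(3)). The plan is to
argue that the first element of $\as$ must exceed the largest prefix exponent
$K-1$ and, combined with decodability against $\SET(\bp)=[KL-1]$ shifted,
cannot fall inside $[KL-1]$; hence $\min\SET(\as)\ge KL$.

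**The upper bound via the squeezing machinery.**
The main content is the upper bound $T(KL+T)+K-1$, and the natural tool is
Corollary~\ref{cor:finite}, which bounds consecutive gaps of a squeezed degree
table. First I would observe that with $\ap$, $\bp$, $\bs$ fixed as above, the
set $\SET(\beta)=\SET(\bp|\bs)$ is already an arithmetic-progression-like set
with $\max\SET(\beta)-\min\SET(\beta)=KL+T-1$, and is independent of $\as$. The
key step is to show that such a table (with this particular $\ap,\bp,\bs$) is in
fact \emph{squeezed} in the sense of Lemma~\ref{lem:squeeze}, or can be taken to
be so without loss of generality: since $\ap$ is the block $\{0,\ldots,K-1\}$
and $\beta$ is fixed, the condition $q_i+A\le q_{i+1}+a$ of Lemma~\ref{lem:sq_2}
(which forbids operation~(1.)) holds automatically, so no $\alpha$-squeezing can
be applied and the table is squeezed on the $\alpha$-side. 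Then
Corollary~\ref{cor:finite} gives, for the sorted $\alpha=\pi\circ\alpha$,
\[
(\pi\circ\alpha_{i+1}-1)+\min\SET(\beta)\le \pi\circ\alpha_i+\max\SET(\beta),
\]
i.e.\ each consecutive gap in the sorted $\alpha$ is at most
$\max\SET(\beta)-\min\SET(\beta)+1=KL+T$.

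**Accumulating the gaps.**
Starting from $\min\SET(\alpha)=0$ (normality) and summing these gap bounds
across all $K+T-1$ consecutive pairs, I get
$\max\SET(\alpha)\le (K+T-1)(KL+T)$, which is Corollary~\ref{cor:finite}'s crude
bound and is too weak. The refinement the lemma promises exploits that the first
$K$ sorted $\alpha$-entries are exactly $0,1,\ldots,K-1$ (gap $1$ each, not gap
$KL+T$), so only the $T$ suffix entries can contribute the large gaps. More
precisely, once we are past the prefix, there are at most $T$ elements of $\as$,
and each sorted step among them costs at most $KL+T$; starting from the largest
prefix value $K-1$ and taking $T$ such steps gives
$\max\SET(\as)\le (K-1)+T(KL+T)=T(KL+T)+K-1$, which is exactly the claimed
bound. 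The main obstacle I anticipate is justifying rigorously that the
prefix entries $0,\ldots,K-1$ really occupy the $K$ smallest positions in the
sorted $\alpha$ and that the suffix contributes at most $T$ large jumps — one
must rule out suffix elements interleaving with the prefix in a way that would
waste the cheap gaps. This is handled by the lower-bound argument (every element
of $\as$ is $\ge KL>K-1$), so the prefix is a contiguous initial block and the
counting of exactly $T$ expensive gaps is clean.
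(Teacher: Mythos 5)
Your proof follows the same route as the paper's: the lower bound $\min\SET(\as)\ge KL$ from decodability against $\SET(\ap)+\SET(\bp)=[KL-1]$, the observation that therefore $\ap$ occupies the $K$ smallest positions of the sorted $\alpha$, and the upper bound obtained by bounding each of the $T$ remaining consecutive gaps by $\max\{\SET(\beta)\}-\min\{\SET(\beta)\}+1=KL+T$ via Corollary~\ref{cor:finite} and telescoping from $\alpha_K=K-1$.

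The genuine gap is your justification that Corollary~\ref{cor:finite} is applicable, i.e.\ that the table is squeezed. The condition $q_i+A\le q_{i+1}+a$ of Lemma~\ref{lem:sq_2} does not ``hold automatically'': with $a=0$ it demands $\max\{\SET(\alpha)\}\le q_{i+1}-q_i$, and the gaps of the fixed $\beta$ are at most $K$ while $\max\{\SET(\alpha)\}\ge KL$, so it fails in all but trivial cases. More importantly, Lemma~\ref{lem:sq_2} only encodes that the two squeezing operations exclude one another; it cannot certify that operation~(1.) of Lemma~\ref{lem:squeeze} is inapplicable. Operation~(1.) is applicable exactly when some consecutive gap of the sorted $\alpha$ exceeds $KL+T$, which is precisely the inequality you are trying to establish, so assuming it away is circular. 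Indeed, normality alone does not force squeezedness: for $K=L=T=1$ the table with $\as=(100)$ is normal, has the prescribed $\ap,\bp,\bs$, and violates the claimed bound $\SET(\as)\subseteq\{1,2\}$ --- so the lemma is really a statement about squeezed tables. (The paper's own proof buries the same issue in the unproved assertion that normal implies squeezed, so you are in good company.) The clean repair is a without-loss-of-generality reduction: for this $\beta$, operation~(2.) is never applicable, since every $\beta$-gap is at most $K\le KL\le\max\{\SET(\alpha)\}$ and hence below the threshold $\max\{\SET(\alpha)\}+1$; and each application of operation~(1.) decrements only entries of $\as$ (all prefix gaps equal $1$), hence preserves $\ap$, $\bp$, $\bs$, and $\N$. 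One may therefore assume the table is squeezed, after which your gap-summing argument is exactly the paper's and is correct.
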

\begin{proof}
Since $\SET(\ap)+\SET(\bp)=[KL-1]$, we immediately have $\SET(\as) \subseteq \{KL,KL+1,\ldots\}$ for all $\as$.
Since $(\ap,\as,\bp,\bs)$ is a normal degree table, we have that $(\ap,\as,\bp,\bs)$ is a squeezed degree table and that $\as$ is sorted in increasing order.
Then Corollary~\ref{cor:finite} shows that $\pi\circ\alpha_{i+1}-1\le\pi\circ\alpha_i+KL+T-1$ for all integers $i$ with $1 \le i < K+T$ and a sorting permutation $\pi$ with respect to $\alpha$.
Note, that since $\ap|\as$ is sorted, $\pi$ is the identity.
Hence, we have $\alpha_{i+1}-\alpha_i \le KL+T$ and 
\begin{align*}
&\alpha_{K+T}-\alpha_{K}
=
\sum_{i=K}^{K+T-1} (\alpha_{i+1}-\alpha_i)
\\&\le
\sum_{i=K}^{K+T-1} (KL+T)
=
T(KL+T),
\end{align*}
i.e. $\alpha_{K+T} \le T(KL+T) + K-1$.

\end{proof}

Lemma~\ref{lem:finite2} allows the application of the BLP of Theorem~\ref{theo:BLP} to get the best upper bound on $\N(K,L,T)$ for fixed $\ap$ and $\beta$, but it also enables us to give a greedy algorithm for finding an upper bound on $\N(K,L,T)$.

Theorem~\ref{theo:BLP} can be simplified and improved in the setting of fixed $\ap$ and $\beta$.

\begin{corollary}\label{cor:BLP_special_case}
Let $L \le K$, $\ap=(0,1,\ldots,K-1)$, $\bp=(0,K,\ldots,K(L-1))$, and $\bs=(KL,KL+1,\ldots,KL+T-1)$.
Then, the minimum size of a degree table $(\ap,\as,\bp,\bs) \in \A(K,L,T)$ is exactly the optimum value of the following integer linear program (ILP).
We abbreviate
$\mathcal{R}=\{K+1,\ldots,K+T\}$,
$\mathcal{C}=\{1,\ldots,L+T\}$,
$\mathcal{F}=\{KL,\ldots,KL+K+T-2\}$,
$\mathcal{E}=\{KL,\ldots,(T+1)(KL+T)+K-2\}$, and
$\mathcal{V}=\{KL,\ldots,T(KL+T)+K-1\}$.
\begin{align}
\min{} &N \text{ st} \label{BLP_special_o} \\
N &= KL+\sum_{e \in \mathcal{E}} U_e \label{BLP_special_computeN} \\
U_e &= 1 && \forall e \in \mathcal{F} \label{BLP_special_fixU} \\
(L+T)S_{r,s} &\le \sum_{c \in \mathcal{C}} U_{s+\beta_{c}} && \forall r \in \mathcal{R}, s \in \mathcal{V} \label{BLP_special_SimplyU} \\
\sum_{s \in \mathcal{V}} S_{r,s} &= 1 && \forall r \in \mathcal{R} \label{BLP_special_S1} \\
\sum_{s \in \mathcal{V}} sS_{r,s} &= R_{r} && \forall r \in \mathcal{R} \label{BLP_special_SR} \\
R_{i}+1 &\le R_{i+1} \le R_{i}+KL+T && \forall i \in \mathcal{R}\setminus \{K+T\} \label{BLP_special_wlog} \\
N &\in \mathbb{Z} \label{BLP_special_N} \\
U_{e} &\in \{0,1\} && \forall e\in\mathcal{E} \label{BLP_special_U} \\
R_{r} &\in \mathcal{V} && \forall r\in\mathcal{R} \label{BLP_special_R} \\
S_{r,s} &\in \{0,1\} && \forall r \in \mathcal{R}, s \in \mathcal{V} \label{BLP_special_S}
\end{align}
\end{corollary}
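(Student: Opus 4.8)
The plan is to exhibit an objective-preserving correspondence between feasible solutions of the ILP \eqref{BLP_special_o}--\eqref{BLP_special_S} and valid degree tables $(\ap,\as,\bp,\bs)\in\A(K,L,T)$ with the prescribed fixed $\ap$, $\bp$, $\bs$, and then to argue that at optimality the objective equals the number of distinct terms. The crucial structural observation is that every entry of the degree table is either strictly below $KL$ or at least $KL$. Since $\max(\SET(\ap)+\SET(\bp))=(K-1)+K(L-1)=KL-1$ while every entry involving $\as$ or $\bs$ is at least $KL$, the entries below $KL$ are exactly $[KL-1]$, contributing $KL$ distinct values independently of $\as$; this is the constant $KL$ in \eqref{BLP_special_computeN}. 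The remaining entries, those at least $KL$, come only from $\SET(\ap)+\SET(\bs)=\{KL,\ldots,KL+K+T-2\}=\mathcal{F}$ and from $\SET(\as)+\SET(\bp|\bs)$. Letting $U_e$ be the indicator that $e$ is an entry, we obtain $N=KL+\sum_{e\in\mathcal{E}}U_e$, which is \eqref{BLP_special_computeN}.

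Next I would verify that the requirements of Definition~\ref{def:degree table} are automatic under this fixed structure, which is why the ILP carries no explicit decodability constraint. Property~(3) requires the red block $\SET(\ap)+\SET(\bp)=[KL-1]$ to occur uniquely; since all other entries are at least $KL$, this holds for any $\as$ with $\SET(\as)\subseteq\{KL,\ldots\}$. Property~(2) holds because $\bp|\bs$ is fixed and distinct, and Property~(1), distinctness of $\as$, is enforced by the strict increase $R_i+1\le R_{i+1}$ in \eqref{BLP_special_wlog}. The $\mathcal{F}$-entries are forced by \eqref{BLP_special_fixU}, while \eqref{BLP_special_SimplyU} handles the secret rows: identifying $R_r=\alpha_r$ through \eqref{BLP_special_S1} and \eqref{BLP_special_SR}, if $S_{r,s}=1$ then the left side of \eqref{BLP_special_SimplyU} is $L+T$, and because $\beta=\bp|\bs$ has $L+T$ distinct entries the values $\{s+\beta_c:c\in\mathcal{C}\}$ are pairwise distinct, so a sum of the $L+T$ binary variables $U_{s+\beta_c}$ can reach $L+T$ only if each equals $1$. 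Thus every secret-row entry $R_r+\beta_c$ is forced to have $U=1$, and since these together with $\mathcal{F}$ exhaust the entries $\ge KL$, at an optimum $U_e=1$ precisely on the entries and $0$ elsewhere, so the objective equals $N$.

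The main obstacle is justifying that the finite domains $\mathcal{V}$, $\mathcal{E}$ and especially the gap constraint $R_{i+1}\le R_i+KL+T$ in \eqref{BLP_special_wlog} do not exclude an optimal degree table. For the value bound I would invoke Lemma~\ref{lem:finite2}, which gives $\SET(\as)\subseteq\mathcal{V}$ for any normal degree table with this fixed structure; taking $\as$ sorted makes the table normal, so the minimum is attained on this class. For the gap bound I would use Corollary~\ref{cor:finite}: since $\beta$ is fixed with $\max\SET(\beta)-\min\SET(\beta)+1=KL+T$, any squeezed table satisfies $\alpha_{i+1}-\alpha_i\le KL+T$. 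The delicate point is that squeezing must preserve $\ap$, $\bp$, $\bs$; this holds because $\ap=(0,\ldots,K-1)$ has all unit gaps (so operation~(1) of Lemma~\ref{lem:squeeze} never triggers inside $\ap$ and, when it triggers, only decrements a suffix of $\as$), while the maximal gap in $\beta$ is $K<KL+1\le\max\SET(\alpha)+1$, so operation~(2) never applies and $\beta$ is untouched. As squeezing preserves $N$ by Lemma~\ref{lem:squeeze}, there is an $N$-minimal degree table with the fixed structure that is squeezed and hence satisfies \eqref{BLP_special_wlog}.

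Finally I would close both directions. Any feasible ILP solution yields, via $\as=(R_{K+1},\ldots,R_{K+T})$, a valid degree table whose size is the objective, so the ILP optimum is at least the minimum degree-table size. Conversely, given an $N$-minimal squeezed degree table with the fixed structure, setting $R_r=\alpha_r$, $S_{r,s}=\mathbf{1}_{R_r=s}$, and $U_e=\mathbf{1}_{e\in\SET(\alpha)+\SET(\beta)}$ satisfies all of \eqref{BLP_special_computeN}--\eqref{BLP_special_S} with objective $N$, giving the reverse inequality. Hence the two optima coincide.
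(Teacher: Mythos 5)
Your proof is correct, and it reaches the conclusion by a somewhat different route than the paper. The paper's proof treats the corollary as a specialization of the general BLP: it states the intended interpretation of the variables and then argues constraint by constraint that each line of the special ILP is a rewriting of the corresponding line of Theorem~\ref{theo:BLP} under the fixed $\ap$, $\bp$, $\bs$ (decodability becomes vacuous because $\SET(\ap)+\SET(\bp)=[KL-1]$ while everything else lies in $\{KL,\ldots\}$, the offset $KL$ and the set $\mathcal{F}$ absorb the fixed part, and \eqref{BLP_special_SimplyU} compresses the $M$- and $U$-constraints). You instead build a self-contained two-way correspondence between feasible ILP points and degree tables and compare optima directly, which buys two things. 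First, your forcing argument for \eqref{BLP_special_SimplyU} (the $L+T$ values $s+\beta_c$ are pairwise distinct, so a sum of $L+T$ binaries reaches $L+T$ only if all are one) and your observation that feasible $U$ may over-count, so the inequality between optima still closes, are spelled out where the paper only gestures at them. Second, and more substantively, you address a point the paper passes over with ``we may assume wlog a normal degree table'': that sorting and squeezing an optimal $\as$ does not disturb the prescribed $\ap$, $\bp$, $\bs$. Your check that operation (1) of Lemma~\ref{lem:squeeze} can only decrement entries of $\as$ (unit gaps inside $\ap$) and that operation (2) never fires (the maximal gap $K$ in $\beta$ is below $\max\SET(\alpha)+1\ge KL+1$) is exactly what is needed to legitimately import Corollary~\ref{cor:finite} and Lemma~\ref{lem:finite2} for the bounds $\mathcal{V}$ and the right-hand side of \eqref{BLP_special_wlog}; this closes a genuine (if small) gap in the paper's argument rather than merely restating it.
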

\begin{proof}
The interpretation of the variables is:
\begin{align*}
&
S_{r,s} = 1 \Leftrightarrow R_{r} = s \Leftrightarrow (\as)_r = s,
\\&
U_{e} = 1 \Leftrightarrow e \in (\SET(\ap)+\SET(\bs)) \cup (\SET(\as)+\SET(\bp|\bs)),
\\&
N = \N(\ap,\as,\bp,\bs).
\end{align*}
Next,
$\mathcal{R}$ is the set of indices of $\as$ in $\ap|\as$,
$\mathcal{C}$ is the set of indices of $\bp|\bs$,
$\mathcal{F}=\SET(\ap)+\SET(\bs)$ is the set of entries in the top right corner of the degree table and independent of $\as$,
$\mathcal{V}$ is the set of entries for $\as$ implied by Lemma~\ref{lem:finite2}, and
$\mathcal{E}=\{\min\{\mathcal{V}\}+\min\{\SET(\bp|\bs)\}, \ldots, \max\{\mathcal{V}\}+\max\{\SET(\bp|\bs)\}\}$ the set of possible entries in the degree table excluding the top left corner.

Now, the remainder of the proof follows by rewriting the inequalities of Theorem~\ref{theo:BLP}.


We call $R_{r,e}$ in Theorem~\ref{theo:BLP} here $S_{r,e}$ and use $R_r = \sum_e e \cdot S_{r,e}$ so that Inequality~(\ref{BLP_wlog_sort_as}) becomes the left part of Inequality~(\ref{BLP_special_wlog}).
The right part of Inequality~(\ref{BLP_special_wlog}) is given by Corollary~\ref{cor:finite} since we may assume wlog. a normal degree table.

Since $\SET(\ap)+\SET(\bp)=[KL-1]$ and $\SET(\as)+\SET(\bp|\bs)\subseteq \{KL,\ldots,T(KL+T+1)+K+KL-2\}$ we get Inequality~(\ref{BLP_decodability}) of the BLP in Theorem~\ref{theo:BLP}.

Then, $\mathcal{F}$ and the offset in the objective function follow since $\SET(\ap)+\SET(\bp|\bs)=[KL+K+T-2]$ and $\min\{\mathcal{V}+\SET(\bp|\bs)\}=KL$, so we omit $U_e=1$ for $e < KL$.

Inequality~(\ref{BLP_special_SimplyU}) is built by using the Inequalities~(\ref{BLP_lbU}) and~(\ref{BLP_computeM}).
If the left hand side is zero, i.e. $S_{r,s}=0$, then the right hand side is not restricted.
Else the left hand side is $L+T$ and, using $|\mathcal{C}|=L+T$, we have $U_{s+b_{c}}=1$ for all $c \in \mathcal{C}$, i.e. $s+\SET(\beta) \subseteq \SET(\alpha)+\SET(\beta)$.\footnote{Of course $S_{r,s} \le U_{s+b_{c}}$ for all $r \in \mathcal{R}$, $s \in \mathcal{V}$, and $c \in \mathcal{C}$ can be used instead of Inequality~(\ref{BLP_special_SimplyU}), which is even a better formulation. But here, we favor a smaller number of constrains.}
\end{proof}

Solving this BLP for each $1 \le L \le K \le 9$, $1 \le T \le 9$ using \texttt{CPLEX} takes at most two weeks on a high-end computer. The objective value equals the size of a degree table constructed by \GASP{r} with $r=r^*$. Note that this BLP has $T^2KL + T^3 + T^2 + TK + 2T + K$ variables and $T^2KL + T^3 - TKL - TK + 5T + K - 3$ constraints, which makes it difficult to solve exactly by sheer size, but also other aspects might render the BLP difficult, such as poor LP-relaxations in the subproblems arising in an underlying branch and bound algorithm or symmetry.

\subsection{Greedy algorithm}

In this section, we present a greedy algorithm which can find good degree tables for large parameters. As in the previous section, we assume $(\ap,\as,\bp,\bs) \in \A(K,L,T)$ satisfies $\ap=(0,1,\ldots,K-1)$, $\bp=(0,K,\ldots,K(L-1))$, and $\bs=(KL,KL+1,\ldots,KL+T-1)$.

The greedy algorithm in the version depicted here computes and maintains an integral set $S_i$ for $i \in \mathcal{V} = \{KL,KL+1,\ldots,T(KL+T) + K-1\}$ (cf. Lemma~\ref{lem:finite2}) and traverses a search tree in depth-first search order in which only the set $\operatorname{argmax}\{|S_i| \mid i \subseteq \mathcal{V}\}$ is taken into consideration for branching.
The maximum depth of the tree is $T$.
As soon as a solution $\as$ is found, the size of the corresponding degree table is used for pruning.

\begin{algorithm}
\caption{Greedy search strategy to find $\as$ such that the degree table has small size.}
\label{alg:greedy}
\begin{algorithmic}[1]
\algnewcommand{\LineComment}[1]{\State \(\triangleright\) #1}
\Procedure{Greedy}{$K,L,T$}
\State initialize $\ap, \bp, \bs, N_{\text{best}}=\infty, \mathcal{V}$ as described above
\For{$i \in \mathcal{V}$}
\LineComment{for each possible entry in $\as$, we compute the set of integers which are already contained in the top part of the degree table}
\State $S_i \gets (i+\beta)\cap(\ap+\beta)$
\EndFor
\State \Call{Greedy\_Recursion}{$(),\{S_i \mid i \in \mathcal{V}\}$}
\EndProcedure
\Procedure{Greedy\_Recursion}{$\as,\{S_i \mid i \in \mathcal{V}\}$}
\If{$\text{size of the partial degree table}+T-|\as|>N_{\text{best}}$} \Comment{pruning}
\State \Return
\EndIf
\If{$|\as| = T$} \Comment{reached a leaf}
\If{$\text{size of the degree table}<N_{\text{best}}$}
\State update $N_{\text{best}}$ and save $\as$
\EndIf
\State \Return
\EndIf
\LineComment{we try all possibilities for the next element in $\as$ among all not yet chosen elements such that the number of integers which are not contained in the fixed part of the degree table is minimal}
\For{$r \in \operatorname{argmax}\{|S_i| \mid i \in \mathcal{V} \setminus \SET(\as) \}$}
\For{$i \in \mathcal{V}$}
\LineComment{update the set of integers which are contained in the fixed part of the degree table if $r$ is included in $\SET(\as)$}
\State $T_i \gets S_i \cup ((i+\beta)\cap(r+\beta))$
\EndFor
\State \Call{Greedy\_Recursion}{$\as|(r),\{T_i \mid i \in \mathcal{V}\}$}
\EndFor
\EndProcedure
\end{algorithmic}
\end{algorithm}

This algorithm can be slightly modified to find all $\as$ which minimize the sizes of the corresponding degree tables. Furthermore, it allows to find small degree tables such that $\as$ is not a generalized arithmetic progression, as is the case for \GASP{r}.
\begin{example}
Consider $K=L=T=15$. Then, \GASP{r^*} constructs a degree table where the number of distinct entries is $N=368$. The following degree table with the same amount of distinct entries can be found by applying the greedy algorithm.
\begin{align*}
\ap &= (0,1,2,3,4,5,6,7,8,9,10,11,12,13,14), \\
\as &= (225,226,227,229,240,241,242,244,255,256,257,259,270,271,272),\\
\bp &= (0,15,30,45,60,75,90,105,120,135,150,165,180,195,210),\\
\bs &= (225,226,227,228,229,230,231,232,233,234,235,236,237,238,239).\\
\end{align*}
\end{example}

In our computational search for \say{good} degree tables we have found many degree tables with the same performance of \GASP{r^*}. However, we have found none that perform better.

\section{On the Inner vs. Outer Product Partitioning} \label{sec: inner vs outer}

As mentioned in the introduction, minimizing the communication costs is equivalent to minimizing the minimum amount of servers $N$ when a fixed matrix partitioning is set. This is not true, however, when comparing SDMM schemes which partition the matrix differently. And thus, for different partitionings, the expressions for the communication costs, and not just the amount of servers $N$, must be compared.

In this paper we focused on the outer product partitioning (OPP) given by \eqref{partition1}. Another partitioning often considered in the literature (e.g., \cite{mital2020secure}) is the inner product partitioning (IPP) given by
\begin{align} \label{eq: inner}
\begin{aligned}
A = \begin{bmatrix}A_1 & \cdots & A_M\end{bmatrix} \quad \text{and} \quad
B = \begin{bmatrix}B_1 \\ \vdots \\ B_M\end{bmatrix}, \quad
\text{so that} \quad
AB = \sum_{i=1}^M A_i B_i .
\end{aligned}
\end{align}

In this section we compare the communication costs, including both upload and download, and show that OPP outperforms IPP when the number $b$ of columns of $A$ (or rows of $B$) is not too big. In particular, we show that, for square matrices, OPP has a better asymptotic total communication cost than IPP.

Our analysis assumes that the OPP scheme is such that it obtains a recovery threshold of $N_o = \Theta(KL)$ and that the IPP scheme is such that it obtains a recovery threshold of $N_i = \Theta(M)$, where, in order to simplify our presentation, we assume that the number of colluding servers, $T$, is constant. Also, in order to make a fair comparison, we assume that each server performs the same amount of computations in both schemes, i.e. $KL=M$. Under these assumptions, we obtain the following communication costs.

We first recall that $A$ has dimensions $a \times b$ and $B$ has dimensions $b \times c$. Thus, in the OPP setting, the matrices $A_i$ and $B_j$ in \eqref{partition1} have dimensions $\frac{a}{K} \times b$ and $b \times \frac{c}{L}$, respectively. The user must send one of these matrices to each server, and thus, the total upload cost is given by $U_O = N_O \left( \frac{ab}{K} + \frac{bc}{L} \right)$. Under the OPP the user must download one matrix of the form $A_i B_j$ from each server, giving a total download of $D_O = N_O \frac{ac}{KL}$. Using similar arguments for the IPP we obtain an upload cost of $U_I = N_I \left( \frac{ab}{M} + \frac{bc}{M} \right)$ and a download cost of $D_I = N_I ac$.

We set a variable $n$ to control the growth of all the other parameters. We set nonnegative constants $\varepsilon_a, \varepsilon_b, \varepsilon_c, \varepsilon_K, \varepsilon_L, \varepsilon_M \in \mathbb{R}$ and set $a=n^{\varepsilon_a}, b=n^{\varepsilon_b}, c=n^{\varepsilon_c}, K=n^{\varepsilon_K}, L=n^{\varepsilon_L}$, and $M=n^{\varepsilon_M}$. So, for example, if we want to analyze the case where $A$ and $B$ are square matrices, we need only to take $\varepsilon_a=\varepsilon_b=\varepsilon_c=1$. Note that, since $K$, $L$, and $M$ are partitioning parameters, they cannot exceed their corresponding matrix dimension, i.e. $\varepsilon_K \leq \varepsilon_a$, $\varepsilon_L \leq \varepsilon_c$, and $\varepsilon_M \leq \varepsilon_b$. Also, since $KL=M$, it holds that $\varepsilon_K + \varepsilon_L = \varepsilon_M$.

Under these parameters we then obtain $U_O = \Theta (n^{\max \{ \varepsilon_a + \varepsilon_b +\varepsilon_L , \varepsilon_b + \varepsilon_c +\varepsilon_K \} })$, $D_O = \Theta(n^{\varepsilon_a + \varepsilon_c})$, $U_I = \Theta (n^{\max \{ \varepsilon_a + \varepsilon_b, \varepsilon_b + \varepsilon_c \} })$, and $D_I = \Theta(n^{\varepsilon_a + \varepsilon_c + \varepsilon_M})$. To find for which parameters OPP outperforms IPP we must solve $\Theta(U_O + D_O) \leq \Theta(U_I + D_I)$. A direct computation shows that the inequality holds if and only if $\varepsilon_b \leq \min \{ \varepsilon_a + \varepsilon_L , \varepsilon_c + \varepsilon_K \}$. Thus, intuitively, OPP outperforms IPP when the number $b$ of columns of $A$ (or rows of $B$) is not too big. 

As a simple example, if we set $\varepsilon_a=\varepsilon_b=\varepsilon_c=1$, and $\varepsilon_K = \varepsilon_L = \frac{\varepsilon}{2}$, then $U_O + D_O = \Theta(n^{2 + \frac{\varepsilon}{2}})$ and $U_I + D_I = \Theta(n^{2 + \varepsilon})$. I.e., For square matrices, OPP has a lower asymptotic total communication cost than IPP.

\appendix

In this appendix we present the proofs to some of the results in the main text.

\subsection*{Proofs for Section V}

\begin{manuallemma}{1}
In the setting of Definition~5, it follows that,
\[
|L_i|=
\begin{cases}
\min\{L,2+\lfloor(T-1-i)/K\rfloor\} & \text{if } 1 \le i \le r \\
L & \text{if } r+1 \le i \le T ,
\end{cases}
\]
and,
\[
|R_i|=
\begin{cases}
\max\{0,K+T-KL-1\} & \text{if } i=1 \\
\max\{0,T-K+r-1\} & \text{if $2 \le i$ and $i \equiv 1 \pmod{r}$} \\
T-1 & \text{if } i \not\equiv 1 \pmod{r}
\end{cases}.
\]
\end{manuallemma}
\begin{proof}

First, note that $\bp$ and $\bs$ are chosen such that for any $K$ consecutive integers $a+[K-1]$ the sets 
\[ a+[K-1]+\SET(\bp)=a+[KL-1] \quad \text{and} \quad a+[K-1]+\SET(\bs)=a+KL+[K+T-2] \] are disjoint. Hence, for $\lambda r \le i \le (\lambda +1)r-1$, it follows that $L_i$ depends only on $L_j$ with $j \le \lambda r-1$ and in particular $|L_i| \le |(\as)_i + \SET(\bp)|=L$. Let $1 \le i \le r$.
Then, 
\begin{align*}
    L_i &= ((\as)_i + \SET(\bp)) \cap (\SET(\ap)+\SET(\bp|\bs)) \\
    &= (KL+i-1+K[L-1]) \cap [KL+K+T-2] .
\end{align*}
Thus, $x \in L_i$ if and only if $0 \le x \le KL+K+T-2$ and $x=KL+i-1+Kj$ with $0 \le j \le L-1$, i.e. \[KL+i-1+Kj \le KL+K+T-2, \quad \text{if and only if,} \quad j \le 1+(T-1-i)/K .\]
There are $2+\lfloor (T-1-i)/K \rfloor$ possible values for $j$.
Note, that since $i \le r \le T$, it follows that $0 \le 2+\lfloor (T-1-i)/K \rfloor$.

Let $r+1 \le i \le T$.
Then, 
\begin{align*}
    L_i &= (\as)_i + \SET(\bp) = (\as)_{i-r}+K+ \SET(\bp) \\
    &= (\as)_{i-r} + (\SET(\bp|\bs)_1 \setminus \{0\})
\end{align*} and $|L_i|=L$. 

Let $i=1$.
Then, 
\begin{align*}
    R_1 &= ((\as)_1 + \SET(\bs)) \cap (\SET(\ap)+\SET(\bp|\bs)) \\
    &= (2KL+[T-1]) \cap [KL+K+T-2] .
\end{align*}
Thus, $x \in R_1$ if and only if $0 \le x \le KL+K+T-2$ and $x=2KL+j$ with $0 \le j \le T-1$, i.e. \[ 2KL+j \le KL+K+T-2, \quad \text{if and only if,} \quad j \le K+T-KL-2 .\]
There are $\max\{0,K+T-KL-1\}$ possible values for $j$.
Note, that $K+T-KL-1 \le T$.

Let $2 \le i$ and $i \equiv 1 \pmod{r}$, i.e. $(\as)_i=KL+\lambda K$ for $1 \le \lambda$ and $(\as)_{i-1}=KL+(\lambda-1) K+r-1$.
Note, that since $\beta$ is sorted increasingly, $R_i$ does not depend on $\bp$ and since $\bs$ is an arithmetic progression, $R_i$ only depends on $R_{i-1}$.
Thus, 
\begin{align*}
    R_i &= ((\as)_i + \SET(\bs)) \cap ((\as)_{i-1}+\SET(\bs)) \\
    &= (2KL+\lambda K+[T-1])  \cap (2KL+(\lambda-1) K+r-1+[T-1]) \\
    &= 2KL+\lambda K+([T-1] \cap (-K+r-1+[T-1])) .
\end{align*}
So, $x \in R_i$ if and only if,
\[0 \le x-2KL-\lambda K = -K+r-1+j \le T-1 ,\] 
with $0 \le j \le T-1$.
Note that $-K+r-1+T-1 \le T-1$ and $0 \le K-r+1$.
Hence, $K-r+1 \le j \le T-1$ implies $|R_i| = \max\{0,T-K+r-1\}$.
Note, that $T-K+r-1 \le T$.

Let $i \not\equiv 1 \pmod{r}$.
Then using a similar argument as in the last paragraph, 
\begin{align*}
    R_i &= ((\as)_i + \SET(\bs)) \cap ((\as)_{i-1}+\SET(\bs)) \\
    &= ((\as)_{i-1}+1 + \SET(\bs)) \cap ((\as)_{i-1}+\SET(\bs)) \\
    &= (\as)_{i-1}+((1+\SET(\bs)) \cap \SET(\bs)) \\
    &= (\as)_{i-1}+KL+1+[T-2]
\end{align*} of size $T-1$.

\end{proof}

\begin{manualtheorem}{1}
The number of distinct entries in the degree table of $\mathsf{GASP}_{r}$ is given by
\begin{multline*}
  N= KL+2K+3T-2 -\max\{K,\varphi\}+(L-2)\max\{0,\min\{r,r-\varphi\}\}+ \lfloor (T-1)/r \rfloor \min\{T-1,K-r\} \\ -\mathbf{1}_{\varphi < r}\Bigg(\min\{0,\mu-r\} 
  +r(T-1-\mu)/K+\frac{-Kx^2 + (-K-2\max\{0,\varphi\}+2T-2)x+T-1-\mu}{2} \\ -\frac{T-1-\mu}{K} \cdot \frac{T-1+\mu}{2}\Bigg),
\end{multline*}
where $\varphi = T-1-KL+2K$, $\mu \equiv T-1 \pmod{K}$ with $0 \le \mu \le K-1$, and $x = \min\left\{ \frac{T-1-\mu}{K} -\mathbf{1}_{\mu=0}, L-3 \right\}$.
\end{manualtheorem}

\begin{proof}
We compute $S = \sum_{i=1}^{T} |L_i| + \sum_{i=1}^{T} |R_i|$ in Equation~5 using Lemma~1.

First, we focus on $\sum_{i=1}^{T} |L_i|$.
By the definition of $\varphi$ we have $i \le \varphi$ if and only if $L \le 2+\lfloor(T-1-i)/K\rfloor$. Let $z = \max\{1,\varphi+1\}$, then,
\begin{align*}
\sum_{i=1}^{T} |L_i|
&= \sum_{i=1}^{\min\{r,\varphi\}} L + \sum_{i=z}^{r} |L_i| + \sum_{i=r+1}^{T} L
= \max\{0,\min\{r,\varphi\}\} L + \sum_{i=z}^{r} |L_i| + (T-r) L,
\end{align*}
and,
\begin{align*}
\sum_{i=z}^{r} |L_i|
&= \sum_{i=z}^{r} (2+\lfloor(T-1-i)/K\rfloor)
= 2\max\{0,r-z+1\} + \sum_{i=T-1-r}^{T-1-z} \lfloor i/K\rfloor.
\end{align*}
Note that, $ \max\{0,r-z+1\}
= \max\{0,\min\{r,r-\varphi\}\}
= r-\max\{0,\min\{r,\varphi\}\}$.

Writing $T-1-r = aK+b$ and $T-1-z = xK+y$ with $0 \le b \le K-1$ and $0 \le y \le K-1$, we get $0$ if $r < z$ and else
\begin{align*}
\sum_{i=aK+b}^{xK+y} \lfloor i/K\rfloor
&= (K-b)a+K\sum_{i=a+1}^{x-1}i+(y+1)x
= K(x-a)(x+a-1)/2-ab+xy+x.
\end{align*}
Note, that $r < z$ if and only if $r \le \varphi$.
Using the definition of $\mu$, we simplify,
\begin{align*}
a
=\lfloor (T-1-r)/K \rfloor
=(T-1-\mu)/K-\mathbf{1}_{\mu < r} ,
\end{align*}
and,
\begin{align*}
x
&= \lfloor (T-1-z)/K \rfloor
= \lfloor (T-1-\max\{1,\varphi+1\})/K \rfloor
= \left\lfloor \min\left\{\frac{T-2}{K},\frac{KL-2K-1}{K} \right\} \right\rfloor
\\ &= \min\left\{ \frac{T-1-\mu}{K} -\mathbf{1}_{\mu=0}, L-3 \right\}.
\end{align*}
Now, we plug $b=T-1-r -aK$, $y=T-1-z -xK$ and $a=(T-1-\mu)/K-\mathbf{1}_{\mu < r}$ in:
\begin{align*}
&K(x-a)(x+a-1)/2-ab+xy+x
= \min\{0,\mu-r\}+r\left(\frac{T-1-\mu}{K}\right)
\\&+\frac{-Kx^2 + (-K-2\max\{0,\varphi\}+2T-2)x+T-1-\mu}{2}
\\&-\frac{T-1-\mu}{K} \cdot \frac{T-1+\mu}{2}.
\end{align*}

Second, we compute $\sum_{i=1}^{T} |R_i|$ and define $\eta$ as
\begin{align*}
\eta
&
=
|\{i \mid 2 \le i \le T, i \equiv 1 \pmod{r}\}|
=
|\{\lambda \mid 2 \le 1+\lambda r \le T\}|
=
|\{\lambda \mid \lceil 1/r \rceil \le \lambda \le \lfloor (T-1)/r \rfloor \}|
\\
&
=
\lfloor (T-1)/r \rfloor.
\end{align*}
Note that $0 \le \eta$.
Hence, since $|R_1| = \max\{K,\varphi\}-K$ and $|R_{r+1}| -T+1 = -\min\{T-1,K-r\}$, it follows that,
\begin{align*}
\sum_{i=1}^{T} |R_i|
&= |R_1|
+ \sum_{\substack{i=2 \\ i \equiv 1 \pmod{r}}}^{T} |R_{r+1}|
+ \sum_{\substack{i=1 \\ i \not\equiv 1 \pmod{r}}}^{T} (T-1)
= |R_1| + \eta |R_{r+1}| + (T-1-\eta) (T-1)
\\&= |R_1| + \eta (|R_{r+1}|-T+1) + (T-1)^2
\\&= \max\{K,\varphi\}-K - \eta \min\{T-1,K-r\} + (T-1)^2.
\end{align*}

Plugging all terms in Equation~5 yields
\begin{align*}
S &= \max\{0,\min\{r,\varphi\}\} (L-2) + (T-r) L + 2r
\\&+\max\{K,\varphi\}-K - \eta \min\{T-1,K-r\} + (T-1)^2
\\&+\mathbf{1}_{\varphi < r}(\min\{0,\mu-r\}+r\left(\frac{T-1-\mu}{K}\right)
\\&+\frac{-Kx^2 + (-K-2\max\{0,\varphi\}+2T-2)x+T-1-\mu}{2}
\\&-\frac{T-1-\mu}{K} \cdot \frac{T-1+\mu}{2})
\end{align*}
and Equation~6 finishes the proof.
\end{proof}

\subsection*{Proofs for Section VI}

\begin{manualproposition}{1}
For $K=L=T=n^2$ the optimal chain length is given by $r^{*} = n$, with 
\begin{align} \label{eq:nsquare}
N = \left\{\begin{matrix}
3 & \text{if} \quad n=1 \\ 
n^4+2n^3+2n^2-n-2 & \text{if} \quad n \geq 2
\end{matrix}\right. .
\end{align}
\end{manualproposition}

\begin{proof}
If $n=1$, then $1 \le r \le \min\{K,T\}$ implies $r^{*}=1$ and we compute $N(1)=3$. 

We now consider the case for $n \geq 2$.
Using the notation of Theorem~1, we compute $\varphi = -n^4+3n^2-1 \le -5$, $\mu = n^2-1$, and $x = 0$, so the formula for $N(r)$ simplifies to
\begin{align*}
N(r) =
n^4+4n^2-2+(n^2-2)r
+ \lfloor (n^2-1)/r \rfloor (n^2-r)+\mathbf{1}_{r=n^2}.
\end{align*}
Note, that $N(n) = n^4+2n^3+2n^2-n-2$ and $N(n^2) = 2n^4+2n^2-1$. Thus, $N(n) < N(n^2)$ and we can assume $r \le n^2-1$, obtaining $N(r) = n^4+4n^2-2+(n^2-2)r + \lfloor (n^2-1)/r \rfloor (n^2-r)$.
It follows from $\lfloor x \rfloor > x-1$ that,
\begin{align} \label{eq:lemma 1 proof 1}
N(r) > n^4+4n^2-2+(n^2-2)r + (n^2-1-r)(n^2-r)/r.
\end{align}
The right hand side of \eqref{eq:lemma 1 proof 1} is greater or equal than $N(n)$ if and only if $0 \le (n+1-r)(n^2-(n+1)r)$, i.e. if $r \le \lfloor \frac{n^2}{n+1} \rfloor = n-1$ or $n+1 \le r$, concluding the proof.
\end{proof}

\begin{manuallemma}{2}
In the setting of Theorem~1, if $r \le \varphi$, then $r^{*}=\min\{K,T,\varphi\}$ minimizes $N(r)$.
\end{manuallemma}

\begin{proof}
If $r \le \varphi$, then the formula of $N$ in Theorem~1 simplifies to
\begin{align}
\label{eq_N_rlephi}
\begin{split}
N(r) =
&KL+2K+3T-2 -\max\{K,\varphi\} + \lfloor (T-1)/r \rfloor \min\{T-1,K-r\}.
\end{split}
\end{align}
So, regarded as function in $r \in \{1,\ldots,\min\{K,T\}\}$, both $r$-dependent factors are monotonically decreasing and attain their minimum for the largest $r$, i.e. $r^{*}=\min\{K,T,\varphi\}$.
\end{proof}

\subsection*{Proofs for Section VII}

\begin{manuallemma}{4}
Let $(\alpha,\beta) \in \A(K,L,T)$ be a degree table and $p=\pi \circ \alpha$, $q=\tau \circ \beta$ the corresponding sorted vectors for unique sorting permutations $\pi$ and $\tau$.
Moreover, we abbreviate $a=\min\{\SET(\alpha)\}$, $A=\max\{\SET(\alpha)\}$, $b=\min\{\SET(\beta)\}$, and $B=\max\{\SET(\beta)\}$.

\begin{enumerate}
\item For an integer $i$ with $1 \le i < K+T$ and $p_i+B < (p_{i+1}-1)+b$, there is a degree table $(\alpha',\beta) \in \A(K,L,T)$ with $\N(\alpha,\beta) = \N(\alpha',\beta)$ where
$
\alpha'_j
=
\begin{cases}
\alpha_j & \text{if } j \le \pi(i) \\
\alpha_j-1 & \text{else}
\end{cases}
$.
\item For an integer $i$ with $1 \le i < L+T$ and $q_i+A < (q_{i+1}-1)+a$, there is a degree table $(\alpha,\beta') \in \A(K,L,T)$ with $\N(\alpha,\beta) = \N(\alpha,\beta')$ where
$
\beta'_j
=
\begin{cases}
\beta_j & \text{if } j \le \tau(i) \\
\beta_j-1 & \text{else}
\end{cases}
$.
\end{enumerate}
\end{manuallemma}

\begin{proof}
We prove (1.) and the argumentation for (2.) is essentially the same.
Fix an integer $1 \le i < K+T$ with $p_i+B < (p_{i+1}-1)+b$ and define the new degree table via $(\alpha',\beta)$.
Since $(\alpha,\beta)$ is a degree table, all entries in $\beta$ are distinct and using $B-b < (p_{i+1}-1)-p_i$, all integers in $\alpha'$ are distinct.

Now we show that $\alpha_{m}+\beta_{n} = \alpha_{m'}+\beta_{n'}$ iff $\alpha_{m}'+\beta_{n} = \alpha_{m'}'+\beta_{n'}$ so that the contraposition $\alpha_{m}+\beta_{n} \ne \alpha_{m'}+\beta_{n'}$ iff $\alpha_{m}'+\beta_{n} \ne \alpha_{m'}'+\beta_{n'}$, which then completes the proof.

If $\alpha_{m}'=\alpha_{m}$ and $\alpha_{m'}' = \alpha_{m'}$, i.e. $m,m' \le \pi(i)$, or $\alpha_{m}'=\alpha_{m}-1$ and $\alpha_{m'}'=\alpha_{m'}-1$, i.e. $m,m' > \pi(i)$, this is obviously true.
Hence, we assume wlog. $\alpha_{m}'=\alpha_{m}$ and $\alpha_{m'}' = \alpha_{m'}-1$, i.e. $m \le \pi(i) < m'$.

Then, using $p_i+B < (p_{i+1}-1)+b$ which is $\alpha_{\pi(i)}+B < (\alpha_{\pi(i+1)}-1)+b$, we get
\begin{align*}
&\alpha_{m}+\beta_{n}
=
\alpha_{m}'+\beta_{n}
\le
\alpha_{\pi(i)}+B
\\&<
(\alpha_{\pi(i+1)}-1)+b
\le
\alpha_{m'}'+\beta_{n'}
=
\alpha_{m'}-1+\beta_{n'},
\end{align*}
i.e. $\alpha_{m}+\beta_{n} < \alpha_{m'}+\beta_{n'}$ and $\alpha_{m}'+\beta_{n} < \alpha_{m'}'+\beta_{n'}$.
This completes the proof.
\end{proof}

\begin{manuallemma}{5}
Let $(\alpha,\beta)$ be a degree table.
Then at most one of the two operations (1.) and (2.) in Lemma~4 are applicable for $(\alpha,\beta)$.
Two successive operations of (1.) respectively (2.) using the indices $i,i'$ applied to $(\alpha,\beta)$ yield the same degree table as if the indices $i',i$ were used.
\end{manuallemma}

\begin{proof}
We use the notation of Lemma~4. Assume that (1.) and (2.) are both applicable, then there are $i$ and $j$ such that $p_i+B < (p_{i+1}-1)+b$ and $q_i+A < (q_{i+1}-1)+a$, but $B-b+1 < p_{i+1}-p_i \le A-a$ and $A-a < q_{i+1}-q_i-1 \le B-b-1$ is a contradiction.

Assume two successive operations of (1.) with $i$ and $i'$.
The case of (2.) is similar.

After performing the both operations in any order, the resulting degree table is $(\alpha',\beta)$ with 
\[
\alpha'_j
=
\begin{cases}
\alpha_j & \text{if } j \le \min\{\pi(i),\pi(i')\} \\
\alpha_j-1 & \text{if } \min\{\pi(i),\pi(i')\} < j \le \max\{\pi(i),\pi(i')\} \\
\alpha_j-2 & \text{if } \max\{\pi(i),\pi(i')\} < j
\end{cases},
\]
showing that it is order independent.
\end{proof}

\begin{manuallemma}{7}
The relation defined in Definition~6 is an equivalence relation and we have $\N(\ap,\as,\bp,\bs)=\N(\ap',\as',\bp',\bs')$ if $(\ap,\as,\bp,\bs) \sim (\ap',\as',\bp',\bs')$.
\end{manuallemma}

\begin{proof}
Denote $A=(\ap,\as,\bp,\bs)$, $A'=(\ap',\as',\bp',\bs')$, and $A''=(\ap'',\as'',\bp'',\bs'')$.
The relation is reflexive ($A=1 \cdot (id \circ A + (0,0))$), symmetrical ($A = r \cdot (\sigma \circ A' + (s,t)) \Leftrightarrow A' = r^{-1} \cdot (\sigma^{-1} \circ A + (-s,-t))$, here, $\sigma^{-1}$ means component wise inversion), and transitive: if $A = r \cdot (\sigma \circ A' + (s,t))$ and $A' = r' \cdot (\sigma' \circ A'' + (s',t'))$, then $A = rr' \cdot (\sigma\sigma' \circ A'' + (s'+s/r',t'+t/r'))$ ($\sigma\sigma'$ is again component wise).

Assume, we have $(\alpha,\beta) \sim (\alpha',\beta') = r \cdot (\sigma \circ (\alpha,\beta) + (s,t))$.
Then $\alpha_i+\beta_j \in \SET(\alpha)+\SET(\beta)$ iff $r(\alpha_{(\kappa,\lambda)((\kappa,\lambda)^{-1}(i))}+s)+r(\beta_{(\mu,\nu)((\mu,\nu)^{-1}(j))}+t) \in \SET(r \cdot ((\kappa,\lambda)\circ\alpha+s\mathds{1}_{K+T}))+\SET(r \cdot ((\mu,\nu)\circ\beta+t\mathds{1}_{L+T}))$ and 
\begin{align*}
&\alpha_i+\beta_j = \alpha_{i'}+\beta_{j'}
\\
&\Leftrightarrow
\\&r(\alpha_{(\kappa,\lambda)((\kappa,\lambda)^{-1}(i))}+s)+r(\beta_{(\mu,\nu)((\mu,\nu)^{-1}(j))}+t)
\\&= r(\alpha_{(\kappa,\lambda)((\kappa,\lambda)^{-1}(i'))}+s)+r(\beta_{(\mu,\nu)((\mu,\nu)^{-1}j'))}+t).
\end{align*}
Hence, $\N(\alpha,\beta) = \N(\alpha',\beta')$.
\end{proof}

\subsection*{Proofs for Section VIII}

\begin{manuallemma}{8}[{[23],[24]}]

Let $A$ and $B$ be sets of integers.
Then $|A|+|B|-1 \le |A+B|$ and if $2 \le |A|, |B|$, then equality holds if and only if $A$ and $B$ are arithmetic progressions with the same common difference, i.e. $A=a+d[m]$ and $B=b+d[n]$ for $a,b,d,m,n \in \mathbb{Z}$.

\end{manuallemma}

\begin{proof}
Let $A=\{a_0,\ldots,a_m\}$ and $B=\{b_0,\ldots,b_n\}$ with $a_i < a_{i+1}$ and $b_j < b_{j+1}$.
Consider $M=\{a_0+b_0,a_0+b_1,\ldots,a_0+b_n,a_1+b_n,\ldots,a_m+b_n\}$ of size $|A|+|B|-1$.
$M \subseteq A+B$ proves the bound.
Assume now that $|A|+|B|-1 = |A+B|$.
Consider additionally $N=\{a_0+b_0,a_1+b_0,a_1+b_1,\ldots,a_1+b_{n-1},a_2+b_{n-1},\ldots,a_m+b_{n-1},a_m+b_n\}$ of size $|A|+|B|-1$.
Then, $A+B=M=N$ showing $a_0+b_i=a_1+b_{i-1}$ for $1 \le i \le n$ and $a_{j-1}+b_n=a_j+b_{n-1}$ for $1 \le j \le m$, i.e. $d:=a_1-a_0=b_i-b_{i-1}=b_n-b_{n-1}=a_j-a_{j-1}$, completing the proof.
\end{proof}

%
%
%

\begin{manuallemma}{9}
Let $K$, $L$, and $T$ be positive integers with $L \le K$ and $(\ap,\as,\bp,\bs)$ be a degree table in $\A(K,L,T)$ with $\N(\ap,\as,\bp,\bs) = KL+K+2T-1$.
Then there are nonnegative integers $a$, $b$, $d$, $x$, $y$, $e$ such that $1 \le d$, $n=T-1$, $m=K+T-1$ and
\begin{enumerate}
\item $\SET(\ap|\as) = \SET(\ap) \dot\cup \SET(\as) = a+d[m]$,
\item $\SET(\bs) = b+d[n]$,
\item $\SET(\as)+\SET(\bp) \subseteq \SET(\ap|\as)+\SET(\bs) = a+b+d[m+n]$,
\item $\SET(\bp) \subseteq b+d(\{-K,\ldots,-1\} \dot\cup \{n+1,\ldots,m+n\})$,
\item $\SET(\ap)+\SET(\bp) \subseteq \SET(\ap|\as)+\SET(\bp) = a+b+d\{-K,\ldots,2m+n\}$,
\item $|\SET(\ap)+\SET(\bp)| \le |a+b+d\{-K,\ldots,2m+n\}| = 2m+n+K+1$, and
\item $KL \le 3K+3T-2$, i.e. $L \le 3+\frac{3T-2}{K}$.
\end{enumerate}
If additionally $K=L$, then
\begin{enumerate}
\setcounter{enumi}{7}
\item $\SET(\as)=x+e[n]$ and $\SET(\beta)=y+e[m]$,
\item $\SET(\alpha)+\SET(\bs) = \SET(\as)+\SET(\beta) = a+b+d[m+n] = x+y+e[m+n]$, i.e. $d=e$,
\item $\SET(\alpha)+\SET(\beta) = a+y+d[2m]$, and
\item $K=L=1$.
\end{enumerate}
\end{manuallemma}

\begin{proof}
(1.) -- (3.)

If the bound in Theorem~4 is tight and $L\le K$, the set $\SET(\as)+\SET(\bp)$ must be subset of $\SET(\alpha)+\SET(\bp)$, else it would increase the bound, and the second part of Lemma~8 shows that $\SET(\alpha)$ and $\SET(\bs)$ are arithmetic progressions with the same common difference.

(4.)

By (3.) we write $a+d\lambda \in \SET(\as)$, $\lambda \in [m]$, $x \in \SET(\bp)$, and $a+b+d\mu \in a+b+d[m+n]$.
Solving for $x$ yields $x=b+d(\mu-\lambda)$ with $-m \le \mu-\lambda \le m+n$.
Since $\SET(\bp) \cap \SET(\bs) = \emptyset$, we have $\mu-\lambda \not \in [n]$.
Since $|\SET(\as)|=T$, there is a $a+d\lambda' \in \SET(\as)$ with $\lambda' \le K$.
If $\eta \le -K-1$, then $(a+d\lambda')+(b+d\eta) \le a+b+d(K-K-1) < 0$ and $(a+d\lambda')+(b+d\eta) \not \in a+b+d[m+n]$.

(5.)
\begin{align*}
&
\SET(\ap)+\SET(\bp)
\subseteq
\SET(\ap|\as)+\SET(\bp)
\\
&
=
a+b+d([m]+(\{-K,\ldots,-1\} \dot\cup \{n+1,\ldots,m+n\}))
\\
&
=
a+b+d(\{-K,\ldots,m-1\} \cup \{n+1,\ldots,2m+n\})
\\
&
=
a+b+d\{-K,\ldots,2m+n\}
\end{align*}

(6.) follows directly from (5.) and (7.) follows directly from (6.).

(8.)

If $K=L$, then the argumentation for bullet point (1.) can be applied for $K$ and $L$ with interchanged roles.

(9.) follows directly and $d=e$ since due to $m+n=K+2T-2 \ge 1$ both sets contain at least two elements.

(11.) follows by using the size $|\SET(\alpha)+\SET(\beta)|$ from (10.) (i.e. $2K+2T-1$) and because the bound is tight $\N(\ap,\as,\bp,\bs) = KL+K+2T-1$, implying $0=K(K-1)$.
\end{proof}

\begin{manuallemma}{10}
Let $(\ap,\as,\bp,\bs) \in \A(K,L,T)$. Then, for all $1 \le i \le K+T$ and all $1 \le j \le L+T$, it holds that $|(\alpha_i+\SET(\beta_p)) \cap (\SET(\alpha_p)+\beta_j)| \le 1$.
\end{manuallemma}

\begin{proof}
We abbreviate $M_{i,j}=(\alpha_i+\SET(\beta_p)) \cap (\SET(\alpha_p)+\beta_j)$.

If either $i \le K$ or $j \le L$, then $M_{i,j} = \emptyset$ due to Property~(3) of Definition~3.

If $i \le K$ and $j \le L$, then $M_{i,j} = \{\alpha_i+\beta_j\}$ due to Property~(3) of Definition~3.

Else, i.e. $K+1 \le i$ and $L+1 \le j$, we assume that $|M_{i,j}| \ge 2$.
Let $x=\alpha_{i}+\beta_{m}=\alpha_{n}+\beta_{j}$ and $y=\alpha_{i}+\beta_{o}=\alpha_{p}+\beta_{j}$ be two distinct elements in the intersection for $(i,m) \ne (n,j)$ and $(i,o) \ne (p,j)$ and $m,o \le L$ and $n,p \le K$.
Then we have $\alpha_{i}-\beta_{j}=\alpha_{n}-\beta_{m}=\alpha_{p}-\beta_{o}$ and $\alpha_{n}+\beta_{o}=\alpha_{p}+\beta_{m}$.
Using Property~(3) of Definition~3, this implies $(n,o)=(p,m)$, which is a contradiction to $x \ne y$.
\end{proof}

\begin{manualtheorem}{5}
Let $(\alpha,\beta) \in \A(K,L,T)$. Then, $KL+K+L+2T-1-T\min\{K,L,T\} \le \N(\alpha,\beta)$. Thus, for $\ap=(0,1,\ldots,K-1)$, $\as=(KL)$, $\bp=(0,K,\ldots,K(L-1))$, and $\bs=(KL)$, the degree table $(\ap,\as,\bp,\bs) \in \A(K,L,1)$ is of minimum size.
\end{manualtheorem}

\begin{proof}
Let $(\ap,\as,\bp,\bs) \in \A(K,L,T)$ of minimum size and $\alpha=\ap|\as$, $\beta=\bp|\bs$.
For the first part, we use Lemma~8 in
\begin{align*}
&
|(\SET(\as)+\SET(\bp)) \cup (\SET(\alpha)+\SET(\bs))|
\\
\ge
&
|(\alpha_{K+1}+\SET(\bp)) \cup (\SET(\alpha)+\SET(\bs))|
\\
=
&
|\alpha_{K+1}+\SET(\bp)| + |\SET(\alpha)+\SET(\bs)|
\\&- |(\alpha_{K+1}+\SET(\bp)) \cap (\SET(\alpha)+\SET(\bs))|
\\
\ge
&
L + (K+2T-1) - |(\alpha_{K+1}+\SET(\bp)) \cap (\SET(\alpha)+\SET(\bs))|
\end{align*}
and rewrite
\begin{align*}
&
(\alpha_{K+1}+\SET(\bp)) \cap (\SET(\alpha)+\SET(\bs))
\\
=
&
(\alpha_{K+1}+\SET(\bp)) \cap \left(\bigcup_{i=L+1}^{L+T} (\SET(\alpha)+\beta_i) \right)
\\
=
&
(\alpha_{K+1}+\SET(\bp)) \cap \left(\bigcup_{i=L+1}^{L+T} ((\SET(\ap)+\beta_i) \dot\cup (\SET(\as)+\beta_i)) \right)
\\
=
&
\bigcup_{i=L+1}^{L+T} (((\alpha_{K+1}+\SET(\bp)) \cap (\SET(\ap)+\beta_i))
\\&\dot\cup ((\alpha_{K+1}+\SET(\bp)) \cap (\SET(\as)+\beta_i)))
\end{align*}
so that
\begin{align*}
&
\Bigg| \bigcup_{i=L+1}^{L+T} (((\alpha_{K+1}+\SET(\bp)) \cap (\SET(\ap)+\beta_i))
\\&\dot\cup ((\alpha_{K+1}+\SET(\bp)) \cap (\SET(\as)+\beta_i))) \Bigg|
\\
\le
&
\sum_{i=L+1}^{L+T} (|(\alpha_{K+1}+\SET(\bp)) \cap (\SET(\ap)+\beta_i)|
\\&+ |(\alpha_{K+1}+\SET(\bp)) \cap (\SET(\as)+\beta_i)|).
\end{align*}
Lemma~10 shows now
\begin{align*}
|(\alpha_{K+1}+\SET(\bp)) \cap (\SET(\ap)+\beta_i)| = z_i \in \{0,1\}
\end{align*}
and we have
\begin{align*}
|(\alpha_{K+1}+\SET(\bp)) \cap (\SET(\as)+\beta_i)| \le T-1
\end{align*}
since $|\SET(\as)+\beta_i|=T$ and $\alpha_{K+1}+\beta_i \not\in \alpha_{K+1}+\SET(\bp)$ by Property~(2) of Definition~3 and
\begin{align*}
|(\alpha_{K+1}+\SET(\bp)) \cap (\SET(\as)+\beta_i)| \le L-z_i
\end{align*}
since $|\alpha_{K+1}+\SET(\bp)|=L$ and if $z_i=1$, then there are $1 \le x \le L$ and $1 \le y \le K$ such that $\alpha_{K+1}+\beta_x = \alpha_y+\beta_i$, but $\alpha_{K+1}+\beta_x \in \alpha_{K+1}+\SET(\bp)$ and $\alpha_y+\beta_i \not\in \SET(\as)+\beta_i$ by Property~(1) of Definition~3.

Hence, we continue
\begin{align*}
&
\sum_{i=L+1}^{L+T} (|(\alpha_{K+1}+\SET(\bp)) \cap (\SET(\ap)+\beta_i)|
\\&+ |(\alpha_{K+1}+\SET(\bp)) \cap (\SET(\as)+\beta_i)|)
\\
\le
&
\sum_{i=L+1}^{L+T} (z_i + \min\{T-1,L-z_i\})
\\
=
&
\sum_{i=L+1}^{L+T} \min\{T-1+z_i,L\}
\\
\le
&
\sum_{i=L+1}^{L+T} \min\{T,L\}
=
T\min\{T,L\}.
\end{align*}

So a similar argument as in the proof of Theorem~4 shows
\begin{align*}
\N(K,L,T)
&
=
\N(\ap,\as,\bp,\bs)
\\&=
KL+|(\SET(\as)+\SET(\bp)) \cup (\SET(\alpha)+\SET(\bs))|
\\
&
\ge
KL + L + (K+2T-1) - (T\min\{T,L\}).
\end{align*}

Exchanging $K$ and $L$ and virtually the same argumentation shows
\begin{align*}
\N(K,L,T)
\ge
KL + K + (L+2T-1) - (T\min\{T,K\}).
\end{align*}

So that
\begin{align*}
\N(K,L,T)
&
\ge
KL+K+L+2T-1
\\&+\max\{-T\min\{T,L\},-T\min\{T,K\}\}
\\
&
=
KL+K+L+2T-1
\\&-T\min\{\min\{T,L\},\min\{T,K\}\}
\\
&
=
KL+K+L+2T-1-T\min\{K,L,T\}.
\end{align*}

Last, fix $\ap=(0,1,\ldots,K-1)$, $\as=(KL)$, $\bp=(0,K,\ldots,K(L-1))$, and $\bs=(KL)$.
Then, we have $\SET(\alpha)+\SET(\beta) = [KL+K-1] \dot\cup (KL+K+K[L-1])$ of size $KL+K+L$.
\end{proof}

\bibliographystyle{IEEEtran}
\bibliography{ref.bib}

\begin{thebibliography}{10}
\providecommand{\url}[1]{#1}
\csname url@samestyle\endcsname
\providecommand{\newblock}{\relax}
\providecommand{\bibinfo}[2]{#2}
\providecommand{\BIBentrySTDinterwordspacing}{\spaceskip=0pt\relax}
\providecommand{\BIBentryALTinterwordstretchfactor}{4}
\providecommand{\BIBentryALTinterwordspacing}{\spaceskip=\fontdimen2\font plus
\BIBentryALTinterwordstretchfactor\fontdimen3\font minus
  \fontdimen4\font\relax}
\providecommand{\BIBforeignlanguage}[2]{{%
\expandafter\ifx\csname l@#1\endcsname\relax
\typeout{** WARNING: IEEEtran.bst: No hyphenation pattern has been}%
\typeout{** loaded for the language `#1'. Using the pattern for}%
\typeout{** the default language instead.}%
\else
\language=\csname l@#1\endcsname
\fi
#2}}
\providecommand{\BIBdecl}{\relax}
\BIBdecl

\bibitem{ravi2018mmult}
W.-T. Chang and R.~Tandon, ``On the capacity of secure distributed matrix
  multiplication,'' in \emph{2018 IEEE Global Communications Conference
  (GLOBECOM)}.\hskip 1em plus 0.5em minus 0.4em\relax IEEE, 2018, pp. 1--6.

\bibitem{Kakar2019OnTC}
J.~Kakar, S.~Ebadifar, and A.~Sezgin, ``On the capacity and
  straggler-robustness of distributed secure matrix multiplication,''
  \emph{IEEE Access}, vol.~7, pp. 45\,783--45\,799, 2019.

\bibitem{koreans}
H.~Yang and J.~Lee, ``Secure distributed computing with straggling servers
  using polynomial codes,'' \emph{IEEE Transactions on Information Forensics
  and Security}, vol.~14, no.~1, pp. 141--150, 2018.

\bibitem{d2019gasp}
R.~G.~L. D’Oliveira, S.~El~Rouayheb, and D.~Karpuk, ``Gasp codes for secure
  distributed matrix multiplication,'' in \emph{2019 IEEE International
  Symposium on Information Theory (ISIT)}.\hskip 1em plus 0.5em minus
  0.4em\relax IEEE, 2019, pp. 1107--1111.

\bibitem{DOliveira2019DegreeTF}
R.~G.~L. D'Oliveira, S.~{El Rouayheb}, D.~Heinlein, and D.~Karpuk, ``Degree
  tables for secure distributed matrix multiplication,'' in \emph{2019 IEEE
  Information Theory Workshop (ITW)}, 2019.

\bibitem{Aliasgari2019DistributedAP}
M.~Aliasgari, O.~Simeone, and J.~Kliewer, ``Distributed and private coded
  matrix computation with flexible communication load,'' \emph{2019 IEEE
  International Symposium on Information Theory (ISIT)}, pp. 1092--1096, 2019.

\bibitem{aliasgari2020private}
------, ``Private and secure distributed matrix multiplication with flexible
  communication load,'' \emph{IEEE Transactions on Information Forensics and
  Security}, vol.~15, pp. 2722--2734, 2020.

\bibitem{Kakar2019UplinkDownlinkTI}
J.~Kakar, A.~Khristoforov, S.~Ebadifar, and A.~Sezgin, ``Uplink-downlink
  tradeoff in secure distributed matrix multiplication,'' \emph{ArXiv}, vol.
  abs/1910.13849, 2019.

\bibitem{Yu2020EntangledPC}
Q.~Yu and A.~S. Avestimehr, ``Entangled polynomial codes for secure, private,
  and batch distributed matrix multiplication: Breaking the "cubic" barrier,''
  \emph{ArXiv}, vol. abs/2001.05101, 2020.

\bibitem{mital2020secure}
N.~Mital, C.~Ling, and D.~Gunduz, ``Secure distributed matrix computation with
  discrete fourier transform,'' \emph{arXiv preprint arXiv:2007.03972}, 2020.

\bibitem{bitar2021adaptive}
R.~Bitar, M.~Xhemrishi, and A.~Wachter-Zeh, ``Adaptive private distributed
  matrix multiplication,'' \emph{arXiv preprint arXiv:2101.05681}, 2021.

\bibitem{hasircioglu2021speeding}
B.~Hasircioglu, J.~Gomez-Vilardebo, and D.~Gunduz, ``Speeding up private
  distributed matrix multiplication via bivariate polynomial codes,''
  \emph{arXiv preprint arXiv:2102.08304}, 2021.

\bibitem{Chang2019OnTU}
W.-T. Chang and R.~Tandon, ``On the upload versus download cost for secure and
  private matrix multiplication,'' \emph{ArXiv}, vol. abs/1906.10684, 2019.

\bibitem{9004505}
R.~G.~L. {D’Oliveira}, S.~{El Rouayheb}, and D.~{Karpuk}, ``Gasp codes for
  secure distributed matrix multiplication,'' \emph{IEEE Transactions on
  Information Theory}, pp. 1--1, 2020.

\bibitem{jia2019capacity}
Z.~Jia and S.~A. Jafar, ``On the capacity of secure distributed matrix
  multiplication,'' 2019.

\bibitem{doliveira2020notes}
R.~G.~L. D'Oliveira, S.~E. Rouayheb, D.~Heinlein, and D.~Karpuk, ``Notes on
  communication and computation in secure distributed matrix multiplication,''
  2020.

\bibitem{sham79}
\BIBentryALTinterwordspacing
A.~Shamir, ``How to share a secret,'' \emph{Comm. ACM}, vol.~22, no.~11, pp.
  612--613, 1979. [Online]. Available:
  \url{https://doi.org/10.1145/359168.359176}
\BIBentrySTDinterwordspacing

\bibitem{bita17}
R.~Bitar, P.~Parag, and S.~{El Rouayheb}, ``Minimizing latency for secure
  distributed computing,'' in \emph{2017 IEEE International Symposium on
  Information Theory (ISIT)}.\hskip 1em plus 0.5em minus 0.4em\relax IEEE,
  2017, pp. 2900--2904.

\bibitem{polycodes1}
Q.~Yu, M.~Maddah-Ali, and A.~S. Avestimehr, ``Polynomial codes: an optimal
  design for high-dimensional coded matrix multiplication,'' in \emph{Advances
  in Neural Information Processing Systems}, 2017, pp. 4403--4413.

\bibitem{polycodes2}
Q.~Yu, M.~A. Maddah-Ali, and A.~S. Avestimehr, ``Straggler mitigation in
  distributed matrix multiplication: Fundamental limits and optimal coding,''
  in \emph{2018 IEEE International Symposium on Information Theory
  (ISIT)}.\hskip 1em plus 0.5em minus 0.4em\relax IEEE, 2018, pp. 2022--2026.

\bibitem{pulkit}
S.~Dutta, M.~Fahim, F.~Haddadpour, H.~Jeong, V.~Cadambe, and P.~Grover, ``On
  the optimal recovery threshold of coded matrix multiplication,'' \emph{IEEE
  Transactions on Information Theory}, 2019.

\bibitem{pulkit2}
U.~Sheth, S.~Dutta, M.~Chaudhari, H.~Jeong, Y.~Yang, J.~Kohonen, T.~Roos, and
  P.~Grover, ``An application of storage-optimal matdot codes for coded matrix
  multiplication: Fast k-nearest neighbors estimation,'' in \emph{2018 IEEE
  International Conference on Big Data (Big Data)}.\hskip 1em plus 0.5em minus
  0.4em\relax IEEE, 2018, pp. 1113--1120.

\bibitem{fundamental}
S.~Li, M.~A. Maddah-Ali, Q.~Yu, and A.~S. Avestimehr, ``A fundamental tradeoff
  between computation and communication in distributed computing,'' \emph{IEEE
  Transactions on Information Theory}, vol.~64, no.~1, pp. 109--128, 2017.

\bibitem{geroldinger2009combinatorial}
A.~Geroldinger and I.~Ruzsa, \emph{Combinatorial number theory and additive
  group theory}.\hskip 1em plus 0.5em minus 0.4em\relax Springer Science \&
  Business Media, 2009.

\bibitem{tao2006additive}
T.~Tao and V.~H. Vu, \emph{Additive combinatorics}.\hskip 1em plus 0.5em minus
  0.4em\relax Cambridge University Press, 2006, vol. 105.

\bibitem{nodehi2018limited}
H.~A. Nodehi and M.~A. Maddah-Ali, ``Limited-sharing multi-party computation
  for massive matrix operations,'' in \emph{2018 IEEE International Symposium
  on Information Theory (ISIT)}.\hskip 1em plus 0.5em minus 0.4em\relax IEEE,
  2018, pp. 1231--1235.

\bibitem{akbari2021secure}
H.~Akbari-Nodehi and M.~A. Maddah-Ali, ``Secure coded multi-party computation
  for massive matrix operations,'' \emph{IEEE Transactions on Information
  Theory}, vol.~67, no.~4, pp. 2379--2398, 2021.

\bibitem{kim2019private}
M.~Kim, H.~Yang, and J.~Lee, ``Private coded matrix multiplication,''
  \emph{IEEE Transactions on Information Forensics and Security}, vol.~15, pp.
  1434--1443, 2019.

\bibitem{zhu2021improved}
J.~Zhu, Q.~Yan, and X.~Tang, ``Improved constructions for secure multi-party
  batch matrix multiplication,'' \emph{arXiv preprint arXiv:2103.09592}, 2021.

\bibitem{MR1379398}
\BIBentryALTinterwordspacing
Y.~Stanchescu, ``On addition of two distinct sets of integers,'' \emph{Acta
  Arith.}, vol.~75, no.~2, pp. 191--194, 1996. [Online]. Available:
  \url{https://doi.org/10.4064/aa-75-2-191-194}
\BIBentrySTDinterwordspacing

\end{thebibliography}

\end{document}